\def\tikzcdset{\pgfqkeys{/tikz/commutative diagrams}}
        \def\tikz@tonodes{coordinate[pos=0,commutative diagrams/@starttransform/.try](tikzcd@nodea) %
                          coordinate[pos=1,commutative diagrams/@endtransform/.try](tikzcd@nodeb)}%
      \def\tikztostart{tikzcd@nodea}%
      \def\tikztotarget{tikzcd@nodeb}%
      \tikzset{insert path={(tikzcd@nodea)}}},
      \pgfgetlastxy{\tikzcd@x}{\tikzcd@y}%
      \tikzset{insert path={(tikzcd@nodea)}}}},
\newtheorem{thm}{Theorem}
\newtheorem{prop}[thm]{Proposition}
\newtheorem{lem}[thm]{Lemma}
\newtheorem{cor}[thm]{Corollary}
\theoremstyle{remark}
\newtheorem{rem}{Remark}
	\newcommand{\yt}{\ytableaushort}
	\newcommand{\yd}{\ydiagram}
  \newcommand{\del}{\partial}
  \newcommand{\oo}{\infty}
  \newcommand{\ad}{\mathrm{ad}}
  \newcommand{\Ad}{\mathrm{Ad}}
  \newcommand{\Aut}{\operatorname{Aut}}
\renewcommand{\d}{\mathrm{d}}
  \DeclareMathOperator{\odim}{\overline{\dim}}
  \newcommand{\D}{\Secs_c'}
  \newcommand{\Do}{\Secs'}
  \newcommand{\eps}{\varepsilon}
  \newcommand{\id}{\mathrm{id}}
  \newcommand{\im}{\operatorname{im}}
  \newcommand{\Isom}{\mathrm{Isom}}
  \newcommand{\tr}{\operatorname{tr}}
  \newcommand{\sso}{\subset}
  \newcommand{\sse}{\subseteq}
  \newcommand{\Secs}{\Gamma}
  \newcommand{\supp}{\operatorname{supp}}
  \newcommand{\End}{\operatorname{End}}
  \newcommand{\K}{\mathcal{K}}
  \newcommand{\KY}{\mathcal{KY}}
  \newcommand{\Lie}{\mathcal{L}}
  \newcommand{\lf}{\mathit{lf}}
  \newcommand{\F}{\mathcal{F}}
  \newcommand{\g}{\mathfrak{g}}
  \newcommand{\GG}{\mathcal{G}}
  \newcommand{\GL}{\mathrm{GL}}
\renewcommand{\H}{\mathcal{H}}
  \newcommand{\R}{\mathbb{R}}
  \newcommand{\rk}{\operatorname{rk}}
\renewcommand{\S}{\mathcal{S}}
\renewcommand{\sc}{\mathit{sc}}
  \newcommand{\tf}{\mathit{tf}}
  \newcommand{\U}{\mathcal{U}}
  \newcommand{\Y}{\mathrm{Y}}
\title{The Calabi complex and Killing sheaf cohomology}
\author{Igor Khavkine\\
Department of Mathematics, University of Trento,\\
and TIFPA-INFN, Trento,\\
I--38123 Povo (TN) Italy\\
\texttt{igor.khavkine@unitn.it}}
\begin{document}
\maketitle

\begin{abstract}
It has recently been noticed that the degeneracies of the Poisson
bra\-cket of linearized gravity on constant curvature Lorentzian manifold
can be described in terms of the cohomologies of a certain complex of
differential operators. This complex was first introduced by Calabi and
its cohomology is known to be isomorphic to that of the (locally
constant) sheaf of Killing vectors. We review the structure of the
Calabi complex in a novel way, with explicit calculations based on
representation theory of $\GL(n)$, and also some tools for studying its
cohomology in terms of of locally constant sheaves. We also conjecture
how these tools would adapt to linearized gravity on other backgrounds
and to other gauge theories. The presentation includes explicit formulas
for the differential operators in the Calabi complex, arguments for its
local exactness, discussion of generalized Poincar\'e duality, methods
of computing the cohomology of locally constant sheaves, and example
calculations of Killing sheaf cohomologies of some black hole and
cosmological Lorentzian manifolds.
\end{abstract}

\tableofcontents

\section{Introduction}\label{sec:intro}
The Calabi complex is a differential complex that was introduced in by
E.~Calabi in 1961~\cite{calabi}. It has an extended pre-history, though.
One way to characterize it is as a canonical formal compatibility
complex (the \emph{second Spencer sequence}) of the Killing equation on
(pseudo-)Riemannian manifolds of constant curvature. The solutions of
the Killing equation are (possibly only locally defined) infinitesimal
isometries. In the special context of classical linear elasticity
theory, the same operator also maps between the displacement and strain
fields~\cite{pommaret-mech,eastwood,stven}. It is well known that for
flat spaces (zero curvature) a complete set of formal compatibility
conditions for the Killing equation is given by the linearized Riemann
curvature operator, also known as the Saint-Venant compatibility
operator in the context of
elasticity~\cite{pommaret-mech,eastwood,stven}. Subsequent compatibility
conditions are furnished by the Bianchi identities. Thus, it would also
be reasonable to refer to it as the Killing-Riemann-Bianchi complex.

Calabi's interest in the eponymous complex stemmed from the isomorphism
between the cohomology of its global sections and the cohomology of the
sheaf of Killing vectors. Given a fine resolution of a sheaf, like one
provided by a locally exact sequence of differential operators on
sections of vector bundles, the general machinery of homological algebra
implies that the sheaf cohomology is in fact isomorphic to the
cohomology of this global sections of its resolution, the resolution of
the sheaf of locally constant functions by the de~Rham complex of
differential forms on a manifold being the canonical example. The bulk
of Calabi's original article was in fact spent proving that the
hypotheses needed for applying this general result actually hold,
thus providing a way to represent the cohomology of the Killing sheaf.
It is the latter object that was of intrinsic interest, as it was in
subsequent works by others~\cite{bbl,weil,berger}, motivated by the well
known interpretations of its lowest cohomology groups: in degree-$0$ as
the Lie algebra of global isometries and in degree-$1$ as the space of
non-trivial infinitesimal deformations of the metric under the constant
curvature restriction. Later, the Calabi complex was also seen as a
non-trivial example of a formally exact compatibility
complex~\cite{goldschmidt-calabi,gasqui-goldschmidt-fr,gasqui-goldschmidt,pommaret,eastwood}
constructed for the Killing operator by the methods of the formal theory
of partial differential equations developed by the school of
D.~C.~Spencer~\cite{spencer-deform1,spencer-deform2,spencer,quillen,goldschmidt-lin}.

More recently, the Calabi complex resurfaced in mathematical physics, in
the context of the (pre-)symplectic and Poisson structure of
relativistic classical field theories. In~\cite{kh-big,kh-peierls}, the
author has shown that the degeneracy subspaces of the naturally defined
pre-symplectic $2$-form and Poisson bivector on the infinite dimensional
phase space of relativistic classical field theories with possible
constraints and gauge invariance are controlled by the cohomology of
some differential complexes. In the case of Maxwell-like
theories~\cite[Sec.4.2]{kh-peierls}, this role is played by the de~Rham
complex, while in the case of linearized
gravity~\cite[Sec.4.4]{kh-peierls}, this role is played by the formal
compatibility complex of the Killing operator. In other words, for
linearization backgrounds of constant curvature (important examples
include Minkowski and de~Sitter spaces, as well as quotients thereof),
this is precisely the Calabi complex. When the linearization background
is merely locally symmetric, rather than of constant curvature, the
right complex to use is a slightly different one that was constructed by
Gasqui and Goldschmidt~\cite{gasqui-goldschmidt-fr,gasqui-goldschmidt}.
However, a discussion of the latter is beyond the scope of this work.
The construction of similar complexes adapted to other background
geometries appears to be an open problem. The degeneracy subspace of the
Poisson bivector of a classical field theory is of importance because it
translates almost directly into violations of a (strict) notion of
locality of the corresponding quantum field theory, a subject that has
recently been under intense
investigation~\cite{dl,sdh,bds,fewster-hunt,fl,benini,hack-lingrav,bss}.

The goal of this paper is to exploit the connection between the Calabi
complex and Killing sheaf cohomologies, in a direction opposite the
original one of Calabi, for the purpose of obtaining results relevant to
the above mentioned applications in mathematical physics. More
precisely, we consider the computation of certain sheaf cohomologies
much simpler than constructing quotient spaces of kernels of complicated
differential operators. Thus, the ability to equate the Calabi
cohomology groups, which for us are of primary interest, with Killing
sheaf cohomology groups is a significant technical simplification. Along
the way, we collect some relevant facts about the Calabi complex that
are either difficult or impossible to find in the existing literature,
along with other little known tools from the theory of differential
complexes~\cite{tarkhanov} needed to prove the desired equivalence and
to introduce cohomologies with compact supports. It is our hope that
this treatment of the Calabi complex could serve as a model for the
treatment of other differential complexes that are of importance in
mathematical physics.

In Section~\ref{sec:calabi}, we discuss the explicit form of the Calabi
complex on any constant curvature pseudo-Riemannian manifold. The tensor
bundles and differential operators between them are defined using
notation and identities from the representation theory of the general
linear group, which are reviewed in Appendix~\ref{app:yt-bkg}. The
differential cochain homotopy operators defined in
Section~\ref{sec:calabi-ops} and Appendix~\ref{app:yt-calabi}, and the
relation of the formal adjoint Calabi complex to the Killing-Yano
operator presented in Section~\ref{sec:calabi-adj} are likely new. Then,
Section~\ref{sec:sheaves} recalls some general notions from sheaf
cohomology, with emphasis on locally constant sheaves. It also covers
the relation between the Calabi cohomology, with various supports, and
the cohomologies of the sheaf of Killing vectors and the sheaf of
Killing-Yano tensors. In Section~\ref{sec:killing} we discuss several
methods for effectively computing the cohomologies of the Killing sheaf,
also outside the constant curvature context. An important application of
the above results is described in Section~\ref{sec:appl}, which uses the
Calabi cohomology to determine the degeneracy subspaces of presymplectic
and Poisson structures of linearized gravity on constant curvature
backgrounds. This application, and its generalizations, constitutes the
main motivation for this work. Finally, Section~\ref{sec:discuss}
concludes with a discussion of the presented results and of how they
could be generalized to other differential complexes of interest in the
mathematical theory of classical and quantum gauge field theories in
physics.

It should be emphasized that the Killing sheaf cohomology can be
identified with the cohomology of the Calabi complex only on
pseudo-Riemannian spaces of constant curvature, where the latter complex
is actually defined. The Killing sheaf itself has a wider domain of
definition. In terms of applications to linearized gravity, the
differential complexes that are to replace the Calabi complex on other
background geometries are still expected to have isomorphic cohomology
to that of the Killing sheaf. So, from that perspective, the Calabi
complex is a particular case study and the Killing sheaf is an object of
more permanent value.

\section{The Calabi complex}\label{sec:calabi}
Below, in Sections~\ref{sec:calabi-tens} and~\ref{sec:calabi-ops}, we
shall explicitly describe the Calabi complex as a complex of
differential operators between tensor bundles on a pseudo-Riemannian
manifold $(M,g)$. Further more, we will explicitly list a corresponding
sequence of differential operators that constitute a cochain homotopy
from the Calabi complex to itself. The cochain maps induced by the
homotopy operators will have the same principal symbol as the tensor
Laplacian $\nabla_a \nabla^a$ induced by the Levi-Civita connection of
the metric tensor $g$, though will differ from it by lower order terms.
This geometric structure is very similar to that of the Hodge theory of
the de~Rham complex on a Riemannian manifold. This structure is used in
the later Section~\ref{sec:sh-res} to show the complex's local
exactness.  Finally, in Section~\ref{sec:calabi-adj}, we will describe
the formal adjoint Calabi complex, with the formal adjoint cochain maps
and homotopies playing roles analogous to the original ones. It turns
out that, just as the Calabi complex resolves the sheaf of Killing
vectors on $(M,g)$, its formal adjoint complex resolves the sheaf of
rank-$(n-2)$ Killing-Yano tensors.

A non-negligible amount of
work~\cite{bbl,gasqui-goldschmidt-fr,goldschmidt-calabi,gasqui-goldschmidt,eastwood,pommaret},
though certainly not a large one, has been done on this differential
complex since the original work~\cite{calabi} of Calabi in 1961. Its
original presentation was in terms of Cartan's moving frame formalism
and much of the subsequent work did not put a strong emphasis on
explicit formulas. Thus, it is a little difficult to find its
presentation in terms of standard covariant derivatives on
tensor bundles in the existing literature. We give such formulas below,
together with a complete sequence of cochain homotopy operators from the
complex to itself and their corresponding cochain maps. These formulas
are apparently new, as their role was played by a more generic, but
somewhat less natural, construction applicable to general elliptic
complexes in~\cite{calabi,bbl,gasqui-goldschmidt-fr,goldschmidt-calabi}.
The advantage of our version is the connection of the homotopy and
cochain maps with the equations of linearized gravity and coincidence,
in low degrees, with other well known related operators, which include
the Killing, linearized Riemann, Bianchi, de~Donder and Ricci trace
operators. One could also argue that our resulting homotopy and cochain
maps are simpler, because they never exceed second differential order
(in contrast to fourth differential order). Further more, we find that
the tensor bundles that constitute the nodes of the complex are best
described as having fibers that carry irreducible representations of
$\GL(n)$, where $n$ is the dimension of the base manifold; moreover, the
principal symbols of the differential operators in the complex are
$\GL(n)$ equivariant maps.  Hence they are independent of the background
metric, which is no longer true for subleading terms. This observation
appears to have escaped the attention of earlier works, thus requiring
some seemingly ad-hoc constructions~\cite{calabi}. A notable exception
is Eastwood~\cite{eastwood}, who also identified the principal symbol
complex as an instance of the general notion of BGG
resolutions~\cite{bgg} in representation theory. Taking advantage of
this connection with representation theory, we explicitly describe the
tensor bundles of the complex and the equivariant principal symbol maps
between them in terms of Young diagrams.

\subsection{Tensor bundles and Young symmetrizers}\label{sec:calabi-tens}
As was mentioned in the Introduction, it is convenient to describe
various tensor bundles involved in the Calabi complex, as well as
various maps between then, in terms of irreducible representations
(\emph{irreps}) of group $\GL(n)$, where $n = \dim M$ is the dimension
of the base manifold $M$. Irreps of $\GL(n)$ are concisely presented
using Young diagrams and corresponding Young tensor symmetrizers. An
excellent reference on this topic is the book~\cite{fulton}, where we
refer the reader for complete details. For an uninitiated reader, we
have briefly summarized the relevant concepts and formulas in
Appendix~\ref{app:yt-bkg}. For the expert reader, it is recommended to
skim the same appendix for the particulars of our notation.

\begin{table}
\caption{%
The table below lists the tensor bundles of the Calabi complex, the
corresponding irreducible $\GL(n)$ representations (labeled by Young
diagrams), and their fiber ranks, for $\dim M = n$. The rank is given by
the famous \emph{hook formula}, which is discussed in
Appendix~\ref{app:yt-bkg}.}
\label{tbl:bundles}
\begin{center}
\begin{tabular}{ccc}
	bundle & Young diagram & fiber rank \\
	\hline \rule[0.5ex]{0pt}{2.5ex} \\[-3ex]
	$C_0M \cong T^*M$
		& \yd{1} & $n$
		\\ \\[-1ex]
	$C_1M \cong S^2M$
		& \yd{2} & $\frac{n(n+1)}{2}$ \\ \\[-1ex]
	$C_2M \cong RM$
		& \yd{2,2} & $\frac{n^2(n^2-1)}{12}$ \\ \\
	$C_3M \cong BM$
		& \yd{2,2,1} & $\frac{n^2(n^2-1)(n-2)}{24}$ \\ \\[-2ex]
	\hline \rule[0.5ex]{0pt}{2.5ex} \\[-3ex]
	$C_lM$
		& \yt{{1}{},{2}{},{\none[\raisebox{-.5ex}{\vdots}]},{l}}
		& $\frac{n^2(n^2-1)(n-2)\cdots(n-l+1)}{2(l+1)l(l-2)!}$ \\ \\[-2ex]
	\hline \rule[0.5ex]{0pt}{2.5ex}
\end{tabular}
\end{center}
\end{table}

Given a base manifold $M$ of dimension $n=\dim M$, we can construct
tensor bundles over $M$ whose fibers carry irreducible representations
of $\GL(n)$. Indeed, we will consider Young symmetrized sub-bundles
$\Y^d T^*M$ of the bundle of covariant $k$-tensors $(T^*)^{\otimes k}
M$, where $d$ is a Young diagram type with $k$ cells.

The Calabi complex, to be introduced in the next section, is a complex
of differential operators between certain tensor bundles over $M$. Let
us denote the corresponding sequence of vector bundles by $C_l M$. More
precisely,
\begin{equation}
	C_0 = T^*, ~~ C_1 = \Y^{(2)} T^*, ~~ C_2 = \Y^{(2,2)} T^*, ~~
	C_l = \Y^{(2,2,1^{l-2})} T^* ~ (l>2) .
\end{equation}
Note that the bundle $C_1 M$ corresponds to symmetric $2$-tensors, which
we will also denote $S^2 M$. Also, as mentioned in the preceding
section, since the bundle $C_2 M$ corresponds to $4$-tensors with the
algebraic symmetries of the Riemann tensor, we will also denote it $RM$.
And the bundle $C_3 M$, also denoted $BM$, corresponds to $5$-tensors
with symmetries of the image of the Bianchi operator applied to a
section or $RM$. The index $l$ essentially counts the number of rows in
the corresponding Young diagram. So, for $l>n$, the number of rows
exceeds the base dimension and the $C_l M$ bundles become trivial. These
tensor bundles, the corresponding Young diagrams and their fiber ranks
are illustrated in Table~\ref{tbl:bundles}.

\subsection{Differential operators}\label{sec:calabi-ops}
Below, given any $n$-dimensional pseudo-Riemannian manifold $(M,g)$ of
constant curvature $k$ (normalized so that the Ricci scalar curvature%
	\footnote{We follow~\cite{wald} for conventions regarding the
	definitions of curvature tensors and scalars.} %
is equal to $k$), we give explicit formulas for the differential
operators, constituting the Calabi complex, as well as formulas for the
differential operators that constitute a cochain homotopy from the
complex to itself and the corresponding induced cochain maps. In
Calabi's original work~\cite{calabi}, the corresponding differential
operators were constructed using an orthogonal coframe formalism. Thus,
it has been difficult to find explicit formulas for these operators in
the tensor formalism that is more prevalent in the physics literature on
relativity. The cochain homotopy operators and the induced cochain maps
coincide, in low degrees, with differential operators well known in the
relativity literature. However, their explicit form in all degrees
appears to be new. Furthermore, we explicitly demonstrate all the
identities between these differential operators that lead to their
homological algebra interpretations. We use a mixture of elementary
arguments, as well as equivariance and standard $\GL(n)$-representation
theoretic identities, unlike Calabi's original proofs~\cite{calabi} that
relied on a somewhat ad hoc algebraic constructions, and unlike the
derivation of Gasqui and
Goldschmidt~\cite{gasqui-goldschmidt-fr,gasqui-goldschmidt} that relied
on the sophisticated theory of Spencer sequences.

First, we define a number of differential operators that will be
convenient for our purposes. For homogeneous differential operators with
constant coefficients, the operator is completely determined by the
principal symbol. In general that is not the case, yet the presence of a
preferred connection on tensor bundles (the $g$-compatible Levi-Civita
connection) still allows us to specify operators by their principal
symbols: the covariant derivative is applied to a tensor $k$-times, the
derivative indices are full symmetrized, and the principal symbol is
applied to the result.

The principal symbol of a $k$-th order differential operator between two
Young symmetrized bundles $\Y T^*$ and $\Y' T^*$ is a linear map between
them that depends polynomially on a covector $p\in T^*$. If the operator
(or just its principal symbol) is $\GL(n)$-equivariant, then the
principal symbol actually corresponds to an intertwiner between the
$\Y^{(k)}\otimes \Y$ and $\Y'$ representations. Such an intertwiner is
non-zero only if $\Y'$ appears in the irrep decomposition of the tensor
product. Moreover if $\Y'$ appears with single multiplicity, the
intertwiner (and hence the principal) symbol is determined uniquely up
to a scalar factor. It is an old result due to Pieri~\cite{fulton} that,
in fact, the decomposition of the product of $\Y^{(k)}\otimes \Y$ into
irreps has only single multiplicities. Not all principal symbols of
importance to us are equivariant. The main source of the lack of
equivariance is the dependence on the metric $g$. However, if the metric
itself is also allowed to transform, the principal symbol becomes
equivariant again.  For instance, if the operator is equivariant in this
way and depends linearly on the metric in covariant form, it corresponds
to an intertwiner between the representations $\Y^{(2)}\otimes \Y^{(k)}
\otimes \Y$ and $\Y'$. Because of the presence of a double tensor
product, Pieri's rule doesn't always apply, so sometimes more
information is necessary to specify the desired intertwiner
unambiguously. As a rule, these ambiguities will be resolved by giving
explicit formulas.

Observe that the all tensor fields defined in
Section~\ref{sec:calabi-tens} correspond to Young diagrams with at most
two columns. We shall refer to the columns as \emph{left} and
\emph{right}. Let $\d_L$ and $\d_R$, the \emph{left} and \emph{right
exterior differentials}, be differential operators that increase by one
the number of boxes in the, respectively, left or right column. They
have equivariant principal symbols. We also define several operators
whose principal symbols involve the metric. Two operators of order $0$
are the \emph{trace} $\tr$ and the \emph{metric exterior product}
$(g\odot -)$, respectively, decreasing (contracting indices between the
two columns) or increasing (multiplying by $g$ and symmetrizing) by one
the number of boxes in each column. Two operators of order $1$ are
\emph{left} and \emph{right codifferentials} $\delta_L$ and $\delta_R$,
which decrease (taking a covariant divergence and resymmetrizing, if
necessary) by one the number of boxes in, respectively, the left or
right column. Finally, we have the \emph{tensor Laplacian} $\square$, a
differential operator of order $2$ that does not alter the Young
symmetry. Explicit formulas for these operators, along with proofs that
they respect the corresponding Young symmetries, are given in
Appendices~\ref{app:yt-ops}, \ref{app:yt-comp}, \ref{app:yt-calabi}.

The differential operators constituting the Calabi complex, as well as
cochain self-homotopy and the induced cochain self-maps fit into the
following diagram:
\begin{equation}\label{eq:calabi-diag}
\begin{tikzcd}
	0 \arrow{r} &
	C_0 \arrow{r}{B_1} \arrow{d}[swap]{P_0} &
	C_1 \arrow{r}{B_2} \arrow{d}[swap]{P_1} \arrow[dashed]{dl}[swap]{E_1} &
	C_2 \arrow{r}{B_3} \arrow{d}[swap]{P_2} \arrow[dashed]{dl}[swap]{E_2}&
	C_3 \cdots \arrow{r}{B_{n}} \arrow{d}[swap]{P_3} \arrow[dashed]{dl}[swap]{E_3}&
	C_{n} \arrow{r} \arrow{d}[swap]{P_n}
	\arrow[dashed]{dl}[swap]{E_{n}} &
	0 \\
	0 \arrow{r} &
	C_0 \arrow{r}{B_1} &
	C_1 \arrow{r}{B_2} &
	C_2 \arrow{r}{B_3} &
	C_3 \cdots \arrow{r}{B_{n}} &
	C_{n} \arrow{r} &
	0
\end{tikzcd} ,
\end{equation}
where for simplicity we have used the symbol $C_l$ to stand for the
space of sections $\Secs(C_lM)$. The operators $B_l$ constitute a
complex, because $B_{l+1} \circ B_l = 0$. The solid arrows in the
diagram commute, $P_{l+1} \circ B_{l+1} = B_l \circ P_l$, so that the
$P_l$ are cochain maps from the complex to itself. These cochain maps,
$P_l = E_{l+1} \circ B_{l+1} + B_l \circ E_l$, are induced by the
homotopy operators $E_n$, which appear as dashed arrows. Below, we give
explicit formulas for each of these operators, discuss these identities,
and relate them to well known differential operators from the literature
on relativity. We follow the notational conventions of
Appendices~\ref{app:yt-bkg}, \ref{app:yt-ops}. In particular, we use $:$
to separate fully antisymmetric tensor index groups belonging to
different columns of the Young diagram, which characterizes the symmetry
type of a given tensor.  However, for simplicity, we also write $g_{ab}
= g_{a:b}$ and $h_{ab} = h_{a:b}$.

\begin{align}
\notag
	B_1[v]_{a:b} &= K[v]_{a:b}
		= \nabla_a v_b + \nabla_b v_a , \\
\notag
	B_2[h]_{ab:cd} &= -2\tilde{R}[h]_{ab:cd} \\
\notag
		&= \left(
		\nabla_{(a}\nabla_{c)} h_{bd} -
		\nabla_{(b}\nabla_{c)} h_{ad} -
		\nabla_{(a}\nabla_{d)} h_{bc} +
		\nabla_{(b}\nabla_{d)} h_{ac} \right) \\
		& \qquad {}
			+ \frac{k}{n(n-1)} (g\odot h)_{ab:cd} , \\
\notag
	B_3[r]_{abc:de}
		&= \bar{B}[r]_{abc:de}
		= \d_L[r]_{abc:de}
		= 3\nabla_{[a} r_{bc]:de} \\
		&= \nabla_a r_{bc:de} + \nabla_b r_{ca:de} + \nabla_c r_{ab:de} , \\
\notag
	B_4[b]_{abcd:ef}
		&= \d_L[b]_{abcd:ef}
		= 4\nabla_{[a} b_{bcd]:ef} \\
		&= \nabla_a b_{bcd:ef} - \nabla_b b_{cda:ef}
			- \nabla_c b_{dab:ef} - \nabla_d b_{abc:ef} , \\
	B_l[b]_{a_1\cdots a_l:bc}
		&= \d_L[b]_{a_1\cdots a_l:bc}
		= l\nabla_{[a_1} b_{a_2\cdots a_l]:bc} \quad (l \ge 3) .
\end{align}

Note that we have introduced some suggestive alternative notations for
operators $B_l$ of low rank. In particular, $B_1 = K$ is the
\emph{Killing} operator. Then, $B_2 = -2\tilde{R}$ is the linearized
\emph{corrected Riemann} curvature operator,%
	\footnote{The same corrected curvature tensor can be obtained by
	linearizing the mixed form $R[g]_{ab}{}^{cd}$ of the Riemann tensor
	and then lowering all indices with the background metric. This
	linearized mixed Riemann tensor was previously used to isolate the
	gauge invariant metric perturbations on de~Sitter space
	in~\cite{roura}. That the linearized corrected Riemann tensor
	annihilates the Killing operator also follows from the classical
	analysis in~\cite{sw-pert}, which noted that the linearization of any
	tensor built only out of the metric and vanishing on the background
	spacetime is invariant under linearized diffeomorphisms.} %
where $R[g+\lambda h] - \bar{R}[g+\lambda h] = \lambda \tilde{R}[h] +
O(\lambda^2)$, with $\bar{R}[g]_{ab:cd} = \frac{k}{n(n-1)} (g_{ac}
g_{bd} - g_{bc} g_{ad})$, cf.~Equation~\eqref{eq:Rb-def} in
Appendix~\ref{app:yt-comp}. The precise relation with the linearized
Riemann tensor operator is
\begin{equation}\label{eq:linR-def}
\dot{R}[h] = -\frac{1}{2}B_2[h] + k \frac{2}{n(n-1)}(g\odot h).
\end{equation}
Note that the identity $R[g] - \bar{R}[g] = 0$ holds precisely when the
metric $g$ has constant curvature $k$.  Finally, $B_3 = \bar{B}$ is the
background \emph{Bianchi} operator, which also happens to coincide with
the left exterior differential $\d_L$. It satisfies the well known
Bianchi identity $\bar{B}[R[g]] = 0$. The operators $B_l$ for $l>3$,
which we may call \emph{higher Bianchi} operators, do not appear to have
been studied in the literature on relativity. So, as mentioned in the
Introduction, the Calabi complex might also be legitimately referred to
as the Killing-Riemann-Bianchi complex.

Now we give mostly elementary arguments for the composition identities
$B_{l+1}\circ B_l = 0$. Recall that if $v$ is a vector field (identified
with a section of $C_0M \cong T^*M$ using the metric), then the Lie
derivative of the metric along $v$ is given by the Killing operator,
$\Lie_v g = K[v]$. Now, suppose that $T[g]$ is any tensor field
covariantly constructed out of the metric and its derivatives. Consider
its linearization $T[g+\lambda h] = T[g] + \lambda \dot{T}[h] +
O(\lambda^2)$. The linearization $\dot{T}$ annihilates the Killing
operator if $T[g] = 0$~\cite{sw-pert}. This fact follows from the fact
that $T[g]$ is itself a tensor field, so that
\begin{equation}
	\Lie_v T[g] = \dot{T}[\Lie_v g] = \dot{T}\circ K[v] .
\end{equation}
Letting $T[g]_{ab:cd} = R[g]_{ab:cd} - \bar{R}[g]_{ab:cd}$ we obtain the
identity $B_2\circ B_1 = -2\tilde{R}\circ K = 0$, since $T[g] = 0$ by
reason of $g$ being of constant curvature equal to $k$. Further, note
that, since the metric is covariantly constant, $\nabla g = 0$, it is
trivial to check that $\bar{B}[\bar{R}[g]] = 0$, for any $g$. Combining
this observation with the Bianchi identity, we find that
$\bar{B}[R[g]-\bar{R}[g]] = 0$, for any $g$. Making the dependence of
$\bar{B} = \bar{B}_g$ on $g$ explicit, the linearization of this
identity gives
\begin{multline}
	\bar{B}_{g+\lambda h}[R[g+\lambda h] - \bar{R}[g+\lambda h]] \\
	= \bar{B}_g[R[g]-\bar{R}[g]]
		+ \lambda(\bar{B}[\tilde{R}[h]] + \dot{B}[h,R[g]-\bar{R}[g]])
		+ O(\lambda^2)
	= 0 ,
\end{multline}
where $\bar{B}_{g+\lambda h}[T] = \bar{B}[T] + \lambda \dot{B}[h,T] +
O(\lambda^2)$. At first order in $\lambda$, we obtain the desired
identity $B_3 \circ B_2 = -2\bar{B} \circ \tilde{R} = 0$. The remaining
identities, $B_{l+1}\circ B_l = \d_L^2 = 0$ for $l>2$, follow from
abstract representation theoretic reasons, described in more detail in
Appendix~\ref{app:yt-comp} and~\ref{app:yt-calabi}.

\begin{align}
\label{eq:calabi-homot-start}
	E_1[h]_a
		&= D[h]_a = \nabla^b h_{ab} - \frac{1}{2} \nabla_a h , \\
	E_2[r]_{a:b}
		&= \tr[r]_{a:b} = r_{ac:b}{}^c , \\
\notag
	E_3[b]_{ab:cd}
		&= \nabla^e b_{e ab:cd}
			+ \frac{1}{2}\nabla^e (b_{c ab:d e} - b_{d ab:c e}) \\
\notag
		& \quad {}
			-\frac{1}{2} (\nabla_c b_{ab e:d}{}^e - \nabla_d b_{ab e:c}{}^e) \\
\notag
		& \quad {} - \frac{1}{2}
			(\nabla_a b_{c b e:d}{}^e - \nabla_a b_{d b e:c}{}^e \\
		& \qquad {}
			+\nabla_b b_{a c e:d}{}^e - \nabla_b b_{a d e:c}{}^e) , \\
\notag
	E_4[b]_{abc:de}
		&= \nabla^f b_{f abc:de}
			+ \frac{1}{3}\nabla^f (b_{d abc:e f} - b_{e abc:d f}) \\
\notag
		& \quad {}
			+ \frac{1}{3} (\nabla_d b_{abc f:e}{}^f - \nabla_e b_{abc f:d}{}^f) \\
\notag
		& \quad {} + \frac{1}{6}
			(\nabla_a b_{d bc f:e}{}^f - \nabla_a b_{e bc f:d}{}^f \\
\notag
		& \qquad {}
			+\nabla_b b_{a d c f:e}{}^f - \nabla_b b_{a e c f:d}{}^f \\
		& \qquad {}
			+\nabla_c b_{ab d f:e}{}^f - \nabla_c b_{ab e f:d}{}^f) , \\
\notag
	E_5[b]_{abcd:ef}
		&= \nabla^i b_{i abcd:ef}
			+ \frac{1}{4}\nabla^i (b_{e abcd:f i} - b_{f abcd:e i}) \\
\notag
		& \quad {}
			- \frac{1}{4}(\nabla_e b_{abcd i:f}{}^i - \nabla_f b_{abcd i:e}{}^i) \\
		& \quad {} - \frac{1}{12}
			(\nabla_{\{e\}} b_{\{abcd\} i:f}{}^i
			-\nabla_{\{f\}} b_{\{abcd\} i:e}{}^i) , \\
\label{eq:calabi-homot-end}
	E_{l+1}[b]_{a_1\cdots a_l:bc}
		&= (\delta_L[b]
			- (-1)^l l^{-1} \d_R \circ \tr[b])_{a_1\cdots a_l:bc} 
			\quad (l\ge 2) .
\end{align}
The notation used in the formula for $E_5$ is defined in
Appendix~\ref{app:yt-bkg}. Note that $E_1 = D$ is the \emph{de~Donder}
operator, used as a linearized gauge fixing condition in the literature
on relativity. Also, if $R[g]$ is the Riemann tensor of the metric $g$,
then $E_2[R[g]] = \tr[R[g]]$ is the corresponding Ricci tensor. The
higher homotopy operators $E_l$ for $l>2$ do not seem to have previously
appeared in the literature on relativity.

\begin{align}
	P_0[v]_a
		&= \square v_a + k\frac{1}{n} v_a , \\
	P_1[h]_{ab}
		&= \square h_{ab} 
			- k \frac{2}{n(n-1)} h_{ab}
			+ 2k \frac{g_{ab} \tr[h]}{n(n-1)}  , \\
	P_2[r]_{ab:cd} 
		&= \square r_{ab:cd}
			- k\frac{2}{n} r_{ab:cd}
			+ 2k\frac{(g\odot \tr[r])_{ab:cd}}{n(n-1)}  , \\
	P_3[b]_{abc:de}
		&= \square b_{abc:de}
			- k\frac{(3n-7)}{n(n-1)} b_{abc:de}
			- 2k\frac{(g\odot \tr[b])_{abc:de}}{n(n-1)}  , \\
	P_4[b]_{abcd:ef}
		&= \square b_{abcd:ef}
			- k\frac{(4n-14)}{n(n-1)} b_{abcd:ef}
			+ 2k\frac{(g\odot \tr[b])_{abcd:ef}}{n(n-1)}  , \\
\notag
	P_l[b]_{a_1\cdots a_l:bc}
		&= \square b_{a_1\cdots a_l:bc}
			- k\frac{(ln-l^2+2)}{n(n-1)} b_{a_1\cdots a_l:bc} \\
		& \qquad {}
			+ (-)^l 2k \frac{(g\odot \tr[b])_{a_1\cdots a_l:bc}}{n(n-1)}
			\quad (l\ge 3) .
\end{align}

Note our notation $\square = \nabla^a\nabla_a$ for the tensor Laplacian,
which is also known as the d'Alambertian in Lorentzian signature. The
operator $P_0 = D\circ K$ is gives the wave-like residual gauge
condition such that the perturbation $h = K[v]$ satisfies the de~Donder
gauge condition $D[h] = 0$ in linearized gravity. The operator $P_1 =
\tr{} \circ (-2\tilde{R}) + K \circ D$ is the wave-like operator of the
linearized Einstein equations for gravitational perturbations $h$ in
de~Donder gauge $D[h] = 0$. These two operators are well known and can
be found (or their close analogs can) for instance in
\cite[Sec.7.5]{wald} and more they appeared in
in~\cite{fewster-hunt,hack-lingrav,bdm}. The higher cochain maps and the
corresponding identities appear to be new. Though, the identity $P_2 =
E_3 \circ \bar{B} - 2\tilde{R}\circ E_2$ is related to the non-linear
wave equations satisfied by the Riemann and Weyl tensors on any vacuum
background, sometimes known as the \emph{Penrose wave equation}. For
linearized fields, a related equation is sometimes known as the
\emph{Lichnerowicz Laplacian}. For more details, see
references~\cite{ryan}, \cite[Sec.1.3]{lichnerowicz},
\cite[Sec.7.1]{chr-kl}, \cite[Exr.15.2]{mtw}, \cite[Eq.35]{bcjr}.

\begin{rem}\label{rmk:elliptic}
It is worth noting that we refer to the operators $P_l$ as wave-like
because the principal symbol of $P_l$ has the same principal symbol as
the tensor Laplacian $\square = \nabla^a \nabla_a$, on Lorentzian
manifolds also known as the d'Alambertian or wave operator, which is a
hyperbolic differential operator. Note that the principal symbol of
$P_l$ is determined only by the principal symbols of the $B_l$ and
$E_l$. The principal symbols of $B_l$ are metric independent, while
those of $E_l$ depend on the metric $g$ of the background
pseudo-Riemannian manifold $(M,g)$. However, we are actually free to
choose any metric, say $g'$ that is different from $g$, to construct the
cochain homotopy operators, say $E'_l$. The principal symbol induced
cochain maps $P'_l = E'_{l+1}\circ B_{l+1} + B_l \circ E'_l$ will then
still only depend on one metric, $g'$, and be equal to the principal
symbol of the tensor Laplacian $\square'$ defined with respect to $g'$.
Thus, if we choose $g'$ to be Riemannian, we can induce cochain
homotopies $P'_l$ that are elliptic. The operators $P'_l$ will of course
differ from the $P_l$ by terms of lower differential order that would
depend on both $g$ and $g'$. This remark will be very useful in
Proposition~\ref{prp:calabi-exact} in the discussion of the local
exactness of the Calabi complex.
\end{rem}

\subsection{Formal adjoint complex}\label{sec:calabi-adj}
Given a linear differential operator $f\colon \Secs(E)\to \Secs(F)$,
between vector bundles $E\to M$ and $F\to M$, its \emph{formal adjoint}
is a linear differential operator $f^*\colon \Secs(\tilde{F}^*) \to
\Secs(\tilde{E}^*)$, where where we have used the notation for the
bundle $\tilde{V}^* \cong V^* \otimes_M \Lambda^n M$ of \emph{dual
densities} of a vector bundle $V\to M$, defined as the tensor product of
the its linear dual bundle $V^*\to M$ with that of densities $\Lambda^n
M\to M$ on the base manifold if dimension $\dim M = n$. The formal
adjoint operator is defined to be the unique differential operator such
that a \emph{Green formula} holds,
\begin{equation}\label{eq:gen-adj}
	\psi \cdot f[\xi] - f^*[\psi] \cdot \xi = \d G[\psi,\xi] ,
\end{equation}
for any $\psi \in \Secs(\tilde{F}^*)$, $\xi\in \Secs(E)$, and some
bilinear bidifferential operator
\begin{equation}
	G\colon \Secs(\tilde{F}^* \times_M E) \to \Secs(\Lambda^{n-1} M).
\end{equation}
A formal adjoint operator always exists and is
unique~\cite{anderson-small,anderson-big,tarkhanov}.

In the presence of background pseudo-Riemannian metric $g$ on $M$, we
can canonically identify the trivial bundle $\R\times M$ with $\Lambda^n
M$, via multiplication by the canonical volume form $\eps_{a_1\cdots
a_n}$ with respect to $g$ ($\eps \in \Omega^n(M)$), and also $V \cong
V^*$ for any tensor bundle $V\to M$, by lowering and raising indices
with $g$, thus also canonically identifying $V \cong \tilde{V}^*$.
Below, we will take formal adjoints with respect to this identification.
Recall the identity~\cite{levi-civita}
\begin{equation}
	\eps^{a a_2\cdots a_n} \eps_{b a_2\cdots a_n} = (-1)^s(n-1)! \delta^a_b
\end{equation}
(where $s$ counts the number
of minuses in the signature of the metric $g$, with $s=1$ for Lorentzian
metrics with mostly-plus convention) and define
\begin{equation}
	G^a = \frac{(-1)^s}{(n-1)!} \eps^{a a_2 \cdots a_n} G_{a_2\cdots a_n}
\end{equation}
so that $G_{a_2\cdots a_n} = \eps_{a a_2 \cdots a_n} G^a$. The right
hand side of the formal adjoint equation~\eqref{eq:gen-adj} can then be
rewritten as
\begin{equation}
	(\d G)_{a_1\cdots a_n}
	= \frac{(-1)^s}{n!} \eps_{a_1\cdots a_n} \eps^{a b_2\cdots b_n}
		n \nabla_{a} G_{b_2\cdots b_n}
	= \eps_{a_1\cdots a_n} \nabla_a G^a,
\end{equation}
with the whole equation becoming
\begin{equation}\label{eq:adj}
	\psi\cdot f[\xi] - f^*[\psi]\cdot \xi = \nabla_a G^a[\psi,\xi] ,
\end{equation}
where the dot indicates contraction of indices using the metric $g$
between two tensors of the same index structure.

With this notation, the formal adjoint Calabi complex
$(C_\bullet,B_\bullet^*)$ fits into the following diagram:
\begin{equation}\label{eq:dual-calabi-diag}
\begin{tikzcd}
	0 \arrow[<-]{r} &
	C_0 \arrow[<-]{r}{B_1^*} \arrow[<-]{d}[swap]{P_0^*} &
	C_1 \arrow[<-]{r}{B_2^*} \arrow[<-]{d}[swap]{P_1^*} \arrow[<-,dashed]{dl}[swap]{E_1^*} &
	C_2 \arrow[<-]{r}{B_3^*} \arrow[<-]{d}[swap]{P_2^*} \arrow[<-,dashed]{dl}[swap]{E_2^*}&
	C_3 \cdots \arrow[<-]{r}{B_{n}^*} \arrow[<-]{d}[swap]{P_3^*} \arrow[<-,dashed]{dl}[swap]{E_3^*}&
	C_{n} \arrow[<-]{r} \arrow[<-]{d}[swap]{P_n^*} \arrow[<-,dashed]{dl}[swap]{E_{n}^*} &
	0 \\
	0 \arrow[<-]{r} &
	C_0 \arrow[<-]{r}{B_1^*} &
	C_1 \arrow[<-]{r}{B_2^*} &
	C_2 \arrow[<-]{r}{B_3^*} &
	C_3 \cdots \arrow[<-]{r}{B_{n}^*} &
	C_{n} \arrow[<-]{r} &
	0
\end{tikzcd} ,
\end{equation}
where we have identified $\tilde{C}_i^* \cong C_i$ using the background
metric. Note that all the analogous identities are satisfied, the solid
arrows in the diagram commute and the dashed arrows are homotopy
operators inducing the vertical cochain maps, $P_i^* = B_{i+1}^* \circ
E_{i+1}^* + E_i^* \circ B_i^*$. The main difference is that the
$B_\bullet$ now decrease the degree index by one instead of decreasing
it. The usual numbering convention can be achieved by relabelling, but
we shall not do so here, expecting that no confusion will arise.

Recall that the final differential operator $B_n$ of the Calabi complex
is
\begin{equation}
	B_n[b]_{a_1 \cdots a_n : bc}
		= \d_L[b]_{a_1\cdots a_n : bc}
		= n \nabla_{[a_1} b_{a_2 \cdots a_n]:bc} ,
\end{equation}
where $b\in \Secs(C_{n-1} M)$. To compute its formal adjoint, let $c\in
\Secs(C_n M)$ and consider first the identity, derived in
Appendix~\ref{app:yt-adj},
\begin{multline}\label{eq:dlDl-adj}
	\nabla_a (c^{a a_2\cdots a_n : bc} b_{a_2\cdots a_n : bc}) \\
	= \frac{1}{n} c^{a a_2\cdots a_n : bc} \d_L[b]_{a a_2\cdots a_n : bc}
			+ \delta_L[c]^{a_2\cdots a_n : bc} b_{a_2\cdots a_n : bc} .
\end{multline}
Note that the operators $\d_L$ and $\delta_L$ specifically produce
tensors of the appropriate Young type. Therefore, the formal adjoint
operator $B_n^*$ is given by the formula
\begin{align}
	B_n^*[c]_{a_2\cdots a_n:bc}
	&= -\frac{1}{n}\delta_L[c]_{a_2\cdots a_n:bc} \\
	&= -\frac{1}{n}\nabla^a c_{a a_2\cdots a_n:bc}
		- \frac{2}{n(n-1)} \nabla^a c_{[b|a_2\cdots a_n:|c]a} ,
\end{align}
with the Green form represented by $G^a[c,b] = \frac{1}{n} c^{a
a_2\cdots a_n:bc} b_{a_2\cdots a_n:bc}$.

While this operator $B_n^*$ may look unfamiliar, after a further local
invertible transformation the equation $B_n^*[c] = 0$ becomes equivalent
to the well known \emph{rank-$(n-2)$ Killing-Yano equation}. Let us
define a rank-$(n-2)$ anti-symmetric tensor $y^{c_3\cdots c_n}$ such
that
\begin{align}
	c_{a_1\cdots a_n:bc}
		&= \eps_{a_1\cdots a_n} y^{c_3\cdots c_n} \eps_{bc c_3\cdots c_n} , \\
	y^{c_3\cdots c_n}
		&= \frac{1}{2(n-2)!(n-1)!} \eps^{a_1\cdots a_n}
			\, c_{a_1\cdots a_n:bc} \,
			\eps^{bc c_3\cdots c_n} .
\end{align}
It is straightforward to check using the hook formula
(Appendix~\ref{app:young}) that the tensor $c$ of Young type
$(2,2,1^{n-2})$ has the same number of independent components as the
tensor $y$ of Young type $(1^{n-2})$. To transform the equation
satisfied by $c$ into the Killing-Yano equation satisfied by $y$, we
will need the following identities, which follow from the general
properties of the $\eps$ tensor~\cite{levi-civita}:
\begin{align}
	\eps^{a a_2\cdots a_n} c_{a' a_2\cdots a_n:bc} \eps^{bc c_3\cdots c_n}
		&= 2(n-2)!(n-1)! \delta^a_{a'} y^{c_3\cdots c_n} , \\
	\eps^{a a_2\cdots a_n} c_{b a_2\cdots a_n:a'c} \eps^{bc c_3\cdots c_n}
		&= (n-1)!^2 y^{b_3\cdots b_n}
			\delta^{[a}_{a'} \delta^{c_3}_{b_3} \cdots \delta^{c_n]}_{b_n} .
\end{align}
Contracting one $\eps$ tensor with each index group of the equation
$B_n^*[c] = 0$ we get
\begin{align}
	0
	&= \eps^{a a_2\cdots a_n} B^*_n[c]_{a_2\cdots a_n:bc}
		\eps^{bc c_3\cdots c_n} \\
\notag
	&= -\frac{1}{n}\nabla^{a'} \eps^{a a_2\cdots a_n}
			c_{a' a_2\cdots a_n:bc} \eps^{bc c_3\cdots c_n} \\
	&\quad {}
		- \frac{2}{n(n-1)} \nabla^{a'} \eps^{a a_2\cdots a_n}
			c_{b a_2\cdots a_n:ca'} \eps^{bc c_3\cdots c_n} \\
	&= -\frac{2}{n}(n-1)!(n-2)! \left(\nabla^a y^{c_3\cdots c_n}
		- \nabla^{[a} y^{c_3\cdots c_n]}\right) .
\end{align}
Note that the derivative $\nabla^a y^{c_3\cdots c_n}$ takes values in
the tensor bundle of Young type $(1)\otimes (1^{n-2})$. Using the
well-known Littlewood-Richardson rules~\cite{fulton,lrr} this
representations decomposes into the direct sum $(1)^{n-1} \oplus
(2,1^{n-3})$. Note that the antisymmetrization of the above equation
gives zero. Thus, the independent components of the equation satisfied
by $y$ take values in a tensor bundle of Young type $(2,1^{n-3})$, which
has two columns, of lengths $n-2$ (filled with indices belonging to $y$) and
$1$ (filled with index belonging to $\nabla$). It is also well-known that
this representation can be isolated by antisymmetrizing along the columns
and symmetrizing any two indices between the columns. In our case, the
antisymmetrization has no effect ($y$ is already antisymmetric) and the
symmetrization, after lowering all indices, gives the equation
\begin{equation}
	KY[y]_{a c_3 \cdots c_n} = \nabla_{(a} y_{c_3)\cdots c_n} = 0 ,
\end{equation}
which is none other than the \emph{rank-$(n-2)$ Killing-Young equation},
whose solutions are called \emph{rank-$(n-2)$ Killing-Young} tensors or
\emph{Killing $(n-2)$-forms}~\cite{stepanov}. We refer to the
differential operator $KY$ as the Killing-Young operator. So, in the
same sense that the Calabi complex constitutes the compatibility complex
of the Killing equation on a constant curvature background, so does the
formal adjoint Calabi complex for the rank-$(n-2)$ Killing-Yano
equation on the same background.

\subsection{Equations of finite type, twisted de~Rham complex}\label{sec:tw-dr}
The Killing and Killing-Yano equations, which lie at the base of the
Calabi and its formal adjoint differential complexes, are well known
examples of partial differential equations of \emph{finite
type}~\cite{goldschmidt-lin,spencer,pommaret}. That is, in any
neighborhood of a point $x\in M$ they admit only a finite dimensional
space of solutions. Each solution is fully determined by its value and
finitely many derivatives at $x$. For the Killing and Killing-Yano
equations only the first derivatives are required. This is a strong kind
of unique continuation. Such equations are called \emph{regular} if the
dimension of the solution space in a sufficiently small neighborhood of
a point $x\in M$ is independent of $x$. That number may, however, differ
from the dimension of the global solution space, which can be strictly
smaller in the presence of topological or geometric obstructions to
continuing local solutions to global ones.

Regular equations of finite type have a very simple existence theory.
Let $F\to M$ and $E\to M$ be two vector bundles, together with a
differential operator $e\colon \Secs(F) \to \Secs(E)$ of order $l$ such
that the equation $e[\psi] = 0$, for $\psi\in \Secs(F)$, is finite type
and regular. This means that there exists an integer $k$ such that the
knowledge of $j^k\psi(x)$ for any $x\in M$ is sufficient to determine the
components of all higher jets of $\psi$ at $x$. Prolongation of the
equation to order $k$ (Appendix~\ref{app:jets}) gives the bundle map
$p^{k-l} e\colon J^k F \to J^{k-l} E$. By the regularity hypothesis, the
map is of constant rank, so its kernel $V = \ker p^{k-l} e \sse J^k F$
is a vector bundle over $M$. Since all higher derivatives of a solution
$\psi$ at $x$ are uniquely determined by $j^k\psi(x)$ and $j^k\psi$ only
takes values in $V$, there is a unique $n$-dimensional hyperplane in
$T_{x,v}V$ that is tangent to the graph of a solution $\psi$ such that
$j^k\psi(x) = (x,v)$. These hyperplanes define an $n$-dimensional
distribution on the total space of the bundle $V$ and it is
straightforward to check that this distribution is involutive (Lie
brackets of vector fields valued in the distribution remain valued in
the distribution). Thus, by the theorem of Frobenius~\cite{lang}, $V$ is
foliated by $n$-dimensional leaves tangent to the given hyperplane
distribution. Locally, these leaves are precisely the graphs of
solutions to the equation $e[\psi] = 0$. Thus the rank $\rk V$ is
precisely the dimension of the local solution space on any sufficiently
small, connected open set in $M$.

As we have already mentioned, both the Killing and Killing-Yano
operators, $K\colon \Secs(T^*M) \to \Secs(S^2M)$ and $KY\colon
\Secs(\Lambda^{n-2}M) \to \Secs(\Y^{(2,1^{n-1})}T^*M)$, define finite
type equations. By the virtue of their covariance, they are also regular
on any pseudo-Riemannian symmetric space, which includes constant
curvature backgrounds. Furthermore, on constant curvature spaces, the
dimensions of their local solution spaces are $\rk V_K = \rk V_{KY} =
n(n+1)/2$~\cite{stepanov}.

The $n$-dimensional hyperplane distribution on $V$ and the resulting
foliation described above can also be described in another way, namely
as a flat linear connection on $V\sse J^kF$~\cite[Sec.2.1.3]{morita}.
The connection is linear because the original equation $e[\psi] = 0$ is
itself linear. A linear connection on $V\to M$ can alternatively be
described by a first order differential operator $D\colon \Secs(V) \to
\Secs(T^*M \otimes V)$ defined by the property
\begin{equation}
	D[\omega j^k\psi] = \d \omega \otimes j^k\psi ,
\end{equation}
for any $\omega \in C^\oo(M)$ and solution $\psi\in \Secs(F)$ of
$e[\psi]$, where its $k$-jet is treated as a section $j^k\psi \colon M
\to V$. That is, a section $\phi\in \Secs(V)\sse \Secs(J^kF)$ is
constant on an open set $U\sse M$ iff it coincides with the $k$-jet of a
solution of $e[\psi] = 0$ on $U$. So, it is clear that the equations
$e[\psi] = 0$ and $D[\phi] = 0$ are equivalent (their spaces of local
solutions are locally isomorphic). As discussed in
Appendix~\ref{app:jets} this means that there exist differential
operators $f$, $f'$, $g$, $g'$, $p$ and $q$, which fit into the
following diagram (again, for brevity we use the bundle symbols to stand
in for their spaces of sections)
\begin{equation}\label{eq:conn-equiv}
\begin{tikzcd}
	F \ar[swap]{r}{e} \ar[shift left]{d}{f}
		& E \ar[shift left]{d}{f'} \ar[dashed,bend right,swap]{l}{p} \\
	V \ar{r}{D} \ar[shift left]{u}{g}
		& T^*M\otimes V \ar[shift left]{u}{g'} \ar[dashed,bend left]{l}{q}
\end{tikzcd}
\end{equation}
and satisfy the following identities:
\begin{align}
	D \circ f &= f' \circ e , & g \circ f &= \id + p \circ e , \\
	e \circ g &= g' \circ D , & f \circ g &= \id + q \circ D .
\end{align}
We have already seen that on solutions, the map $f$ simply agrees with
the $k$-jet extension operator $j^k$. Thus, as a differential operator
of order $k$, it can be chosen to be any projection of $J^k F$ to its
subspace $V$. The choice of this projection then determines the
differential operator $f'$. The differential operators $g$ and $g'$ are
constructed in similar ways, making sure that $f$ and $g$ are mutual
inverses on solutions. The freedom in the choice of $f$, $f'$, $g$ and
$g'$ also determine the operators $p$ and $q$.

When it comes to a specific case, say the Killing or Killing-Yano
equation, its equivalence to a local constancy condition with respect to
a connection can be made explicit only once the solutions are themselves
explicitly known. Thus this equivalence is mostly of theoretical, though
non-negligible, interest.

Having defined the flat vector bundle $(V,D)$ corresponding to a regular
equation of finite type, there is a standard procedure to construct a
differential complex associated to it. It is called the \emph{twisted
de~Rham complex} associated to $(V,D)$,
\begin{equation}\label{eq:tw-dr}
\begin{tikzcd}[column sep=scriptsize]
	0 \arrow{r} &
	V \arrow{r}{D} &
	\Lambda^1 M \otimes V \arrow{r}{D} &
	\Lambda^2 M \otimes V \cdots \arrow{r}{D} &
	\Lambda^n M \otimes V \arrow{r} &
	0 ,
\end{tikzcd}
\end{equation}
where $D$ has been extended to a twisted de~Rham differential, defined
on sections of $\Lambda^k M \otimes V$ by the condition
\begin{equation}
	D[\omega \otimes \psi]
		= \d\omega \otimes \psi + (-1)^k \omega \wedge D\psi ,
\end{equation}
for any $\omega\in \Secs(\Lambda^k M)$ and $\psi \in \Secs(V)$, where we
recall that $D\psi$ is a section of $T^*M\otimes V = \Lambda^1 M \otimes
V$ and apply the wedge product of forms in the obvious way.

\begin{rem}\label{rmk:tw-dr}
Locally (on sufficiently small contractible open sets), this twisted
de~Rham complex consists $\rk V$ copies of the ordinary de~Rham complex.
Globally, of course, if the base manifold $M$ is not simply connected,
the twisted de~Rham complex $(\Lambda^\bullet M \otimes V, D)$ will
differ from $\rk V$ copies of the ordinary de~Rham complex
$(\Lambda^\bullet M,\d)$ because of the possible non-trivial bundle
structure of $V\to M$ or the non-trivial monodromy $D$ (parallel
transport with respect to $D$ along closed loops). The importance of the
twisted de~Rham complex will become clear in Section~\ref{sec:sheaves}
where we discuss the connection between the cohomology of differential
complexes and sheaf cohomology.
\end{rem}

For later convenience, we shall denote the twisted de~Rham complexes
associated to the Killing and Killing-Yano equations, respectively, by
$(\Lambda^\bullet M\otimes V_K, D_K)$ and $(\Lambda^\bullet M\otimes
V_{KY}, D_{KY})$.

\section{Cohomology of locally constant sheaves}\label{sec:sheaves}

The main reasons for introducing some of the general sheaf and sheaf
cohomology machinery below is are two fold. First, we have made a
connection between the abstract notion of sheaf cohomology and the
cohomology of a differential complex. A priori, computing the cohomology
of differential complex is a very hard problem, because it involves
solving partial differential equations. On the other hand, because of
the flexibility of the general machinery of sheaf cohomology, it may be
computable in some effective way, for instance, by reducing it to a
problem in finite dimensional linear algebra. The canonical example of
where this connection can be leveraged is the computation of de~Rham
cohomology groups of a manifold $M$ using the equivalent (through sheaf
theoretic machinery) computation of the simplicial (or cellular)
cohomology of a finite triangulation (or cell decomposition) of $M$. The
second reason is that the ideas that have been introduced give us some
tools to explicitly show that the cohomologies of two different
differential complexes are isomorphic as long as both complexes are
\emph{formally exact}, \emph{locally exact} and \emph{resolve} the same
sheaf in degree-$0$ (this terminology is introduced below).

\subsection{Locally constant sheaves}\label{sec:sh-lc}
Recall from Section~\ref{sec:tw-dr} that a regular linear differential
equation of finite type has only a finite dimensional space of local
solutions, with this dimension being constant over the base manifold. It
so happens that, from an abstract point of view, it is convenient to
view these local solutions as a \emph{locally constant sheaf} of vector
spaces. A \emph{sheaf} $\F$ of vector spaces on a topological space
$M$~\cite{bredon,ks} is an assignment $U\mapsto \F(U)$ of a vector space
(of \emph{local sections} over $U$, $\F(\varnothing) = 0$) to each open
$U\sse M$ satisfying the following axioms: \emph{(restriction)} for any
inclusion of opens $U\sse V$ there exist linear restriction maps
$\F(V)\to \F(U)$, also written $f\mapsto f|_U$, such that $U\sse U$
induces the identity map and $U\sse V\sse W$ induces $\F(W) \to \F(U)$
in agreement with the composition $\F(W) \to \F(V) \to \F(U)$;
\emph{(descent)} any pair of opens $U$ and $V$ induces an exact sequence
$0 \to \F(U\cup V) \to \F(U)\times \F(V) \to \F(U\cap V) \to 0$, where
the first map is $f \mapsto (f|_U,f|_V)$ and the second one is $(f,g)
\mapsto f|_{U\cap V} - g|_{U\cap V}$. We write $\Secs(\F) = \Secs(M,\F)
= \F(M)$ for the vector space of \emph{global sections} of the sheaf
$\F$. A \emph{sheaf} is called \emph{locally constant} when the number
$\odim \F_x = \max_{U\ni x} \dim \F(U)$, where $U$ ranges over connected
open neighborhoods of $x\in M$, is finite and does not depend on $x$, so
we can write $\odim\F = \odim\F_x$. Since $\dim \F(U)$ can only decrease
for larger connected $U$, for any $x\in M$ there exists a connected
neighborhood $U$ of $x$ such the vector spaces of local sections over
smaller connected neighborhoods stabilize (the restriction map becomes
an isomorphism), so that we can write $\F(U) \cong \bar{F}$ for some
fixed vector space $\bar{F}$ that we call the \emph{stalk} of $\F$.
Clearly, $\dim \bar{F} = \odim \F$.  Also, $\F$ is called
\emph{constant} when it is locally constant and $\Secs(\F) \cong
\bar{F}$.

Given a vector bundle $F\to M$, the assignment $\F(U) = \Secs(F,U)$ of
local sections of $F$ over each open $U\sse M$ defines a sheaf $\F$ on
$M$, called the \emph{sheaf of (germs of) sections} of $F\to M$.
Similarly, it is straightforward to check that, given another vector
bundle $E\to M$ and a linear differential operator $e\colon \Secs(F) \to
\Secs(E)$, the sets $\S_e(U) = \{ \psi\in \Secs(F,U) \mid e[\psi] = 0
\}$ of solutions of the partial differential equation $e[\psi] = 0$ also
define a sheaf $\S_e$ on $M$, called the \emph{solution sheaf} of
$e\colon \Secs(F) \to \Secs(E)$. Following the preceding discussion of
equations of finite type, it should be clear that solution sheaves $\K =
\S_K$ (the \emph{Killing sheaf}) and $\KY = \S_{KY}$ (the
\emph{Killing-Yano sheaf}) of the Killing and Killing-Yano equations are
locally constant, provided the background pseudo-Riemannian manifold is
chosen such that these equations are regular. Another important example
is the constant sheaf $\R_M = \S_\d$ of locally constant functions,
which solve the equation $\d f = 0$, $f\in C^\oo(M)$ and $\d$ the
de~Rham differential.

Sheaves are important because every sheaf $\F$ (of vector spaces) on $M$
automatically comes with an abstract notion of \emph{sheaf cohomology}
(vector spaces) $H^p(M,\F)$, called the $p$-th or degree-$p$ cohomology
of $\F$, or of $M$ with coefficients in $\F$. Moreover, all classical
cohomology theories from algebraic topology can be identified with the
cohomologies of certain sheaves. Further, some superficially different
looking cohomologies theories may be connected through the fact that
they are both equivalent to the sheaf cohomology of the same sheaf. In
particular, the classical simplicial, cellular, singular, \v{C}ech and
de~Rham cohomologies of a manifold $M$ all
coincide~\cite{bott-tu,bredon,ks} because they are each equivalent to
the cohomology of $M$ with coefficients in the sheaf $\R_M$ of locally
constant functions.

The intrinsic definition of sheaf cohomology is somewhat involved and
not entirely intuitive (unless one is already intimately familiar with
\v{C}ech cohomology and the notion of local coefficients). Fortunately,
the intrinsic definition can be relegated to standard
references~\cite{bredon,ks} in favor of an equivalent but more practical
definition using \emph{acyclic resolutions}. To explain further, we need
to introduce some terminology. A \emph{complex} of sheaves of vector
spaces
\begin{equation}\label{eq:sh-cplx}
\begin{tikzcd}
	\cdots \arrow{r} &
	\F_i \arrow{r} &
	\F_{i+1} \arrow{r} &
	\cdots
\end{tikzcd}
\end{equation}
consists of an assignment of linear maps $\F_i(U) \to \F_{i+1}(U)$ to
each open $U\sse M$, in a way consistent with restriction maps, such
that we have a complex of vector spaces of local sections (two
successive maps compose to zero)
\begin{equation}
\begin{tikzcd}
	\cdots \arrow{r} &
	\F_i(U) \arrow{r} &
	\F_{i+1}(U) \arrow{r} &
	\cdots
\end{tikzcd}
\end{equation}
for each open $U\sse M$. A local section in $\F_i(U)$ that is in the
kernel of the corresponding map is called a \emph{cocycle} and a local
section in $\F_i(U)$ that is in image of the corresponding map is called
a \emph{coboundary}. A sheaf complex is \emph{exact} when, for each
$x\in M$, open neighborhood $U\sse M$ of $x$ and local section $\alpha
\in \F_i(U)$, there exists a possibly smaller and $\alpha$-dependent
open neighborhood $U'\sse U$ of $x$ such that $\alpha|_{U'}$ is a
coboundary. For a complex of sheaves, like~\eqref{eq:sh-cplx}, we could
define its \emph{cohomology sheaves} $\H^i(\F_\bullet)$ (distinct from
\emph{sheaf cohomology}, to be defined later), by starting with the
assignment $\H^i(\F_\bullet)(U) = \ker(\F_i(U)\to \F_{i+1}(U)) / \im
(\F_{i-1}(U) \to \F_i(U))$, which may not produce a sheaf but only a
\emph{presheaf}, and applying the \emph{sheafification} construction to
it. We will not go into the details of how sheafification turns
presheaves into sheaves here, but they can be found in standard
references~\cite{bredon,ks}. It suffices to point out that given a sheaf
complex in non-negative degrees, $0 \to \F_0 \to \F_i \to \cdots$, the
vector space $\H^0(\F_\bullet)(U) \sse \F_0(U)$ consists of all cocycle
local sections. In the sequel, we shall only need to refer to such
cohomology sheaves in degree-$0$. Given a sheaf $\F$, if $\F_i \to
\F_{i+1}$ is a complex of sheaves such that $\F_i = 0$ for $i < 0$,
$\H^0(\F_\bullet) = \F$, and $\H^i(\F_\bullet) = 0$ for $i > 0$, we call
it a \emph{resolution} of the sheaf $\F$.

In the sequel, we shall only consider sheaves of sections of vector
bundles or of solution of some liner PDE and only complexes of sheaves
where maps between the vector spaces of local sections are induced by
restrictions of differential operators, for which the compatibility with
restrictions is automatically satisfied.

\subsection{Acyclic resolution by a differential complex}\label{sec:sh-res}
The de~Rham complex~\cite{bott-tu} is the canonical example of a complex
of sheaves of sections of vector bundles (differential forms on $M$),
with maps induced by differential operators (de~Rham differentials). The
Poincar\'e lemma then demonstrates that this complex of sheaves is
exact. For simplicity, we shall call a \emph{differential complex}
$(F_\bullet,f_\bullet)$ a sequence of vector bundles $F_i\to M$ and
differential operators $f_i \colon \Secs(F_{i-1}) \to \Secs(F_i)$
satisfying $f_i \circ f_{i-1} = 0$, while implicitly setting $F_{-1} =
0$ and $f_0 = 0$. Given a differential complex, it is natural to define
its cohomology vector spaces to be the cohomology of the cochain complex
of global sections, $H^i(F_\bullet,f_\bullet) =
H^i(\Secs(F_\bullet),f_\bullet)$, which we also refer to as the
cohomology with \emph{unrestricted supports}. Since differential
operators do not increase supports, we can equally consider the
cohomology of the differential complex with \emph{compact supports},
defined as $H_c^i(F_\bullet,f_\bullet) =
H^i(\Secs_c(F_\bullet),f_\bullet)$. A differential complex naturally
define a complex $0 \to \F_0 \to \F_1 \to \cdots$ of sheaves of sections
of these bundles, $\F_i(U) = \Secs(F_i,U)$. A differential complex is
said to be \emph{locally exact} if it defines an exact complex of
sheaves. Local exactness is a very strong property that is crucial in
the relation of the cohomology of a differential complex to sheaf
cohomology, which we discuss next.

In general, given a complex of sheaves $\F_i \to \F_{i+1}$, we call it
an \emph{injective resolution} of a sheaf $\F$ if it is a
\emph{resolution} of $\F$ (namely, $\F_i = 0$ for $i<0$, it is exact
except for $\H^0(\F_\bullet) = \F$), and each $\F_i$ is
\emph{injective}. The injectivity condition is somewhat technical. The
same can be said for the fact that every sheaf has an injective
resolution. So we will not go into them here and defer to standard
references instead~\cite{bredon,ks}. We will need these notions only for
the following definition.  The \emph{degree-$i$ sheaf cohomology} vector
spaces $H^i(\F) = H^i(M,\F)$, also called the \emph{degree-$i$
cohomology of $M$ with coefficients in $\F$}, as the cohomology vector
space of the complex of global sections of any injective resolution
$\F_i \to \F_{i+1}$ of $\F$, $H^i(\F) = H^i(\Secs(\F_\bullet))$. It is
important to note that sheaf cohomology is well defined. It does not
depend on the chosen injective resolution, because the injectivity
condition implies the existence of a homotopy equivalence between the
complexes of global sections of any two such resolutions, thus forcing
their cohomologies to be isomorphic. This is another technical fact that
we shall not go into here.

Instead, we make note of yet another technical fact that provides a
practical way to compute sheaf cohomology. For that, we need two more
definitions. A sheaf $\F$ is called \emph{acyclic} if $H^i(\F) = 0$ for
all $i>0$, though as usual the degree-$0$ cohomology $H^0(\F) \cong
\Secs(\F)$ is isomorphic to the vector space of global sections of $\F$.
A sheaf $\F$ on $M$ is called \emph{soft} if for any closed $A\sse M$
the restriction maps $\F(M) \to \F(A)$ are surjective, where $\F(A) =
\bigcap_{U\sse A} \F(U)$ with $U$ ranging over all open sets that
contain the closed set $A$. In other words, given an open $U\sse M$ and
a closed subset $A \sse U$, a local section on $U$ can always be
extended to a global one on $M$ without modification on $A$, but
possibly modified on $U\setminus A$. What is really important for us is
the following
\begin{prop}\label{prp:sh-res}
(i) If $\F$ is a sheaf on $M$, and $\F_i \to \F_{i+1}$ is a resolution
of $\F$ by acyclic sheaves (\emph{acyclic resolution}), then $H^i(M,\F)
\cong H^i(\Secs(M,\F_\bullet))$. (ii) Any soft sheaf on $M$ is acyclic.
(iii) Given a vector bundle $F\to M$, the sheaf $\F$ of sections of $F$
is soft.
\end{prop}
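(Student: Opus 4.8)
The plan is to prove the three parts in the order (iii), (ii), (i), since each relies on the previous. These are standard facts in sheaf theory (see Bredon or Kashiwara--Schapira), so the goal is to sketch the honest proofs rather than merely cite them.

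\medskip

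For part (iii), I would show that the sheaf $\F$ of sections of a vector bundle $F\to M$ is soft. Given a closed set $A\sse M$ and a section defined on some open neighborhood $U\supseteq A$, the task is to extend it to a global section of $F$ without altering it on $A$. The key tool is a smooth partition of unity subordinate to the cover $\{U, M\setminus A\}$: choose a smooth bump function $\chi$ that equals $1$ on a neighborhood of $A$ and has support contained in $U$. Then $\chi$ times the local section extends by zero to a global smooth section of $F$, agreeing with the original on the neighborhood of $A$ where $\chi\equiv 1$. This uses only paracompactness of $M$ (guaranteed for manifolds) and the linear structure of the fibers, so that multiplication by a scalar function and extension by zero produce a genuine global section.

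\medskip

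For part (ii), that soft sheaves are acyclic, I would proceed by the standard dimension-shifting argument. Embed $\F$ into its canonical flabby (hence soft) resolution, or more directly observe that in a short exact sequence $0\to \F'\to\F\to\F''\to 0$ of sheaves with $\F$ and $\F'$ soft, $\F''$ is soft and the sequence of global sections stays exact on the right (softness of $\F'$ gives surjectivity of $\Secs(\F)\to\Secs(\F'')$). One then builds a soft resolution of $\F$, takes the long exact cohomology sequence, and uses the vanishing of $H^{>0}$ of soft sheaves in each short exact piece to shift degrees and conclude $H^i(M,\F)=0$ for $i>0$. The crucial input is the extension-by-softness lemma guaranteeing right-exactness of the global-sections functor on short exact sequences with soft kernel; I would isolate and prove that lemma first, as it is where paracompactness genuinely enters.

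\medskip

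For part (i), that an acyclic resolution computes sheaf cohomology, I would split the long resolution $0\to\F\to\F_0\to\F_1\to\cdots$ into short exact sequences $0\to\K_i\to\F_i\to\K_{i+1}\to 0$ with $\K_0=\F$ and $\K_{i+1}=\im(\F_i\to\F_{i+1})=\ker(\F_{i+1}\to\F_{i+2})$, where exactness of the resolution as a complex of sheaves (i.e.\ local exactness) is exactly what makes these sequences exact. Applying the long exact sequence in sheaf cohomology and using $H^{j}(M,\F_i)=0$ for $j>0$ (acyclicity), one obtains the connecting isomorphisms $H^j(M,\K_i)\cong H^{j-1}(M,\K_{i+1})$ that let one walk down the resolution and identify $H^i(M,\F)$ with the cohomology of the complex of global sections $\Secs(M,\F_\bullet)$. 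The main obstacle, and the step I would be most careful about, is verifying that the degree-$0$ end matches up correctly: one must track the image and kernel identifications so that the de Rham--style computation $H^i(\Secs(\F_\bullet))=\ker/\im$ at the level of global sections is genuinely reproduced, rather than merely an abstract isomorphism. This bookkeeping, together with invoking that local exactness of the differential complex is what promotes it to an exact complex of sheaves (as defined in Section~\ref{sec:sh-res}), is where the argument must be assembled with care.
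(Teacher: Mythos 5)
Your proposal is correct, but it takes a different route from the paper: the paper's entire proof is two sentences of citations, deferring (i) and (ii) to the standard sheaf-theory references (Bredon, Kashiwara--Schapira) and justifying (iii) by invoking the Whitney extension theorem (H\"ormander, Thm.~2.3.6), whereas you actually sketch the standard arguments those references contain. For (i) and (ii) your sketches are precisely the textbook proofs being cited --- the short-exact-sequence/dimension-shifting argument for softness implying acyclicity, and the splitting of the resolution into short exact sequences $0\to\K_i\to\F_i\to\K_{i+1}\to 0$ with the connecting-map walk down the degrees --- so the difference there is only one of self-containedness versus economy. The more interesting divergence is (iii): where the paper reaches for Whitney's theorem, you use only a cutoff function $\chi$ subordinate to the cover $\{U, M\setminus A\}$, equal to $1$ near $A$ and supported in $U$. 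Given the paper's own definition of softness --- $\F(A)$ consists of sections already defined on some open neighborhood of the closed set $A$, not of Whitney-jet data prescribed on $A$ itself --- your bump-function argument is not only more elementary but is actually the better-fitted one; the full strength of Whitney extension is never needed. Your bookkeeping concerns in (i) (identifying $\Secs(\K_i)=\ker(\Secs(\F_i)\to\Secs(\F_{i+1}))$ via left-exactness of global sections, so that the abstract isomorphism really lands on $\ker/\im$ of the global-sections complex) are exactly the right points to be careful about, and the argument goes through as you describe.
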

\begin{proof}
Any standard discussion of sheaf cohomology establishes (i) and
(ii)~\cite{bredon,ks}. On the other hand, (iii) is simply a restatement
of the well known Whitney extension theorem for smooth
functions~\cite[Thm.2.3.6]{hoermander-I}.
\end{proof}
Note that the complex of sheaves corresponding to a differential complex
then automatically consists of acyclic sheaves. The above proposition
essentially tells us that, given a resolution of some sheaf $\F$ on a
manifold $M$ by a locally exact differential complex
$(F_\bullet,f_\bullet)$, the sheaf cohomology of $\F$ and the cohomology
of the differential complex will coincide, $H^i(\F) \cong
H^i(F_\bullet,f_\bullet)$. This observation will be particularly
important later in Corollary~\ref{cor:calabi-sheaf}.

Next, we discuss some conditions ensuring that the cohomologies of two
given differential complexes are isomorphic. As we have now seen, local
exactness is a very strong and useful property, unfortunately it can be
difficult to check in practice. Two weaker notions of exactness exist
that are easier to check in practice.  To formulate them, we refer to
the notions of \emph{jets} and \emph{jet bundles}, together with
associated constructions like \emph{prolongations} and \emph{principal
symbols}, all briefly recalled in Appendix~\ref{app:jets}. Given a
sequence of vector bundles $F_i$ and a complex of linear differential
operators $f_{i}\colon F_{i-1} \to F_i$, each of order $k_i$, their
prolongations define a complex of vector bundle morphisms,
\begin{equation}
\begin{tikzcd}[column sep=large]
	\cdots \arrow{r} &
	J^l F_{i-1} \arrow{r}{p^{l_i} f_i} &
	J^{l_i} F_i \arrow{r}{p^{l_{i+1}} f_{i+1}} &
	J^{l_{i+1}} F_{i+1} \arrow{r} &
	\cdots ,
\end{tikzcd}
\end{equation}
with $l_i = l - k_i$ and $l_{i+1} = l - k_i - k_{i+1}$, for each
sufficiently large $l$. The differential complex is said to be
\emph{formally exact} if the above compositions are exact, as linear
bundle maps over $M$, for any values of $l$ and $i$ for which they are
defined. On the other hand, given $(x,p) \in T^*M$, the principal
symbols of the differential operators $f_i$ define a complex of linear
maps between the fibers of $F_i$ at $x$,
\begin{equation}
\begin{tikzcd}[column sep=large]
	\cdots \arrow{r} &
	F_{i-1,x} \arrow{r}{\sigma_{x,p} f_i} &
	F_{i,x} \arrow{r}{\sigma_{x,p} f_{i+1}} &
	F_{i+1,x} \arrow{r} &
	\cdots .
\end{tikzcd}
\end{equation}
The differential complex is said to be \emph{elliptic} if the above
complex is exact for every $(x,p)\in T^*M$, $p\ne 0$. These two weaker
notions are distinct~\cite{smith}. Formal exactness is a good hypothesis
for showing that differential operators factor in certain ways. On the
other hand, ellipticity is a condition that can be used to prove local
exactness, via the method of parametrices and fundamental solutions.
However, the general question of determining necessary and sufficient
conditions for local exactness for differential complexes is a difficult
and still open problem. The main conjecture is sometimes known as
\emph{Spencer's conjecture}: a formally exact, elliptic complex is
locally exact~\cite{spencer,smith,shl-tarkh}. On the other hand, some
supplementary sufficient conditions are known for an elliptic complex to
be locally exact. A prominent condition of this kind is known as the
\emph{$\delta$-estimate}~\cite[Sec.1.3.13]{tarkhanov}, which first
appeared in the works of Singer, Sweeney and MacKichan~\cite{spencer}.

\begin{prop}\label{prp:tw-dr-exact}
The twisted de~Rham complex associated to the flat bundle $(V,D)$
defined by a regular differential equation of finite type, defined in
Equation~\eqref{eq:tw-dr}, is formally exact, elliptic and locally
exact.
\end{prop}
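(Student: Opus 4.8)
The plan is to exploit the observation of Remark~\ref{rmk:tw-dr} that, because the connection $D$ is flat, the twisted de~Rham complex is \emph{locally} nothing but $\rk V$ copies of the ordinary de~Rham complex, and then to invoke the corresponding standard facts for the latter. Concretely, flatness ($D^2 = 0$ as an operator on $V$) together with the Frobenius theorem~\cite{lang} guarantees that every point of $M$ has a contractible neighborhood $U$ carrying a frame of $D$-parallel sections $e_1,\dots,e_r$ of $V$ ($r = \rk V$), i.e.\ $D e_a = 0$. Writing an arbitrary section of $\Lambda^k M\otimes V$ over $U$ as $\sum_a \omega^a \otimes e_a$ with $\omega^a\in\Secs(\Lambda^k M, U)$ and using the Leibniz rule defining the extended differential, one finds $D(\sum_a \omega^a\otimes e_a) = \sum_a \d\omega^a \otimes e_a$. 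Thus over $U$ the complex~\eqref{eq:tw-dr} splits as a direct sum of $r$ copies of $(\Lambda^\bullet U, \d)$, and all three assertions reduce to the corresponding well-known facts for the ordinary de~Rham complex.

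For \emph{ellipticity} I would compute the principal symbol directly, without even invoking the parallel frame: since $D$ is first order with the same principal part as $\d$, its symbol at $(x,p)\in T^*M$ is $\sigma_{x,p}D = (p\wedge-)\otimes\id$ acting on $\Lambda^\bullet T^*_x M\otimes V_x$. For $p\neq 0$ the Koszul complex $(\Lambda^\bullet T^*_x M, p\wedge-)$ is exact, contraction with any $X$ satisfying $\langle p,X\rangle = 1$ furnishing an explicit contracting homotopy; tensoring with the fixed vector space $V_x$ preserves exactness. Hence the symbol complex is exact for every $p\neq 0$, which is precisely ellipticity.

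For \emph{formal exactness} I note that this is a condition on the prolonged bundle maps $p^{l_i}D$, and is therefore local and frame-independent. In the $D$-parallel frame the prolongation structure of $D$ coincides, component by component, with that of $\d$, so formal exactness of~\eqref{eq:tw-dr} follows from that of the ordinary de~Rham complex, i.e.\ from the algebraic (formal power series) Poincar\'e lemma asserting exactness of the de~Rham sequence at the level of jets. Finally, for \emph{local exactness}, the direct-sum splitting over the contractible neighborhood $U$ reduces the claim to the Poincar\'e lemma~\cite{bott-tu} for each of the $r$ copies of $(\Lambda^\bullet U,\d)$: on $U$ every $D$-closed section of $\Lambda^k M\otimes V$ with $k\geq 1$ is $D$-exact, while the degree-$0$ cohomology sheaf recovers the solution sheaf $\S_D$ of $D$-parallel sections. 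This exhibits~\eqref{eq:tw-dr} as an exact complex of sheaves in positive degrees, resolving $\S_D$ in the required sense.

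The only step needing genuine geometric input is the existence of the $D$-parallel local frame, i.e.\ the passage from the pointwise vanishing of curvature to an honest trivialization of $(V,D)$ over a neighborhood; everything else is a routine transcription of standard de~Rham facts, so I would not expect any serious obstacle. I emphasize that, because local exactness is obtained \emph{directly} from the Poincar\'e lemma, the argument does not rely on Spencer's conjecture, which would otherwise be needed to deduce local exactness from formal exactness and ellipticity alone.
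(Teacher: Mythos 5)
Your proposal is correct and takes essentially the same route as the paper's proof: choose a local $D$-flat frame (the paper, like you, gets this from flatness/Frobenius) to identify the complex with $\rk V$ copies of the ordinary de~Rham complex, observe that all three properties are purely local and frame-independent, and then invoke the standard de~Rham facts, with local exactness supplied by the Poincar\'e lemma. The paper additionally records an optional second route to local exactness that bypasses the flat frame entirely, namely the $\delta$-estimate for connection operators (Example~1.3.58 and Theorem~1.3.61 of~\cite{tarkhanov}), but your explicit Koszul-homotopy computation of the symbol and your closing remark that Spencer's conjecture is not needed are both accurate and consistent with the paper's argument.
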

\begin{proof}
As noted in Remark~\ref{rmk:tw-dr}, the twisted de~Rham complex is
locally (on sufficiently small contractible open sets) equivalent to
$\rk V$ copies of the ordinary de~Rham complex. To see the equivalence,
it suffices to locally choose a $D$-flat basis frame for $V$. Since all
of the desired properties, formal exactness, ellipticity and local
exactness are purely local, it suffices to check them for the ordinary
de~Rham complex. It is well known that each of these properties does
hold for the de~Rham complex, having served as a model example for each.
Formal exactness and ellipticity are discussed, for instance,
in~\cite{spencer,pommaret,tarkhanov} and~\cite[\textsection
XIX.4]{hoermander-III}. On the other hand, local exactness is
essentially the content of the Poincar\'e lemma~\cite{bott-tu}.

There is another way to establish local exactness that bypasses the
Poincar\'e lemma and does not require an explicit local choice of a
$D$-flat basis frame for $V$. In particular, as discussed for instance
in the given references, local exactness and ellipticity are independent
of such a choice. Then, local exactness follows provided the initial
operator of the complex, the connection operator $D\colon \Secs(V) \to
\Secs(T^*M\otimes V)$, satisfies the $\delta$-estimate.  According to
Example~1.3.58 of~\cite{tarkhanov}, any linear connection operator
satisfies the $\delta$-estimate. Hence, by Theorem~1.3.61
of~\cite{tarkhanov}, the twisted de~Rham complex is locally exact.
\end{proof}

As is well known in homological algebra, cochain maps and homotopies
between them are important concepts, the first because they descend to
cohomology, the second because equivalence up to homotopy descends to
isomorphism on cohomology. When dealing with differential complexes, it
becomes important to distinguish the case where the cochain maps and
homotopies are defined by differential operators. The most important
notion we will need is that of a \emph{formal homotopy equivalence}. Let
$(F_\bullet,f_\bullet)$ and $(G_\bullet,g_\bullet)$ be two differential
complexes. They are said to be \emph{formally homotopy equivalent}
provided there exist differential operators $e_i$, $h_i$, $u_i$ and
$v_i$ fitting into the diagram (we use the bundles to stand in for their
spaces of sections)
\begin{equation}
\begin{tikzcd}
	\cdots \ar{r} &
	F_{i-1} \ar[swap]{r}{f_i}
		\ar[shift left]{d}{u_{i-1}}
		\ar[<-,shift right,swap]{d}{v_{i-1}}
		\ar[bend right,dashed]{l} &
	F_i \ar[swap]{r}{f_{i+1}}
		\ar[shift left]{d}{u_i}
		\ar[<-,shift right,swap]{d}{v_i}
		\ar[bend right,dashed,pos=.57,swap]{l}{e_i} &
	F_{i+1} \ar{r}
		\ar[shift left]{d}{u_{i+1}}
		\ar[<-,shift right,swap]{d}{v_{i+1}}
		\ar[bend right,dashed,pos=.4,swap]{l}{e_{i+1}} &
	\cdots
		\ar[bend right,dashed]{l} \\
	\cdots \ar{r} &
	G_{i-1} \ar{r}{g_i}
		\ar[bend left,dashed]{l} &
	G_i \ar{r}{g_{i+1}}
		\ar[bend left,dashed,pos=.57]{l}{h_i} &
	G_{i+1} \ar{r}
		\ar[bend left,dashed,pos=.4]{l}{h_{i+1}} &
	\cdots
		\ar[bend left,dashed]{l}
\end{tikzcd} ,
\end{equation}
where the squares composed of solid arrows commute (cochain map
condition on $u_i$ and $v_i$) and the dashed arrows are homotopy
operators with respect to which $u_i$ and $v_i$ are quasi-inverses, $v_i
\circ u_i - \id = e_{i+1}\circ f_{i+1} + f_i\circ e_i$ and $u_i \circ
v_i - \id = h_{i+1}\circ g_{i+1} + g_i \circ h_i$.

\begin{lem}\label{lem:f-exact}
Consider two differential complexes $(F_\bullet,f_\bullet)$ and
$(G_\bullet,g_\bullet)$ that start in degree $0$, also denote the
corresponding complexes of sheaves of sections as $\F_i \to \F_{i+1}$
and $\GG_i \to \GG_{i+1}$. Suppose that both differential complexes are
formally exact, except in degree $0$. Further, suppose that the
equations $f_1[\phi] = 0$ and $g_1[\gamma] = 0$, with $\phi \in
\Secs(F_0)$ and $\gamma \in \Secs(G_0)$, are equivalent, or in other
words the degree-$0$ cohomology sheaves are isomorphic to some given
sheaf $\F \cong \H^0(\F_\bullet) \cong \H^0(\GG_\bullet)$.

(i) Then there there exists a formal homotopy equivalence between these
differential complexes and their cohomologies are isomorphic, both with
unrestricted and compact supports (or any other kind of restriction on
supports):
\begin{gather}
	H^i(F_\bullet,f_\bullet) \cong H^i(G_\bullet,g_\bullet)
	\quad \text{and} \quad
	H^i_c(F_\bullet,f_\bullet) \cong H^i_c(G_\bullet,g_\bullet) .
\end{gather}

(ii) If one of the differential complexes is locally exact, then both
are locally exact and their cohomologies both compute the sheaf
cohomology of $\F$:
\begin{equation}
	H^i(M,\F) \cong H^i(F_\bullet,f_\bullet) \cong H^i(G_\bullet,g_\bullet) .
\end{equation}
\end{lem}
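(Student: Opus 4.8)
The plan is to recognize part~(i) as the comparison (``fundamental'') theorem of homological algebra, executed not in a module category but in the additive category whose objects are vector bundles over $M$ and whose morphisms are linear differential operators. In this category the enabling property is a \emph{division lemma} supplied by formal exactness: if $(F_\bullet,f_\bullet)$ is formally exact at the node $F_i$, then any differential operator $\Theta\colon\Secs(F_i)\to\Secs(W)$, for an auxiliary bundle $W$, with $\Theta\circ f_i = 0$ factors as $\Theta = \Xi\circ f_{i+1}$ for some differential operator $\Xi\colon\Secs(F_{i+1})\to\Secs(W)$. This is precisely the statement that, after prolongation, the exact bundle sequences of Appendix~\ref{app:jets} allow one to ``divide'' on the right by $f_{i+1}$ any operator annihilating $\ker f_{i+1}=\im f_i$. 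I would first isolate and state this lemma, since every subsequent step is an instance of it.

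With the division lemma in hand, I would construct the formal homotopy equivalence by induction on degree. The base case is furnished directly by the hypothesis that $f_1[\phi]=0$ and $g_1[\gamma]=0$ are equivalent: by the construction of Appendix~\ref{app:jets} (cf.~\eqref{eq:conn-equiv}) this equivalence is realized by differential operators, yielding $u_0,v_0$ together with $u_1,v_1$ and degree-$0$ homotopies $e_1,h_1$ satisfying $g_1\circ u_0 = u_1\circ f_1$, $f_1\circ v_0 = v_1\circ g_1$, $v_0\circ u_0 - \id = e_1\circ f_1$, and $u_0\circ v_0 - \id = h_1\circ g_1$. For the inductive step, given a cochain map defined up to degree $i$, the operator $\Theta = g_{i+1}\circ u_i$ satisfies $\Theta\circ f_i = g_{i+1}\circ g_i\circ u_{i-1} = 0$, so the division lemma produces $u_{i+1}$ with $g_{i+1}\circ u_i = u_{i+1}\circ f_{i+1}$; the maps $v_\bullet$ are built symmetrically using formal exactness of $G_\bullet$. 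The homotopies are obtained the same way: assuming $v_{i-1}u_{i-1}-\id = e_i f_i + f_{i-1}e_{i-1}$, a short computation shows that $\Phi_i := v_i u_i - \id - f_i e_i$ satisfies $\Phi_i\circ f_i = 0$, so the division lemma yields $e_{i+1}$ with $\Phi_i = e_{i+1}\circ f_{i+1}$, and likewise for $h_\bullet$. This assembles the full formal homotopy equivalence. Since every operator involved is a differential operator, it preserves supports; applying the equivalence to $\Secs(F_\bullet)$ and to $\Secs_c(F_\bullet)$ (or any support family stable under differential operators) produces chain homotopy equivalences of cochain complexes, hence the isomorphisms $H^i(F_\bullet,f_\bullet)\cong H^i(G_\bullet,g_\bullet)$ and $H^i_c(F_\bullet,f_\bullet)\cong H^i_c(G_\bullet,g_\bullet)$.

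For part~(ii), suppose $(F_\bullet,f_\bullet)$ is locally exact, i.e.\ the sheaf complex $\F_\bullet$ is exact in positive degrees. I would transfer local exactness to $\GG_\bullet$ by the usual zig-zag: because $u_\bullet,v_\bullet,h_\bullet$ restrict to every open $U$, a local cocycle $\gamma\in\GG_i(U)$ with $g_{i+1}\gamma=0$ and $i\ge 1$ obeys $\gamma = u_i v_i\gamma - g_i(h_i\gamma)$; here $v_i\gamma$ is an $\F_\bullet$-cocycle, so local exactness of $\F_\bullet$ gives $\beta$ on some smaller $U'\ni x$ with $v_i\gamma = f_i\beta$, whence $u_i v_i\gamma = u_i f_i\beta = g_i(u_{i-1}\beta)$ and $\gamma = g_i(u_{i-1}\beta - h_i\gamma)$ is a coboundary on $U'$. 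Thus $(G_\bullet,g_\bullet)$ is locally exact, and by symmetry the roles of $F$ and $G$ may be exchanged. Both complexes are therefore locally exact resolutions of $\F$ by sheaves of sections, which are soft and hence acyclic by Proposition~\ref{prp:sh-res}(ii),(iii); Proposition~\ref{prp:sh-res}(i) then identifies their cohomologies with sheaf cohomology, $H^i(M,\F)\cong H^i(\Secs(\F_\bullet)) = H^i(F_\bullet,f_\bullet)$ and likewise for $G$, which together with part~(i) gives the triple isomorphism.

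The main obstacle is the division lemma itself, and more precisely the verification that the comparison theorem can be run with differential operators as morphisms. The delicate point is not the formal book-keeping of the zig-zag identities but the claim that an operator annihilating $\ker f_{i+1}$ genuinely factors through $f_{i+1}$ \emph{as a differential operator}, rather than merely on solution spaces; this is exactly where formal exactness—the exactness of the prolonged jet sequences—is indispensable, and where the separately constructed base case of~\eqref{eq:conn-equiv} is needed, since formal exactness is assumed to fail in degree~$0$. Everything else—the support-preservation of differential operators and the passage to sheaf cohomology via Proposition~\ref{prp:sh-res}—is then routine.
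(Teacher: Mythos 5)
Your proposal is correct and takes essentially the same route as the paper's own proof: the base case extracted from the equivalence of the equations $f_1[\phi]=0$ and $g_1[\gamma]=0$ (the diagram~\eqref{eq:conn-equiv} of Appendix~\ref{app:jets}), the inductive construction of $u_i$, $v_i$, $e_i$, $h_i$ using the factorization of differential operators granted by formal exactness, preservation of supports by differential operators for the compact-support isomorphisms, and Proposition~\ref{prp:sh-res} for part~(ii). The only difference is that you make explicit the zig-zag argument transferring local exactness from one complex to the other, a step the paper's proof of part~(ii) leaves implicit in asserting that both complexes resolve $\F$.
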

\begin{proof}
(i) Equivalence of the equations $f_1[\phi]=0$ and $g_1[\gamma]=0$ means
(Appendix~\ref{app:jets}) that there exist differential operators, say
$u_0\colon \Secs(F_0) \to \Secs(G_0)$ and $v_0\colon \Secs(G_0)\to
\Secs(F_0)$, such that $v_0 \circ u_0[\phi] = 0$ whenever $f_1[\phi] =
0$ and such that $u_0 \circ v_0 [\gamma] = 0$ whenever $g_1[\phi] = 0$.
In other words, there exist differential operators $e_1\colon \Secs(F_1)
\to \Secs(F_0)$ and $h_1\colon \Secs(G_1) \to \Secs(G_0)$ such that $v_0
\circ u_0 = e_1 \circ f_1$ and $u_0 \circ v_0 = h_1 \circ g_1$. These
differential operators are the initial step in establishing the desired
formal homotopy equivalence.

We proceed by a standard induction argument from homological algebra (in
fact, a version of this argument proves the independence of sheaf
cohomology from the injective resolution used to compute it). Assume
that all the desired differential operators have been defined up to
$e_i$, $h_i$, $u_{i-1}$ and $v_{i-1}$, which also satisfy the desired
identities. We can easily verify the identities
\begin{align}
	(g_i \circ u_{i-1}) \circ f_{i-1}
		&= (g_i \circ g_{i-1}) \circ u_{i-2} = 0 , \\
	(f_i \circ v_{i-1}) \circ g_{i-1}
		&= (f_i \circ f_{i-1}) \circ v_{i-2} = 0 ,
\end{align}
which together with the formal exactness of the compositions $f_i
\circ f_{i-1} = 0$ and $g_i \circ g_{i-1} = 0$ imply the
factorizations $g_i \circ u_{i-1} = u_i \circ f_i$ and $f_i \circ
v_{i-1} = v_i \circ g_i$, for some differential operators $u_i \colon
\Secs(F_i) \to \Secs(G_i)$ and $v_i \colon \Secs(G_i) \to \Secs(F_i)$
(see Appendix~\ref{app:jets}). Further, we can also verify the
identities
\begin{align}
	(v_i \circ u_i - \id - f_i \circ e_i) \circ f_i
		&= (v_i \circ g_i) \circ u_{i-1} - f_i - f_i\circ e_i \circ f_i \\
		&= f_i \circ (v_{i-1} \circ u_{i-1}) - f_i - f_i\circ e_i \circ f_i
		= 0 , \\
	(u_i \circ v_i - \id - g_i \circ h_i) \circ g_i
		&= (u_i \circ f_i) \circ v_{i-1} - g_i - g_i\circ h_i \circ g_i \\
		&= g_i \circ (u_{i-1} \circ v_{i-1}) - g_i - g_i\circ h_i \circ g_i
		= 0 ,
\end{align}
which again together with formal exactness imply the factorizations $v_i
\circ u_i - \id - f_i \circ e_i = e_{i+1} \circ f_{i+1}$ and $u_i \circ
v_i - \id - g_i \circ h_i = h_{i+1} \circ g_{i+1}$, for some
differential operators $e_{i+1} \colon \Secs(F_{i+1}) \to \Secs(F_i)$
and $h_{i+1} \colon \Secs(G_{i+1}) \to \Secs(G_i)$. This concludes the
inductive step.

Now, let us consider the cohomology of these complexes,
$H^i(F_\bullet,f_\bullet) = H^i(\Secs(F_\bullet), f_\bullet)$ and
$H^i(G_\bullet,g_\bullet) = H^i(\Secs(G_\bullet), g_\bullet)$. As is
well known from homological algebra, a homotopy equivalence (of which a
formal homotopy equivalence is a special kind) induces an isomorphism in
cohomology: $H^i(F_\bullet, f_\bullet) \cong H^i(G_\bullet,g_\bullet)$.
However, if the operators implementing the homotopy equivalence are
differential operators, as in this case, we can replace unrestricted
sections $\Secs(-)$ by sections with compact supports $\Secs_c(-)$, so
that $H^i_c(F_\bullet,f_\bullet) = H^i(\Secs_c(F_\bullet), f_\bullet)$
and $H^i_c(G_\bullet,g_\bullet) = H^i(\Secs_c(G_\bullet), g_\bullet)$.
The homotopy equivalence of the resulting complexes still holds because
differential operators do not increase supports, and so we still have an
isomorphism in cohomology: $H^i_c(F_\bullet, f_\bullet) \cong
H^i_c(G_\bullet,g_\bullet)$. Incidentally, instead of compact supports,
any other family of supports would do as well.

(ii) By the local exactness hypothesis, both differential complexes
provide resolutions of the sheaf $\F$ (which happens to be isomorphic to
the solution sheaves $\S_{f_1} = \H^0(\F_\bullet)$ and $\S_{g_1} =
\H^0(\GG_\bullet)$). Then, by Proposition~\ref{prp:sh-res}, these
resolutions are acyclic and hence the corresponding cohomologies with
unrestricted supports compute the sheaf cohomology of $\F$. This
concludes the proof.
\end{proof}

\subsection{Generalized Poincar\'e duality}\label{sec:dual}
In Section~\ref{sec:sh-res}, we discussed how the cohomology
$H^i(F_\bullet,f_\bullet)$ of a differential complex can, under optimal
conditions, be equated with the cohomology $H^i(\F)$ of the sheaf
resolved by $(F_\bullet,f_\bullet)$. However, even under optimal
conditions, this connection breaks down if we consider cohomology
$H^i_c(F_\bullet,f_\bullet)$ with compact (or some other family of)
supports instead of unrestricted ones. What we discuss below is a way to
relate cohomology with compact supports to that with unrestricted
supports, a kind of Poincar\'e duality.

For the de~Rham complex on a manifold $M$, $\dim M = n$, a well known
formulation of Poincar\'e duality is the isomorphism $H^p(M) \cong
H^{n-p}_c(M)^*$~\cite[Rmk.5.7]{bott-tu} between the linear dual of
cohomology in degree-$p$ and compactly supported cohomology in
degree-$(n-p)$. This isomorphism is induced by the existence of a
non-degenerate natural pairing between $p$-forms and $(n-p)$-forms on
$M$ and its non-degenerate descent to cohomology. The goal of this
section is to leverage the properties of the Calabi complex and its
formal adjoint complex that were discussed in the preceding section to
demonstrate a generalized version of Poincar\'e duality, which
effectively computes the cohomology with compact supports in terms of
sheaf cohomology.

There are two ways to establish generalized Poincar\'e duality for a
differential complex $(F_\bullet,f_\bullet)$ that would be applicable to
the case of the Calabi complex and its formal adjoint. One of them,
discussed in Section~\ref{sec:tw-dr-dual}, relies on the fact that the
corresponding complex of sheaves resolves the sheaf of solutions of a
regular differential equation of finite type (a locally constant sheaf).
This method is somewhat more elementary. The other, discussed in
Section~\ref{sec:elliptic-dual}, works for any elliptic complex, but
requires some results from functional analysis and distribution theory.
Either of these results, as will be shown in
Section~\ref{sec:calabi-cohom}, can be applied to prove generalized
Poincar\'e duality for the Calabi complex and its formal adjoint
complex.

\subsubsection{Twisted de~Rham complex}\label{sec:tw-dr-dual}
First, we will discuss the twisted de~Rham complex, as introduced in
Section~\ref{sec:tw-dr}. The results will then apply to the Calabi
complex and its formal adjoint by virtue of Lemma~\ref{lem:f-exact}.
The strategy is straightforward and reproduces the logic of the proofs
of the ordinary Poincar\'e duality, cf.~\cite[\textsection 5]{bott-tu},
\cite[Ch.11]{spivak}, or~\cite[Sec.V.4]{ghv}. First, generalized
Poincar\'e duality is shown to hold on contractible open patches. Then,
given a ``good cover'' of the manifold consisting of such patches, we
use a version of the Mayer-Vietoris exact sequence as an inductive step
to conclude that generalized Poincar\'e duality also holds on the entire
manifold.

First, recall that we denote the fiber of the vector bundle $V\to M$ by
$\bar{V}$. Then, $\bar{V}^*$ is the fiber of the dual vector bundle
$V^*\to M$. We are interested in the relation between the cohomology of
the twisted de~Rham complex $H^i(\Lambda^\bullet M \otimes V,D)$ and the
compactly supported cohomology of the formal adjoint complex, which
happens to be $(\Lambda^\bullet M \otimes V^*,D)$, where the connection
$D$ has been extended to $V^*\to M$ by the rule $\d(\xi\cdot \psi) =
(D\xi) \cdot \psi + \xi \cdot (D\psi)$, with $\xi\in \Secs(V^*)$ and
$\psi\in \Secs(V)$. Presuming that $M$ is oriented, which is a
prerequisite for integrating top-degree forms, there is a duality
pairing between elements of $\Secs(\Lambda^p M \otimes V)$ and
$\Secs_c(\Lambda^{n-p} M \otimes V^*)$ given by the formula
\begin{equation}\label{eq:tw-dr-pair}
	\langle \xi, \psi \rangle = \int_M \langle \xi \wedge \psi \rangle ,
\end{equation}
where $\langle (\alpha \otimes \xi) \wedge (\beta \otimes \psi) \rangle
= (\alpha \wedge \beta) \otimes (\xi \cdot \psi)$.  The formal adjoint
relation is established (up to signs) for $\xi\in
\Secs(\Lambda^{n-p-1}M\otimes V^*)$ and $\psi \in \Secs(\Lambda^p M
\otimes V)$ by the identity
\begin{equation}
	\d\langle \xi \wedge \psi \rangle
	= \langle (D\xi) \wedge \psi \rangle
		- (-1)^{n-p} \langle \xi \wedge (D\psi) \rangle .
\end{equation}
\begin{lem}\label{lem:triv-dual}
Let $U\sse M$ be an oriented contractible open set. Then, generalized
Poincar\'e duality holds, $H^p(\Lambda^\bullet M\otimes V|_U,D) \cong
H^{n-p}_c(\Lambda^\bullet M\otimes V^*|_U,D)^*$, because all of the
cohomology spaces vanish except $H^0(\Lambda^\bullet M\otimes V,D) \cong
\bar{V}$ and $H^n_c(\Lambda^\bullet M\otimes V^*,D) \cong \bar{V}^*$.
\end{lem}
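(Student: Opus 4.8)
The plan is to reduce the twisted de~Rham complex on a contractible set to the ordinary de~Rham complex and invoke classical Poincar\'e duality.

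The plan is to reduce the twisted de~Rham complex on the contractible patch $U$ to $\rk V$ copies of the ordinary de~Rham complex, compute with the classical Poincar\'e lemmas, and then check that the pairing~\eqref{eq:tw-dr-pair} restricts to a fiberwise evaluation. First I would use Remark~\ref{rmk:tw-dr} and the argument in the proof of Proposition~\ref{prp:tw-dr-exact}: since $U$ is contractible it is simply connected, so the flat connection $D$ has trivial monodromy and $V|_U$ admits a global $D$-flat frame $e_1,\dots,e_r$ with $r = \rk V = \dim\bar V$. In this frame $V|_U \cong U\times\bar V$ and $D$ acts as the ordinary exterior differential $\d$ componentwise. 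Hence $(\Lambda^\bullet M\otimes V|_U,D)$ is isomorphic, as a complex, to $r$ copies of $(\Lambda^\bullet U,\d)$, and likewise $(\Lambda^\bullet M\otimes V^*|_U,D)$ to $r$ copies of $(\Lambda^\bullet U,\d)$ via the dual frame $e^1,\dots,e^r$. These frame isomorphisms are support-preserving, so they also identify the compactly supported complexes.

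Next I would read off the two cohomologies from the scalar case. The ordinary Poincar\'e lemma gives $H^p(\Lambda^\bullet U,\d)\cong\R$ for $p=0$ and $0$ otherwise, so $H^p(\Lambda^\bullet M\otimes V|_U,D)\cong\bar V$ for $p=0$ and vanishes for $p>0$; the degree-$0$ classes are exactly the $D$-parallel sections, which in the flat frame are the constant $\bar V$-valued functions. For compact supports I would invoke the compactly supported Poincar\'e lemma (as in~\cite[\textsection 4]{bott-tu}, valid for $U$ diffeomorphic to $\R^n$, which is the relevant case for the patches of a good cover), giving $H^k_c(\Lambda^\bullet U,\d)\cong\R$ for $k=n$ and $0$ otherwise. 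Tensoring with the dual frame yields $H^{n-p}_c(\Lambda^\bullet M\otimes V^*|_U,D)\cong\bar V^*$ for $p=0$ and $0$ for $p>0$, where a surviving degree-$n$ class is determined by the fiberwise integrals of its components.

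Finally I would verify that the pairing descends to a perfect pairing between the two surviving groups. Writing a degree-$0$ class as a constant section $\psi=\sum_i c_i e_i$ and a compactly supported degree-$n$ representative as $\xi=\sum_j\omega_j\otimes e^j$, the fiber contraction $e^j\cdot e_i=\delta^j_i$ collapses $\langle\xi\wedge\psi\rangle$ to the top form $\sum_j c_j\,\omega_j$, so $\langle\xi,\psi\rangle=\sum_j c_j\int_U\omega_j$. This is precisely the evaluation pairing between $\bar V$ (coordinates $c_j$) and $\bar V^*$ (coordinates $\int_U\omega_j$), hence non-degenerate, giving $H^0(\Lambda^\bullet M\otimes V|_U,D)\cong H^n_c(\Lambda^\bullet M\otimes V^*|_U,D)^*$; in all other degrees both sides vanish, so the asserted isomorphism $H^p\cong(H^{n-p}_c)^*$ holds trivially.

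The hard part will be the compactly supported half, because contractibility alone does not determine $H^\bullet_c$ (that functor is not homotopy invariant). I must therefore use that the patch is diffeomorphic to $\R^n$ --- or equivalently appeal to ordinary Poincar\'e duality on $U$ --- and then carefully identify the surviving degree-$n$ group with $\bar V^*$ through integration, so that the collapse of $\langle\xi\wedge\psi\rangle$ produces exactly the fiber evaluation pairing and the descent to cohomology stays non-degenerate.
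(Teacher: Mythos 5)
Your proposal is correct and follows essentially the same route as the paper's proof: trivialize $V|_U$ by a global $D$-flat frame, reduce to $\rk V$ copies of the ordinary de~Rham complex, quote the Poincar\'e lemmas for unrestricted and compact supports, and observe that the pairing~\eqref{eq:tw-dr-pair} becomes the standard evaluation pairing. Your extra caution about the compactly supported half is well placed --- the paper asserts $H^{n-p}_c(U)$ from contractibility alone (which one can justify via classical Poincar\'e duality on the oriented manifold $U$), while in the actual application (Proposition~\ref{prp:tw-dr-dual}) the good-cover patches are diffeomorphic to $\R^n$, which is exactly the hypothesis you invoke.
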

\begin{proof}
As we have already noted in the proof of
Proposition~\ref{prp:tw-dr-exact}, a choice of a locally $D$-flat basis
frame for $V$ over $U\sse M$ identifies the twisted de~Rham complex with
$\rk V$ copies of the usual de~Rham complex. Since $U$ is contractible,
such a choice is always possible. Moreover, the
pairing~\eqref{eq:tw-dr-pair} reduces to the usual pairing between forms
and compactly supported forms of complementary degrees on an oriented
manifold. Thus, we can easily conclude that
\begin{align}
	H^p(\Lambda^\bullet M\otimes V|_U, D) &= H^p(U)\otimes \bar{V} , \\
	H^{n-p}_c(\Lambda^\bullet M\otimes V^*|_U, D) &= H^{n-p}_c(U) \otimes \bar{V}^* .
\end{align}
Recalling that, for contractible $U$, $H^p(U) = 0$ except for $H^0(U) =
\R$ and $H^{n-p}_c(U) = 0$ except for $H^n_c(U) = \R$, concludes the
proof.
\end{proof}

\begin{lem}[Mayer-Vietoris]\label{lem:mv}
Consider two open subsets $U,W \sse M$. We have the following long exact
sequences in cohomology with unrestricted and compact supports, which we
shall for brevity denote as $H^i(-) = H^i(\Lambda^\bullet M \otimes
V|_{-}, D)$ and $H^i_c(-) = H^i_c(\Lambda^\bullet M \otimes V^*|_{-},
D)$:
\begin{equation}
\begin{tikzcd}[column sep=scriptsize,row sep=scriptsize]
	0 \arrow{r} &
	H^0(U\cup W) \arrow{r} &
	H^0(U)\oplus H^0(W) \arrow{r}\arrow[draw=none]{d}[name=Z,shape=coordinate]{}&
	H^0(U\cap W)
		\arrow[rounded corners,
			to path={ -- ([xshift=2ex]\tikztostart.east)
			|- (Z) [near end]\tikztonodes
			-| ([xshift=-2ex]\tikztotarget.west)
			-- (\tikztotarget)}]{dll} \\
	&
	H^1(U\cup W) \arrow{r} &
	H^1(U)\oplus H^1(W) \arrow{r} &
	H^1(U\cap W) \arrow{r} & \cdots
\end{tikzcd}
\end{equation}
\begin{equation}
\begin{tikzcd}[column sep=scriptsize,row sep=scriptsize]
	0 \arrow{r} &
	H^0_c(U\cap W) \arrow{r} &
	H^0_c(U)\oplus H^0_c(W) \arrow{r}\arrow[draw=none]{d}[name=Z,shape=coordinate]{}&
	H^0_c(U\cup W)
		\arrow[rounded corners,
			to path={ -- ([xshift=2ex]\tikztostart.east)
			|- (Z) [near end]\tikztonodes
			-| ([xshift=-2ex]\tikztotarget.west)
			-- (\tikztotarget)}]{dll} \\
	&
	H^1_c(U\cap W) \arrow{r} &
	H^1_c(U)\oplus H^1_c(W) \arrow{r} &
	H^1_c(U\cup W) \arrow{r} & \cdots
\end{tikzcd}
\end{equation}
\end{lem}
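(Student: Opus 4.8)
The plan is to derive both long exact sequences from the zig-zag (connecting-homomorphism) lemma of homological algebra, applied to two short exact sequences of cochain complexes, exactly as in the proof of ordinary de~Rham Mayer--Vietoris \cite[\textsection 5]{bott-tu}. Abbreviate $A^p(-) = \Secs(\Lambda^p M\otimes V|_{-})$ and $A_c^p(-) = \Secs_c(\Lambda^p M\otimes V^*|_{-})$. For unrestricted supports I would, in each degree $p$, assemble
\begin{equation*}
0 \to A^p(U\cup W) \xrightarrow{r} A^p(U)\oplus A^p(W) \xrightarrow{s} A^p(U\cap W) \to 0 ,
\end{equation*}
with $r(\sigma) = (\sigma|_U,\sigma|_W)$ the pair of restrictions and $s(\alpha,\beta) = \alpha|_{U\cap W} - \beta|_{U\cap W}$ their difference. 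For compact supports, using extension by zero $\mathrm{ext}$ (well defined because a compactly supported section extends smoothly by zero across the complementary boundary), I would assemble the opposite-pointing sequence
\begin{equation*}
0 \to A_c^p(U\cap W) \xrightarrow{r'} A_c^p(U)\oplus A_c^p(W) \xrightarrow{s'} A_c^p(U\cup W) \to 0 ,
\end{equation*}
with $r'(\gamma) = (\mathrm{ext}\,\gamma, -\mathrm{ext}\,\gamma)$ and $s'(\alpha,\beta) = \mathrm{ext}\,\alpha + \mathrm{ext}\,\beta$.

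The first thing to check is that $r,s$ and $r',s'$ are cochain maps, i.e.\ that they commute with the twisted differential $D$. This is immediate because $D$ is a first-order \emph{local} differential operator: it commutes with restriction to open subsets and with extension by zero of compactly supported sections, and the signs in $s$ and $r'$ are chosen precisely to absorb the Koszul signs so the squares commute on the nose. Next I would verify exactness degreewise. For the unrestricted sequence, injectivity of $r$ and exactness at the middle term are the gluing (descent) property of the sheaf of sections recalled in Section~\ref{sec:sh-lc}: a pair $(\alpha,\beta)$ agreeing on $U\cap W$ glues to a unique section on $U\cup W$. Surjectivity of $s$ uses a smooth partition of unity $\{\rho_U,\rho_W\}$ subordinate to $\{U,W\}$: given $\gamma\in A^p(U\cap W)$, the section $\rho_W\gamma$ extends by zero to $U$, the section $\rho_U\gamma$ extends by zero to $W$, and $s(\rho_W\gamma, -\rho_U\gamma) = (\rho_U+\rho_W)\gamma = \gamma$. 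For the compact-support sequence the roles are dual: injectivity of $r'$ and middle exactness are elementary support bookkeeping, while surjectivity of $s'$ again uses the partition of unity, writing $\gamma = \rho_U\gamma + \rho_W\gamma$ with each summand compactly supported in one of $U$, $W$.

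With the two short exact sequences of complexes in hand, the zig-zag lemma directly produces the asserted long exact sequences, including the connecting homomorphisms that raise the cohomological degree by one; the leading $0 \to H^0$ initial segments follow from the fact that both complexes start in degree $0$, so $H^{-1}$ vanishes. The main point to get right---and essentially the only nontrivial step---is the surjectivity of $s$ and $s'$, which is where the partition of unity, hence the softness of the sheaves of smooth sections (Proposition~\ref{prp:sh-res}(iii)), enters. It is worth emphasizing that the twisting by the flat bundle $(V,D)$ introduces no new difficulty relative to the ordinary de~Rham case: although multiplication by a cutoff $\rho$ does not commute with $D$ (one has the Leibniz rule $D(\rho\sigma) = \d\rho\wedge\sigma + \rho\,D\sigma$), the cutoffs are used only to establish degreewise surjectivity of the sheaf maps and not to build chain homotopies, so the Leibniz correction term never enters the construction of the short exact sequences. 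Everything else is formal.
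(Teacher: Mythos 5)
Your proposal is correct and follows essentially the same route as the paper's proof: both long exact sequences are derived, via the zig-zag lemma, from the degreewise short exact sequences built from restriction maps (unrestricted supports) and extension-by-zero maps (compact supports), with exactness supplied by the sheaf gluing property and a partition of unity, and with the observation that these maps commute with $D$. Your extra remarks---the explicit verification of surjectivity and the point that the Leibniz correction $\d\rho\wedge\sigma$ never enters because cutoffs are used only at the level of the short exact sequences, not to build homotopies---are sound refinements of the same argument.
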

\begin{proof}
Both long exact sequences in cohomology follow from short exact
sequences of cochain complexes. These short exact sequences, where for
brevity we write $\Secs^i(-) = \Secs(\Lambda^i M\otimes V|_-)$ and
$\Secs^i_c = \Secs(\Lambda^i M \otimes V^*|_-)$ are
\begin{equation}
\begin{tikzcd}[column sep=scriptsize,row sep=scriptsize]
	0 \arrow{r} &
	\Secs^i(U\cup W) \arrow{r} &
	\Secs^i(U) \oplus \Secs^i(W) \arrow{r} &
	\Secs^i(U\cap W) \arrow{r} &
	0 ,
\end{tikzcd}
\end{equation}
\begin{equation}
\begin{tikzcd}[column sep=scriptsize,row sep=scriptsize]
	0 \arrow{r} &
	\Secs^i_c(U\cap W) \arrow{r} &
	\Secs^i_c(U) \oplus \Secs^i_c(W) \arrow{r} &
	\Secs^i_c(U\cup W) \arrow{r} &
	0 .
\end{tikzcd}
\end{equation}
In the first sequence, the maps are restrictions, $\alpha \mapsto
(\alpha|_U,\alpha|_W)$ and $(\alpha,\beta) \mapsto (\alpha|_{U\cap W} -
\beta|_{U\cap W})$. The exactness follows from the usual ability to
restrict and glue together smooth sections over open regions, also known
as their sheaf property. In the second sequence, the maps are
extensions by zero, $\alpha \mapsto (\alpha^U_0, \alpha^W_0)$ and
$(\alpha,\beta) \mapsto \alpha^{U\cup W}_0 - \beta^{U\cup W}_0$. The
exactness follows from the existence of a smooth partition of unity
adapted to the cover of $U\cup W$ by $U$ and $W$.

These maps are clearly compatible with the connection differential
operator $D$ and so are cochain maps. The general connection between
short exact sequences of cochain complexes and long exact sequences in
cohomology (Appendix~\ref{app:homalg}) gives the desired long exact
sequences and concludes the proof.
\end{proof}

\begin{prop}\label{prp:tw-dr-dual}
Given a flat vector bundle $(V,D)$ on an oriented $n$-dimensional
orientable manifold $M$, the unrestricted cohomology $H^p =
H^p(\Lambda^\bullet M \otimes V, D)$ of the associated twisted de~Rham
complex and the compactly supported cohomology $H^{n-p}_c =
H^{n-p}_c(\Lambda^\bullet M\otimes V^*, D)$ of its formal adjoint
complex satisfy generalized Poincar\'e duality:
\begin{equation}
	H^p \cong (H^{n-p}_c)^* .
\end{equation}
\end{prop}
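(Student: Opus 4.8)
The plan is to reproduce the classical inductive proof of Poincar\'e duality by Mayer--Vietoris and the five lemma, in the style of \cite[\textsection 5]{bott-tu}, using Lemma~\ref{lem:triv-dual} as the local input and Lemma~\ref{lem:mv} as the inductive engine. First I would observe that the pairing \eqref{eq:tw-dr-pair} is defined at the cochain level and, by the formal adjoint relation recorded just above Lemma~\ref{lem:triv-dual} together with Stokes' theorem on the oriented $M$, descends to a well-defined bilinear pairing $H^p \times H^{n-p}_c \to \R$. Equivalently it furnishes a natural linear map $PD^p_M \colon H^p \to (H^{n-p}_c)^*$, and the goal is to prove that $PD^p_M$ is an isomorphism for all $p$. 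Lemma~\ref{lem:triv-dual} is exactly the assertion that $PD^p_U$ is an isomorphism when $U$ is contractible: both sides vanish outside the two relevant degrees, where the pairing reduces to the nondegenerate pairing $\bar{V}\times \bar{V}^* \to \R$.

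Next I would compare the two long exact sequences of Lemma~\ref{lem:mv}. Dualizing the compactly supported sequence term by term is permissible, since these are exact sequences of real vector spaces and dualization is exact over a field; this produces a long exact sequence built from the spaces $(H^{n-\bullet}_c)^*$. The collection of maps $PD^\bullet$ then assembles the unrestricted Mayer--Vietoris sequence and the dualized compactly supported sequence into a commutative ladder, where the only subtlety is a fixed sign in the squares linking the two connecting homomorphisms. That commutativity is precisely the statement that the connecting map of one sequence is, under the pairing, adjoint to that of the other, which one checks directly from the explicit restriction/extension-by-zero descriptions and the partition of unity used in the proof of Lemma~\ref{lem:mv}. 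Granting this, the five lemma yields the inductive step: if $PD$ is an isomorphism for $U$, for $W$, and for $U\cap W$, then it is an isomorphism for $U\cup W$.

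With the base case and the inductive step in hand, the proposition follows for any $M$ admitting a \emph{finite} good cover by induction on the number of patches. A single patch is contractible, so the base case applies; a manifold covered by $\{U_1,\dots,U_N\}$ is written as $(U_1\cup\cdots\cup U_{N-1})\cup U_N$, and the intersection $(U_1\cup\cdots\cup U_{N-1})\cap U_N$ inherits a good cover with fewer patches, so the inductive hypothesis combined with the five-lemma step completes the passage to $U_1\cup\cdots\cup U_N$.

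The hard part will be removing the finiteness assumption, since a general orientable $M$ need not be of finite type. Here I would exhaust $M$ by an increasing sequence of open subsets, each of finite type, and pass to the limit, using that compactly supported cohomology commutes with direct limits along such open inclusions while unrestricted cohomology is governed by the complementary inverse limits; taking duals then matches $(H^{n-p}_c)^*$ with $\varprojlim H^p$. The main obstacle is precisely this limiting argument: one must check that the relevant derived inverse limits $\varprojlim^1$ vanish (for instance via a Mittag--Leffler condition available for a countable good cover), so that dualization genuinely interchanges with the direct limit and introduces no spurious cohomology. Once that interchange is justified, the finite-type isomorphisms assemble to the desired global isomorphism $H^p \cong (H^{n-p}_c)^*$, which by Lemma~\ref{lem:f-exact} then transports to the Calabi complex and its formal adjoint.
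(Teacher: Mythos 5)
Your proposal coincides with the paper's own proof through the finite-type stage: the same descent of the pairing \eqref{eq:tw-dr-pair} to a map $H^p \to (H^{n-p}_c)^*$, the same base case from Lemma~\ref{lem:triv-dual}, and the same Mayer--Vietoris ladder from Lemma~\ref{lem:mv} (dualizing the compactly supported sequence, which is legitimate since linear duality over a field is exact), closed by the $5$-lemma and induction over a finite good cover. Where you genuinely diverge is the passage to manifolds with no finite good cover. The paper follows \cite[Sec.V.4]{ghv}: it first extends the duality to countable \emph{disjoint} unions of finite unions of good-cover sets --- this is exactly where the asymmetry of the statement matters, since $H^p$ of a disjoint union is a direct product, $H^{n-p}_c$ is a direct sum, and the dual of a direct sum is a direct product --- and then runs the $5$-lemma once more over a coarse \emph{finite} cover whose intersections are of that form, thereby avoiding inverse limits altogether. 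Your route instead exhausts $M$ by finite-type opens $M_1 \sse M_2 \sse \cdots$ and passes to the limit; this is a viable alternative, but the two points you flag as the ``hard part'' do need to be supplied precisely: (a) the Milnor sequence
\begin{equation*}
	0 \to {\varprojlim}^1\, H^{p-1}(M_i) \to H^p(M) \to \varprojlim H^p(M_i) \to 0
\end{equation*}
requires Mittag--Leffler at the cochain level, which holds provided the exhaustion is chosen with $\overline{M_i} \sse M_{i+1}$, so that cutoff functions show the images of the restriction maps of twisted forms stabilize after one step; and (b) the vanishing of ${\varprojlim}^1 H^{p-1}(M_i)$, which follows from Mittag--Leffler for the cohomology system and is automatic here because each $H^{p-1}(M_i)$ is finite dimensional for finite-type $M_i$ --- the same finiteness the paper invokes to get the reverse duality $(H^p)^* \cong H^{n-p}_c$ in the finite-type case. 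One small conceptual correction: the ${\varprojlim}^1$ issue is not about dualization commuting with the direct limit (the identification $(\varinjlim H_c^{n-p}(M_i))^* \cong \varprojlim\, (H_c^{n-p}(M_i))^*$ is automatic); it is about identifying $H^p(M)$ with $\varprojlim H^p(M_i)$. With those details filled in, your argument is correct; the paper's disjoint-union trick buys complete avoidance of derived limits at the cost of some covering combinatorics, while your exhaustion argument is more direct but leans on the finite-dimensionality input and a careful choice of exhaustion.
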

Note the asymmetry of the isomorphism. The reverse identity $(H^p)^*
\cong H^{n-p}_c$ also holds when the cohomology vector spaces are finite
dimensional, but in general may not when they are infinite dimensional.
\begin{proof}
In this proof, we shall use induction over a special kind of open cover of
$M$. An open cover $(U_k)$ of $M$ is called \emph{good} if it is
locally finite, every nonempty finite intersection $U_{k_0} \cap \cdots
\cap U_{k_m}$ is diffeomorphic to $\R^n$, and it is closed under finite
intersections. In particular, each of the $U_k$ is itself diffeomorphic
to $\R^n$ and thus contractible. Good covers are known to exist for any
manifold~\cite[Thm.5.1]{bott-tu}. Inducing an orientation on each
element of the cover from the orientation on $M$,
Lemma~\ref{lem:triv-dual} establishes the desired duality relation for
any $U_k$ and thus the initial step of the inductive argument.

Next, we show, provided the desired duality relation holds on any finite
union $U_{k_0} \cup \cdots \cup U_{k_{m-1}}$ of $m$ sets, that it also
holds on any finite union $U_{k_0} \cup \cdots \cup U_{k_m}$ of $m+1$
sets as well. Of course, we take all such unions to be oriented in a way
compatible with the global orientation on $M$. Let $U = U_{k_m}$, $W =
U_{k_0} \cup \cdots \cup U_{k_{m-1}}$ and notice that both $W$ and
$W\cap U$ are finite unions of $m$ sets from the cover (recall that the
cover is closed under intersections). The fact that the
pairing~\eqref{eq:tw-dr-pair}, well defined on a given oriented, open
$U\sse M$, descends to cohomology means that we always have a mapping
$H^p(U) \to H_c^{n-p}(U)^*$, which may or may not be an isomorphism.
It is in fact an isomorphism on $U$ and, by the inductive hypothesis,
also on $W$ and $W\cap U$. Combining the long exact sequences of
Lemma~\ref{lem:mv} for $W$ and $U$ together with these maps and
isomorphisms, we obtain the following diagram (notice the arrow reversal
by linear duality in the second row):
\begin{equation*}
\begin{tikzpicture}[baseline=(diag).base]
\node[scale=.64] (diag) at (0,0){
\begin{tikzcd}[column sep=small]
	H^p(W\cup U) \ar{r} \ar[equal]{d} &
	H^p(W) \oplus H^p(U) \ar{r} \ar[equal]{d} &
	H^p(W\cap U) \ar{r} \ar{d} &
	H^{p+1}(W\cup U) \ar{r} \ar[equal]{d} &
	H^{p+1}(W) \oplus H^{p+1}(U) \ar[equal]{d}
	\\
	H^{n-p}_c(W\cup U)^* \ar{r} &
	H^{n-p}_c(W)^* \oplus H^{n-p}_c(U)^* \ar{r} &
	H^{n-p}_c(W\cap U)^* \ar{r} &
	H^{n-p-1}_c(W\cup U)^* \ar{r} &
	H^{n-p-1}_c(W)^* \oplus H^{n-p-1}_c(U)^*
\end{tikzcd}
};
\end{tikzpicture}
\end{equation*}
Thus, by the $5$-lemma (Appendix~\ref{app:homalg}), the map in the
center of the diagram is also an isomorphism and the inductive step is
established.

The only problem remaining is that a good cover is not always finite
(though it can be chosen to be finite for compact manifolds). There is a
way around that, however. Using a similar argument, one can show that
the desired duality holds also on disjoint countable unions of finite
unions of covering sets. It is at this stage that the asymmetry between
the cohomologies with unrestricted and compact supports appears. Then,
provided the manifold is second countable, one can choose a much
coarser, yet finite, cover $(U'_k)$. The key property of this cover is
that each of the non-empty finite intersections $U'_{k_0} \cap \cdots
\cap U'_{k_m}$ is itself either a finite union of sets from $(U_k)$ or a
disjoint countable union of those. The same $5$-lemma argument then
shows that the desired generalized Poincar\'e duality relation $H^p
\cong (H^{n-p}_c)^*$ holds on all of $M$. The technical details of this
argument can be found in~\cite[Sec.V.4]{ghv}.
\end{proof}

\subsubsection{Elliptic complexes and Serre duality}\label{sec:elliptic-dual}
Now we will discuss generic elliptic complexes, of which both the Calabi
and the twisted de~Rham complexes are special cases. The result is
essentially the same, though clearly more general. The arguments are
somewhat less elementary and rely on some background in functional
analysis and an result originally due to Serre~\cite{serre}. The Serre
duality method also gives some more information. Namely, that the
cohomology does not change if we replace smooth functions by
distributions with the same supports. Serre's original work was in the
context of the Dolbeault complex in the theory of several complex
variables. A good exposition of this result in the setting of general
elliptic complexes can be found in~\cite{tarkhanov}.

At this point it is convenient to recall some basic facts of
distribution theory~\cite{schwartz,treves,reed-simon}. Recall that, for
any vector bundle $F\to M$, we can interpret $\Secs(F)$ and $\Secs_c(F)$
as locally convex topological vector spaces, with the Whitney weak
Fr\'echet topology for the former and an inductive limit over supports
of similar Fr\'echet topologies for the latter, with the limit topology
still locally convex but no longer Fr\'echet (metrizable). These are the
usual topologies used in the theory of distributions. The spaces of
\emph{distributional sections} $\Do(F)$ and $\D(F)$ of $F$, with
respectively compact and unrestricted supports, are defined as
topological duals endowed with the strong topology (the usual
distributional topology), $\Do(F) = \Secs(\tilde{F}^*)^*$ and $\D(F) =
\Secs_c(\tilde{F}^*)^*$. Recall that $\tilde{F}^* = \Lambda^n M \otimes
F^*$ is the densitized dual bundle; the densitized dual of the
densitized dual is the original bundle. It so happens that, if we stick
with the strong topology for dual spaces, the topological dual of
$\Do(F)$ is again $\Secs(\tilde{F}^*)$ and that of $\D(F)$ is
$\Secs_c(\tilde{F}^*)$. So the spaces of smooth and distributional
sections are \emph{reflexive} (with respect to the strong topology).
Using the natural pairing
\begin{equation}
	\langle \psi , \alpha \rangle = \int_M \psi\cdot \alpha
\end{equation}
between $\psi\in \Secs(F)$ and $\alpha \in \Secs_c(\tilde{F}^*)$,
well-defined provided $M$ is oriented, we have the natural inclusions
$\Secs(F) \subset \D(F)$ and $\Secs_c(F) \subset \Do(F)$. By the
\emph{Schwartz kernel theorem}, the continuous maps $G\colon
\Secs_c(F_1) \to \D(F)$ are in bijection with \emph{bidistributions},
elements $G\in \D(F_2 \boxtimes \tilde{F}^*_2)$, where $F_2 \boxtimes
\tilde{F}^*_2 \to M\times M$ is the bundle with total space $F_2 \times
\tilde{F}^*_1$ and the obvious projection onto its base, by the formula
\begin{equation}
	(G\psi)(x) = \int_M G(x,y)\cdot \psi(y) .
\end{equation}
Let $\pi_1(x,y) = y$ and $\pi_2(x,y) = x$ denote the two projections
$M\times M \to M$. We say that a bidistribution $G\in \D(F_2\boxtimes
\tilde{F}^*_1)$ is \emph{properly supported} if $\pi_1\colon \supp G\to
M$ is a proper map (the preimage of a compact set is compact).
Differential operators define properly supported bidistributions,
because their support lies on the diagonal of $M\times M$ by the crucial
property that differential operators preserve supports. On the other
hand, properly supported bidistributions need not preserve supports,
though they still map compactly supported sections to compactly
supported distributions.  The amount by which the support of the image
grows depends on the size of the support of the bidistribution in
$M\times M$.

Once we have introduced distributional sections, we can extend to them
many operators that were previously defined only on smooth functions.
For instance, any linear differential operator $f\colon \Secs(F) \to
\Secs(E)$ between vector bundles $F\to M$ and $E\to M$ can be extended
to act on distributions, $f\colon \Secs'(F) \to \Secs'(E)$ or even
$f\colon \Secs_c'(F) \to \Secs_c'(E)$, according to the following
formula:
\begin{equation}
	\langle f[\alpha] , \psi \rangle = -\langle \alpha , f^*[\psi] \rangle ,
\end{equation}
for any $\psi \in \Secs_c(\tilde{F}^*)$ and $\alpha \in \D(F)$, where
$f^*$ is the formal adjoint of $f$ and $\langle -, - \rangle$ is the
natural dual pairing between sections and distributions. Since this
natural pairing is non-degenerate, it suffices to define $f$ on the
larger domain. Any other operator defined on smooth sections for which
the above formula applies can also be extended to distributions,
possibly with a restriction on their supports.

In particular, the operators of a differential complex
$(F_\bullet,f_\bullet)$ can be extended to distributional sections. Then
we can consider the cohomology of the complex in distributional
sections, $H^i(\D(F_\bullet), f_\bullet)$, which may a priori be
different from its cohomology in smooth sections $H^i(F_\bullet,
f_\bullet) = H^i(\Secs(F_\bullet), f_\bullet)$, and similarly with
compact supports. Below we shall see some sufficient conditions for the
cohomologies in smooth and distributional sections to coincide.

A crucial concept in the general theory of differential complexes is
that of a \emph{parametrix}~\cite[Ch.2]{tarkhanov}. Let the vector
bundles $F_i$ with differential operators $f_i\colon \Secs(F_{i-1}) \to
\Secs(F_i)$ constitute a differential complex $(F_\bullet,f_\bullet)$ on
$M$. Then, a \emph{parametrix} is a sequence of bidistributions $G_i \in
\D(F_{i-1} \boxtimes \tilde{F}^*_{i})$ such that
\begin{equation}\label{eq:parametrix}
	\id_i - Q_{i} = G_{i+1} \circ f_{i+1} + f_i \circ G_i ,
\end{equation}
where $\id_i \colon \Secs_c(F_i) \to \Secs_c(F_i)$ is the identity map
and $Q_i \in \Secs(F_{i+1} \boxtimes \tilde{F}^*_i) \subset \D(F_{i+1}
\boxtimes \tilde{F}^*_i)$ is a smooth bidistribution. We say that the
parametrix is \emph{properly supported} if each $G_i$ is a properly
supported bidistribution. Obviously, if each $G_i$ is properly
supported, then so is each $Q_i$.
\begin{prop}\label{prp:ell-param}
Let $(F_\bullet, f_\bullet)$ be an elliptic complex on an oriented
manifold $M$. (i) Then, for any open neighborhood $U \sse M\times M$ of
the diagonal $M\sso M\times M$, there exists a properly supported
parametrix $G_i\in \Secs'_c(F_{i-1} \boxtimes \tilde{F}^*_i)$ with
support $\supp G_i \sse U$. (ii) Then also, the cohomologies of smooth
and distributional sections are isomorphic:
\begin{equation}
	H^i(\Secs_c'(F_\bullet), f_\bullet) \cong H^i(F_\bullet,f_\bullet)
	\quad \text{and} \quad
	H^i(\Secs'(F_\bullet), f_\bullet) \cong H^i_c(F_\bullet,f_\bullet) .
\end{equation}
\end{prop}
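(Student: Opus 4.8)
The plan is to prove (i) by constructing a parametrix out of the Hodge--Laplacians of the complex and then localizing its kernel near the diagonal, and to deduce (ii) formally from the parametrix homotopy identity~\eqref{eq:parametrix}. For (i), I would fix an auxiliary Riemannian metric on $M$ together with Hermitian fiber metrics on the $F_i$, so that formal adjoints $f_i^*$ and Laplacians $\Delta_i = f_{i+1}^*\circ f_{i+1} + f_i\circ f_i^*$ are defined. Ellipticity of the complex says precisely that the symbol sequence is exact off the zero section, and a short symbol computation then shows each $\Delta_i$ is an elliptic operator of order $2k$. Standard elliptic pseudodifferential theory (see~\cite{tarkhanov} and~\cite[\textsection XIX.4]{hoermander-III}) furnishes a parametrix $P_i$ for each $\Delta_i$, i.e.\ a pseudodifferential operator with $\Delta_i P_i = P_i\Delta_i = \id_i - S_i$ and $S_i$ smoothing. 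The Hodge identity $\Delta_{i+1}\circ f_{i+1} = f_{i+1}\circ \Delta_i$, which follows at once from $f_{i+1}\circ f_i = 0$, forces $P_{i+1}\circ f_{i+1} = f_{i+1}\circ P_i$ modulo smoothing operators. Setting $G_i = f_i^*\circ P_i$ and using this commutation, I obtain
$f_i\circ G_i + G_{i+1}\circ f_{i+1} = (f_i\circ f_i^* + f_{i+1}^*\circ f_{i+1})\circ P_i + (\text{smoothing}) = \Delta_i P_i + (\text{smoothing}) = \id_i - Q_i$
with $Q_i$ smooth, which is exactly~\eqref{eq:parametrix}.

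To achieve the support localization $\supp G_i \sse U$, I would use that each $G_i$, being pseudodifferential, has Schwartz kernel smooth off the diagonal. Choosing a cutoff $\chi\in C^\oo(M\times M)$ that equals $1$ on a neighborhood of the diagonal, is supported in $U$, and has both projections $\supp\chi\to M$ proper, the replacement $G_i \mapsto \chi G_i$ alters $G_i$ only by a smoothing operator. This leaves~\eqref{eq:parametrix} valid after absorbing the change into $Q_i$ (which stays a smooth bidistribution), and it renders each $G_i$ properly supported with $\supp G_i\sse U$, proving (i).

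For (ii), the parametrix identity does all the work. Since $G_i$ is properly supported it extends to and preserves each section space, mapping smooth to smooth and distributional to distributional while respecting compact supports; since $Q_i$ is smoothing and properly supported it maps distributional sections into smooth ones (and compactly supported distributions into compactly supported smooth sections). I then run the standard homotopy argument on the inclusion of each smooth complex into the distributional complex of matching support class. For surjectivity: given a distributional cocycle $\alpha$, the identity gives $\alpha - Q_i\alpha = f_i(G_i\alpha) + G_{i+1}(f_{i+1}\alpha) = f_i(G_i\alpha)$, so $[\alpha] = [Q_i\alpha]$ with $Q_i\alpha$ a \emph{smooth} cocycle. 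For injectivity: if a smooth cocycle $\alpha = f_i\beta$ for distributional $\beta$, applying the identity in degree $i-1$ to $\beta$ and then $f_i$ yields $\alpha = f_i(Q_{i-1}\beta + G_i\alpha)$, where $Q_{i-1}\beta + G_i\alpha$ is smooth, so $\alpha$ is already a smooth coboundary. Carried out with unrestricted supports this gives $H^i(\Secs_c'(F_\bullet),f_\bullet) \cong H^i(F_\bullet,f_\bullet)$, and carried out verbatim with compact supports (using that here $\Secs'=\mathcal{E}'$ and $\Secs_c$ correspond) it gives $H^i(\Secs'(F_\bullet),f_\bullet) \cong H^i_c(F_\bullet,f_\bullet)$.

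The main obstacle is entirely in part (i): assembling the individual Laplacian parametrices into the single homotopy~\eqref{eq:parametrix}, which relies on the commutation-modulo-smoothing just described, and then truncating the kernels so that properness and the prescribed support $U$ are realized without spoiling the identity. Once a properly supported parametrix is available, part (ii) is a routine diagram chase, the only points needing care being the mapping properties of $G_i$ (properly supported, hence smoothness- and support-preserving in the relevant senses) and of $Q_i$ (smoothing, hence carrying distributions to smooth sections).
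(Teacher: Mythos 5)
Your overall strategy is reasonable, and your part (ii) is essentially the paper's own argument (the smoothing operators $Q_i$ together with the inclusions of smooth into distributional sections form a homotopy equivalence), just written out as an explicit diagram chase; your cutoff-near-the-diagonal trick for localizing the support is also exactly the paper's. But part (i) has a genuine gap. Your construction rests on the claim that each Laplacian $\Delta_i = f_{i+1}^*\circ f_{i+1} + f_i\circ f_i^*$ is elliptic ``of order $2k$'', and this is true only when all the operators $f_i$ have the \emph{same} order $k$. The proposition concerns arbitrary elliptic complexes, where the orders may differ, and the paper's target application --- the Calabi complex --- has exactly this feature: $B_1$ and $B_l$ ($l\ge 3$) are first order while $B_2$ is second order. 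If the order of $f_{i+1}$ exceeds that of $f_i$, then the principal symbol of $\Delta_i$ is just $(\sigma f_{i+1})^*\,\sigma f_{i+1}$, whose kernel contains $\im \sigma f_i \neq 0$ by exactness of the symbol sequence; so $\Delta_i$ is \emph{not} elliptic and standard elliptic theory furnishes no parametrix $P_i$. Concretely, for the Calabi complex $\Delta_1 = B_2^*\circ B_2 + B_1\circ B_1^*$ mixes orders $4$ and $2$ and fails to be elliptic. The repair --- replacing the summands by suitable powers (e.g.\ $f_{i+1}^*f_{i+1} + (f_i f_i^*)^2$ when the orders are $2$ and $1$), or composing with order-reducing operators, chosen coherently in $i$ so that the intertwining relations $\Delta_{i+1}\circ f_{i+1} = f_{i+1}\circ \Delta_i$ survive --- is possible but is precisely the nontrivial bookkeeping that the general theory takes care of; this is why the paper simply cites Corollary~2.1.11 and Theorem~2.1.12 of Tarkhanov for existence and only then performs the cutoff argument.

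A secondary flaw sits in part (ii): you justify that $G_i$ ``maps smooth to smooth'' by its proper support. Proper support alone does not give this --- a properly supported bidistribution maps compactly supported smooth sections only to compactly supported \emph{distributions} (consider the kernel $g(x)\delta(x-y)$ with $g$ continuous but not smooth: properly supported, smooth off the diagonal, yet it does not preserve smoothness). What you actually need, and what your construction would provide once (i) is repaired, is that $G_i$ is a properly supported \emph{pseudodifferential} operator. This is not cosmetic: your injectivity step exhibits a smooth cocycle $\alpha$ as $f_i(Q_{i-1}\beta + G_i\alpha)$ and needs $G_i\alpha$ to be smooth; without that you only learn that $\alpha$ is the coboundary of a distributional section, which is what you are trying to rule out. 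The paper's proof has the same implicit reliance (its parametrix comes from a pseudodifferential construction), so with the justification corrected this point is fine.
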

\begin{proof}
(i) The existence of a parametrix for any elliptic complex follows from
Corollary~2.1.11 and Theorem~2.1.12 of~\cite{tarkhanov}. The support of
an existing pa\-ra\-met\-rix can be restricted arbitrarily close to the
diagonal since $G_i^\chi$, defined by $G_i^\chi[\psi] = \chi G_i[\psi]$,
is a parametrix as long as $G_i$ is a parametrix and $\chi\in
C^\oo(M\times M)$ is properly supported with $\chi \equiv 1$ on a
neighborhood of the diagonal.

(ii) By the defining Equation~\eqref{eq:parametrix}, they are cochain
homotopic to the identity operator, with respect to the cochain homotopy
$G_i$. Further, being by hypothesis smooth and by (i) properly
supported, they define smoothing operators, $Q_i\colon \Do(F_i) \to
\Secs_c(F_i)$ and $Q_i\colon \D(F_i) \to \Secs(F_i)$, when extended to
distributions. It is then straightforward to see that the $Q_i$ and the
inclusions of smooth sections in distributional ones (well defined
because $M$ is oriented) constitute a homotopy equivalence between the
complexes of smooth $(\Secs(F_\bullet), f_\bullet)$ and distributional
$(\D(F_\bullet), f_\bullet)$ sections, and similarly for compact
supports. Thus, as desired, these complexes have isomorphic
cohomologies.
\end{proof}

\begin{prop}[Serre, Tarkhanov]\label{prp:serre-dual}
Given a differential complex $(F_\bullet, f_\bullet)$, that is not
necessarily elliptic, on an oriented manifold $M$ that is countable at
infinity (there exists an exhaustion by a countable sequence of compact
sets), let $(\tilde{F}^*_\bullet, f_\bullet)$ be its formal adjoint
complex.  The following are algebraic (the topologies may not agree)
isomorphisms of vector spaces
\begin{align}
	H^i(F_\bullet, f_\bullet)^*
		&\cong H^i(\Do(\tilde{F}^*_\bullet), f^*_\bullet) , &
	H^i(F_\bullet, f_\bullet)
		&\cong H^i(\Do(\tilde{F}^*_\bullet), f^*_\bullet)^* , \\
	H_c^i(F_\bullet, f_\bullet)^*
		&\cong H^i(\D(\tilde{F}^*_\bullet), f^*_\bullet) , &
	H_c^i(F_\bullet, f_\bullet)
		&\cong H^i(\D(\tilde{F}^*_\bullet), f^*_\bullet)^* ,
\end{align}
where the cohomology vector spaces are endowed with the natural
Hausdorff locally convex topology of a quotient of a subspace of the
corresponding space of sections (be it smooth or distributional) and the
topological duals are taken with the strong topology.
\end{prop}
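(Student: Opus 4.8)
The plan is to recognize all four isomorphisms as instances of a single duality principle for complexes of reflexive locally convex spaces, applied to the dual pair formed by the smooth complex $(\Secs(F_\bullet),f_\bullet)$ and its formal adjoint complex realized in distributional sections. First I would record the degreewise duality that underlies everything. Since $M$ is oriented, the pairing $\langle\psi,\alpha\rangle=\int_M\psi\cdot\alpha$ identifies $\Do(\tilde F^*_i)$ with the strong dual $\Secs(F_i)^*$ and $\D(\tilde F^*_i)$ with $\Secs_c(F_i)^*$; reflexivity, already noted above, supplies the reverse identifications once one invokes $\tilde{(\tilde F^*_i)}^*=F_i$. Under this pairing the formal adjoint $f^*_i$ is, up to sign, the transpose of $f_i$, which is precisely the content of the Green formula~\eqref{eq:adj} together with the distributional extension $\langle f[\alpha],\psi\rangle=-\langle\alpha,f^*[\psi]\rangle$. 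Hence $(\Do(\tilde F^*_\bullet),f^*_\bullet)$ is the strong dual complex of $(\Secs(F_\bullet),f_\bullet)$, while $(\D(\tilde F^*_\bullet),f^*_\bullet)$ is the strong dual complex of the compactly supported smooth complex $(\Secs_c(F_\bullet),f_\bullet)$.

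With the complexes so identified, the duality pairing descends to a bilinear pairing between $H^i(F_\bullet,f_\bullet)$ and $H^i(\Do(\tilde F^*_\bullet),f^*_\bullet)$, and, with the support conventions swapped, between $H^i_c(F_\bullet,f_\bullet)$ and $H^i(\D(\tilde F^*_\bullet),f^*_\bullet)$; it is well defined on cohomology exactly because $f^*$ is the transpose of $f$, so that cocycles annihilate coboundaries. The claim to be proved is that these pairings induce, for each $i$, an algebraic isomorphism of each cohomology space with the strong topological dual of the other. This is the functional-analytic duality theorem of Serre~\cite{serre}, in the form adapted to differential complexes in~\cite{tarkhanov}: for a complex of reflexive Fr\'echet--Schwartz or DFS spaces with continuous differentials, the strong dual of the cohomology is algebraically isomorphic to the cohomology of the strong dual complex. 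Invoking it degreewise yields the two isomorphisms in the left column, and those in the right column then follow formally by reflexivity, applying the same theorem to the dual complex and using that the bidual complex is the original one.

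Next I would check the hypotheses of this duality theorem, and this is where the assumption that $M$ is countable at infinity enters. It guarantees that each $\Secs(F_i)$ is a nuclear Fr\'echet--Schwartz (FS) space and each $\Secs_c(F_i)$ a dual-Fr\'echet--Schwartz (DFS) space, that these two classes are interchanged by strong duality, and that every space in sight is reflexive; it also legitimizes the exhaustion of $M$ by compacta needed to form the compactly supported inductive-limit topologies. All the differentials $f_i$ and $f^*_i$ are continuous linear maps between spaces of these classes, so the standing hypotheses of Serre's theorem are in force.

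The step I expect to be the main obstacle is the functional-analytic heart of Serre's theorem itself: passing the duality through the two operations that build cohomology, namely restricting to a kernel (a closed subspace) and forming the quotient by an image (possibly a \emph{non-closed} subspace). The mechanism is the closed-range symmetry available for maps between FS and DFS spaces --- a continuous map has closed range if and only if its transpose does, and in that case kernels and closures of images are mutual annihilators. The delicate point, and the reason the statement claims only \emph{algebraic} isomorphisms with possibly mismatched topologies, is that for a non-elliptic complex the images of the differentials need not be closed; one must therefore argue first at the level of the separated (Hausdorff) cohomology and then use reflexivity to transfer the conclusion back to the full cohomology groups. Controlling this passage, rather than any estimate specific to the complex at hand, is the technical crux, and it is exactly what~\cite{serre,tarkhanov} provide.
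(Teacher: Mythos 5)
Your proposal is correct and follows essentially the same route as the paper: the paper's entire proof consists of delegating the result to Serre's duality theorem as generalized to differential complexes by Tarkhanov (Sections~5.1.1--5.1.2 and Remark~5.1.9 of that reference), which is exactly the theorem you invoke, and your preliminary identification of $(\Do(\tilde{F}^*_\bullet),f^*_\bullet)$ and $(\D(\tilde{F}^*_\bullet),f^*_\bullet)$ as the strong dual complexes of $(\Secs(F_\bullet),f_\bullet)$ and $(\Secs_c(F_\bullet),f_\bullet)$, together with your diagnosis of non-closed ranges as the technical crux, is precisely the setup the paper leaves implicit in the surrounding text of Section~\ref{sec:elliptic-dual}. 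One detail to fix in your hypothesis check: for non-compact $M$ the space $\Secs_c(F_i)$ is a strict nuclear LF space, \emph{not} a DFS space --- were it DFS, its strong dual $\D(\tilde{F}^*_i)$ would have to be Fr\'echet, which it is not --- so a duality theorem stated literally for FS/DFS complexes would not cover the compact-support row of isomorphisms. This slip is harmless to the overall argument, since Tarkhanov's formulation applies to the section and distribution spaces themselves (reflexivity being the operative property), but the sentence asserting that countability at infinity makes $\Secs_c(F_i)$ a DFS space should be corrected.
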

\begin{proof}
The original result of Serre~\cite{serre} appeared in the context of the
Dolbeault differential complex in the theory of several complex
variables. A detailed discussion and proof of the result for general
differential complexes can be found in Sections~5.1.1 and 5.1.2
of~\cite{tarkhanov}. In particular, the desired conclusion can be found
in Remark~5.1.9 thereof. Further conditions under which some of the
duality isomorphisms are also continuous, and not merely algebraic, can
be found there as well.
\end{proof}

Combining the two preceding propositions, it is easy to see that for any
elliptic complex (subject to a countability condition on $M$) we have
the Poincar\'e-Serre duality relation $H^i(F_\bullet, f_\bullet) =
H^i_c(\tilde{F}^*_\bullet, f_\bullet^*)^*$.

\subsection{The Calabi cohomology and homology}\label{sec:calabi-cohom}
Below, we finally make use of the background information summarized in
Sections~\ref{sec:sh-lc}, \ref{sec:sh-res}, and~\ref{sec:dual} and its
consequences for the Calabi and its formal adjoint complexes,
$(C_\bullet,B_\bullet)$ and $(C_\bullet,B_\bullet^*)$, which were
introduced in~\ref{sec:calabi}. Namely, we make precise the
identification between their cohomologies and the sheaf cohomologies of
the Killing and Killing-Yano sheaves, $\K$ and $\KY$, introduced in
Section~\ref{sec:sh-lc}. The hope created by this identification is that
the difficult problem of solving systems of differential equations,
which appear in these complexes, can be replaced by the equivalent and
potentially easier problem of computing sheaf cohomologies. The latter
problem is potentially easier because of the many available methods of
computing sheaf cohomology. Some of which will be discussed in
Section~\ref{sec:killing}.

First, we introduce the basic definitions of Calabi cohomology and
homology. Let us denote the cohomology of the Calabi complex (\emph{Calabi
cohomology}) on a pseudo-Riemannian manifold $(M,g)$ of constant
curvature as
\begin{equation}
	HC^i(M,g) = H^i(C_\bullet, B_\bullet) = H^i(\Secs(C_\bullet),B_\bullet) .
\end{equation}
Let us also denote the cohomology of the formal adjoint Calabi complex
with compact supports (\emph{Calabi homology})
\begin{equation}
	HC_i(M,g) = H^i_c(C_\bullet,B^*_\bullet)
		= H^i(\Secs_c(C_\bullet),B^*_\bullet) .
\end{equation}
The naming convention will be justified later by the generalized
Poincar\'e duality relation in Corollary~\ref{cor:calabi-dual}. Similarly,
we define the cohomology of the Calabi complex with compact supports
(\emph{Calabi cohomology with compact supports}) as
\begin{equation}
	HC_c^i(M,g) = H^i_c(C_\bullet, B_\bullet)
		= H^i(\Secs_c(C_\bullet),B_\bullet)
\end{equation}
and the cohomology of the formal adjoint Calabi complex (\emph{locally
finite Calabi homology}) as
\begin{equation}
	HC^\lf_i(M,g) = H^i(C_\bullet,B^*_\bullet)
		= H^i(\Secs(C_\bullet),B^*_\bullet) .
\end{equation}

The following proposition is the main technical tool that we use to
establish all other results in this section.
\begin{prop}\label{prp:calabi-exact}
Consider a pseudo-Riemannian manifold $(M,g)$ of constant curvature and
dimension $n$. The corresponding Calabi complex $(C_\bullet, B_\bullet)$
is elliptic, formally exact and locally exact (except in degree $0$).
The same is true for its formal adjoint complex
$(C_\bullet,B_\bullet^*)$ (except in degree $n$).
\end{prop}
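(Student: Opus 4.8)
The plan is to prove the three properties in the order \emph{elliptic}, \emph{formally exact}, \emph{locally exact}, since ellipticity falls out directly from the explicit homotopy of Section~\ref{sec:calabi-ops}, while local exactness is most cheaply harvested from the twisted de~Rham complex via Lemma~\ref{lem:f-exact}. First I would establish ellipticity. Fix $(x,p)\in T^*M$ with $p\ne 0$ and consider the symbol complex with maps $\sigma_{x,p}B_l$. By Remark~\ref{rmk:elliptic} I am free to build the homotopy operators from an auxiliary Riemannian metric $g'$, obtaining $E'_l$ whose induced cochain maps $P'_l=E'_{l+1}\circ B_{l+1}+B_l\circ E'_l$ carry the principal symbol of the $g'$-Laplacian $\square'$; for $g'$ positive definite and $p\ne 0$ this symbol is a nonzero scalar multiple of the identity. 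Passing to symbols, $\sigma_{x,p}P'_\bullet$ is a cochain map (the purely homological identity $PB=BP$ for $P=EB+BE$ uses only $B^2=0$) that is null-homotopic by construction; a null-homotopic map induces zero on the cohomology of the symbol complex, while a fibrewise isomorphism induces an isomorphism, so that cohomology vanishes in every interior degree. The end cases are covered by the same identity: invertibility of $\sigma_{x,p}P'_0=\sigma_{x,p}E'_1\circ\sigma_{x,p}B_1$ forces $\sigma_{x,p}B_1$ injective and invertibility of $\sigma_{x,p}P'_n=\sigma_{x,p}B_n\circ\sigma_{x,p}E'_n$ forces $\sigma_{x,p}B_n$ surjective. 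Hence the symbol complex is exact for all $p\ne 0$, i.e.\ the Calabi complex is elliptic; ellipticity of the formal adjoint is then immediate, since its symbol complex is the fibrewise transpose and exactness of a complex of finite-dimensional spaces is equivalent to exactness of its dual.

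Next, formal exactness, which I expect to be the main obstacle and the only place where the constant-curvature hypothesis is genuinely indispensable. On a constant-curvature background the Killing operator $K=B_1$ is formally integrable, and the Calabi complex is exactly its (second Spencer) formal compatibility complex; the general Spencer theory then guarantees that the composition identities $B_{l+1}\circ B_l=0$ assemble into a formally exact resolution rather than a mere complex, with the single failure of exactness in degree $0$ where the sheaf $\K$ lives. The same machinery applied to the rank-$(n-2)$ Killing-Yano operator $KY$, which Section~\ref{sec:tw-dr} records as likewise of finite type and regular on constant-curvature spaces, yields formal exactness of the formal adjoint complex except in degree $n$, where the Killing-Yano equation $B_n^*[c]=0$ sits. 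At the level of principal symbols this is also visible through Eastwood's identification of the symbol complex with a BGG resolution, but the full jet-level statement is where the constancy of the curvature is used, since it is precisely that constancy that removes the obstruction to prolonging local Killing (and Killing-Yano) fields.

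Finally, local exactness, where Lemma~\ref{lem:f-exact} does the work. The Calabi complex resolves $\K$ in degree $0$, and by Proposition~\ref{prp:tw-dr-exact} the twisted de~Rham complex $(\Lambda^\bullet M\otimes V_K,D_K)$ attached to the flat bundle of Killing jets is formally exact, elliptic and locally exact, with degree-$0$ cohomology sheaf again $\K$. Both complexes being formally exact except in degree $0$ and resolving the same sheaf, Lemma~\ref{lem:f-exact}(ii) supplies a formal homotopy equivalence and transports local exactness from the twisted de~Rham complex to the Calabi complex. The identical argument, run with $(\Lambda^\bullet M\otimes V_{KY},D_{KY})$ and the sheaf $\KY$, gives local exactness of the formal adjoint complex except in degree $n$. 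Alternatively, with ellipticity already in hand one could invoke the properly supported parametrix of Proposition~\ref{prp:ell-param} together with the fact that a connection operator satisfies the $\delta$-estimate, but the comparison route through Lemma~\ref{lem:f-exact} is the more economical one.
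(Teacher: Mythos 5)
Your proposal is correct at the same level of rigor as the paper's own proof (which is explicitly a sketch deferring technical details to the Spencer-theory literature), and two of its three pillars coincide with the paper's: formal exactness is obtained, exactly as in the paper, by identifying the Calabi complex as the second Spencer compatibility complex of the formally integrable Killing operator on a constant curvature background, and local exactness is transported from the twisted de~Rham complex $(\Lambda^\bullet M\otimes V_K, D_K)$ via Proposition~\ref{prp:tw-dr-exact} and Lemma~\ref{lem:f-exact}(ii), which is one of the two routes the paper offers (the other being the $\delta$-estimate for $K$). Where you genuinely diverge is ellipticity. The paper argues representation-theoretically: the symbols $\sigma B_i$ are $\GL(n)$-equivariant, so Schur's lemma together with the Littlewood--Richardson rules (the BGG pattern) forces exactness of the symbol sequence for $p\ne 0$; alternatively it cites the Spencer-theoretic fact that the compatibility complex of an elliptic, formally integrable operator is elliptic. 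You instead exploit the explicit homotopy operators: with the auxiliary Riemannian metric of Remark~\ref{rmk:elliptic}, the symbol-level identity $\sigma_{x,p}P'_l=\sigma_{x,p}E'_{l+1}\circ\sigma_{x,p}B_{l+1}+\sigma_{x,p}B_l\circ\sigma_{x,p}E'_l$ exhibits the invertible map $|p|^2_{g'}\,\id$ as null-homotopic, which kills the symbol cohomology in every degree (your separate handling of the end degrees is subsumed by the same identity with $B_0=0$ and $B_{n+1}=0$). This is more elementary and self-contained, requires no Littlewood--Richardson input, and is in fact the use for which Remark~\ref{rmk:elliptic} appears to have been written, even though the paper's proof never deploys it.

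One caveat concerns the formal adjoint complex. For its formal exactness you apply Spencer theory to the Killing--Yano operator and take for granted that the resulting compatibility complex is $(C_\bullet,B^*_\bullet)$; but that identification is essentially what is being proved, so as written this step is circular. The paper orders the logic the other way around: formal exactness and ellipticity are preserved under passage to formal adjoints (for ellipticity this is fibrewise transposition of the symbol complex, exactly your argument; the jet-level statement for formal exactness is asserted), and only \emph{then} is $(C_\bullet,B^*_\bullet)$ recognized as a formally exact compatibility complex for $B_n^*=KY$. Your treatment of adjoint ellipticity already follows this pattern; the adjoint formal-exactness step should be rephrased the same way, or else the identification of $(C_\bullet,B^*_\bullet)$ with the Spencer compatibility complex of $KY$ needs an independent proof.
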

\begin{proof}
In principle, we would need quite a bit of machinery for a full proof.
Instead, we give a sketch of the main ideas and refer to the literature
for technical details. The Calabi complex is actually an instance of a
\emph{second Spencer sequence}
construction~\cite{quillen,goldschmidt-lin,spencer,pommaret} applied to
the Killing operator $B_1 = K$. This fact is the demonstrated in the
papers~\cite{gasqui-goldschmidt-fr,gasqui-goldschmidt,goldschmidt-calabi}.
These papers make use of the general construction and properties of the
differential complex constituting a second Spencer sequence demonstrated
in~\cite{quillen,goldschmidt-lin}. In fact, the resulting differential
complex gives a formally exact compatibility complex for the Killing
operator, which is also an elliptic complex. This holds since the
Killing operator $K$ is itself elliptic (has injective symbol, which
follows from the property of being of finite type,
cf.~Section~\ref{sec:tw-dr}) and formally integrable (contains all of
its integrability conditions) on a constant curvature background.

A more elementary argument for ellipticity can be made on representation
theoretic grounds (Appendix~\ref{app:yt-bkg}). The fibers of the tensor
bundles $C_iM$ carry irreducible representations of $\GL(n)$. Further,
as mentioned in Remark~\ref{rmk:elliptic}, the principal symbols of the
differential operators $B_i$ are all $\GL(n)$-equivariant maps $\sigma
B_i\colon \Y^{(k_i)}T^* \otimes C_{i-1} \to C_i$ or equivalently
$\sigma_p B_i \colon C_{i-1} \to C_i$, for $p\in T^*$. By Schur's lemma,
the symbol map $\sigma B_i$ is then an isomorphism when restricted to an
irreducible summand of the tensor product representation. The well-known
Littlewood-Richardson rules~\cite{fulton,lrr} for tensor products of
$\GL(n)$ representations then show that the $C_i$ irreps have been
chosen precisely such that the symbol sequence $\sigma_p B_i$ is exact
for $p\ne 0$. This representation-theoretic line of argument is a
special case of the construction of what are known as BGG
resolutions~\cite{bgg}.

Finally, local exactness (except in degree $0$) can be established by
checking, for the Killing operator, a sufficient condition known as the
\emph{$\delta$-estimate}~\cite[Sec.1.3.13]{tarkhanov}. Equivalently, we
can simply invoke Proposition~\ref{prp:tw-dr-exact}, since, being of
finite type, the Killing operator is equivalent to a flat covariant
operator (Section~\ref{sec:tw-dr}).

A more elementary proof of local exactness was given in the original
article by Calabi~\cite{calabi}. He relied on the well known local
exactness of the de~Rham complex and its relation to the simplified form
of the complex in the flat (zero curvature) case. The non-zero curvature
case was handled by embedding it in a flat space and then restricting
and extending the relevant sheaves with respect to this embedding.
Unfortunately, unlike the more sophisticated argument above, this
simpler argument is unlikely to generalize, when the Calabi complex is
replaced by a more general one.

To finish the proof, we note that the properties of formal exactness and
ellipticity are obviously preserved by taking formal adjoints, so that
they apply equally well to the formal adjoint Calabi complex
$(C_\bullet,B_\bullet^*)$. The formal adjoint complex then serves as the
formally exact compatibility complex for the Killing-Yano operator
$B_n^* = KY$, which is also regular and of finite type on constant
curvature backgrounds, as discussed in Section~\ref{sec:tw-dr}. Thus,
repeating the same arguments as above establishes local exactness
(except this time in degree $n$) for the adjoint complex as well.
\end{proof}

\begin{cor}\label{cor:calabi-equiv}
There is a formal homotopy equivalence between the Calabi complex
$(C_\bullet,B_\bullet)$ and the twisted de~Rham complex
$(\Lambda^\bullet M \otimes V_K,D_K)$ resolving the Killing sheaf,
$\H^0(C_\bullet,B_\bullet) = \K$. The same is true (up to a trivial
renumbering) of the formal adjoint complex and the twisted de~Rham
complex $(\Lambda^\bullet M \otimes V_{KY},D_{KY})$ resolving the
Killing-Yano sheaf, $\H^n(C_\bullet,B^*_\bullet) = \KY$.
\end{cor}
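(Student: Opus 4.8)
The plan is to apply Lemma~\ref{lem:f-exact}(i) directly, using the two preceding propositions to verify its hypotheses. Consider first the Calabi complex $(C_\bullet, B_\bullet)$ and the twisted de~Rham complex $(\Lambda^\bullet M \otimes V_K, D_K)$. Both start in degree $0$. Formal exactness (except in degree $0$) of the Calabi complex is part of Proposition~\ref{prp:calabi-exact}, while formal exactness of the twisted de~Rham complex is part of Proposition~\ref{prp:tw-dr-exact}. It then remains only to check that the degree-$0$ cohomology sheaves are isomorphic to a common sheaf.

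The degree-$0$ cohomology sheaf of the Calabi complex is $\ker B_1 = \ker K$, which is by definition the Killing sheaf $\K$. The degree-$0$ cohomology sheaf of the twisted de~Rham complex is $\ker D_K$, the sheaf of $D_K$-flat sections of $V_K$. By the construction of Section~\ref{sec:tw-dr}, on a constant curvature background the Killing operator $K$ is a regular equation of finite type, hence equivalent to the flat connection operator $D_K$; the equivalence of the equations $K[v]=0$ and $D_K[\phi]=0$ means precisely that their solution sheaves coincide, so $\H^0(C_\bullet, B_\bullet) \cong \K \cong \ker D_K$. Lemma~\ref{lem:f-exact}(i) then produces the desired formal homotopy equivalence. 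That the twisted de~Rham complex resolves $\K$, rather than merely sharing its degree-$0$ cohomology sheaf, follows from its local exactness (Proposition~\ref{prp:tw-dr-exact}), which annihilates all higher cohomology sheaves.

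For the formal adjoint statement, I would first reverse the numbering of the adjoint complex $(C_\bullet, B_\bullet^*)$ so that $C_n$ sits in degree $0$ and its outgoing operator is $B_n^*$; with this trivial renumbering the adjoint complex again starts in degree $0$. Formal exactness and ellipticity pass to formal adjoints (Proposition~\ref{prp:calabi-exact}), and the twisted de~Rham complex $(\Lambda^\bullet M \otimes V_{KY}, D_{KY})$ is again covered by Proposition~\ref{prp:tw-dr-exact}. The crucial input is the computation of Section~\ref{sec:calabi-adj}, which shows that the equation $B_n^*[c]=0$ is equivalent, under a local invertible change of variables relating $c$ to the tensor $y$, to the rank-$(n-2)$ Killing-Yano equation $KY[y]=0$. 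Hence the initial (in the reversed numbering) cohomology sheaf $\ker B_n^*$ is isomorphic to the Killing-Yano sheaf $\KY$, which in turn equals $\ker D_{KY}$ by the finite-type argument of Section~\ref{sec:tw-dr}. A second application of Lemma~\ref{lem:f-exact}(i) then gives the formal homotopy equivalence, and local exactness of the twisted de~Rham complex once more upgrades the degree-$n$ identification $\H^n(C_\bullet, B_\bullet^*) \cong \KY$ to a genuine resolution.

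The main obstacle lies entirely in the identification of the degree-$0$ cohomology sheaves with the locally constant Killing and Killing-Yano sheaves, and of those in turn with the flat-section sheaves $\ker D_K$ and $\ker D_{KY}$; everything else is formal bookkeeping supplied by the two cited propositions and the inductive machinery inside Lemma~\ref{lem:f-exact}. This identification rests on the regularity and finite-type properties established in Section~\ref{sec:tw-dr}, and in the adjoint case additionally on the invertible transformation of Section~\ref{sec:calabi-adj} relating $B_n^*$ to $KY$. The only point requiring care beyond invoking these facts is the renumbering of the adjoint complex, so that the hypothesis of Lemma~\ref{lem:f-exact} that the complexes start in degree $0$ is literally met.
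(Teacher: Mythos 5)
Your proposal is correct and follows essentially the same route as the paper's own proof: invoke Propositions~\ref{prp:calabi-exact} and~\ref{prp:tw-dr-exact} for the exactness properties, use the equivalence of $K$ with $D_K$ (and of $B_n^*$ with $KY$, from Section~\ref{sec:calabi-adj}) established via the finite-type machinery of Section~\ref{sec:tw-dr} to match the degree-$0$ cohomology sheaves, and then apply Lemma~\ref{lem:f-exact} to obtain the formal homotopy equivalence, repeating the argument with trivial renumbering for the adjoint complex. Your write-up merely spells out the renumbering and the sheaf identifications that the paper compresses into ``the exact same argument (with trivial changes) applies.''
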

\begin{proof}
We already know that both the Calabi and twisted de~Rham complex
associated to the Killing operator are formally exact, locally exact
(Propositions~\ref{prp:calabi-exact} and~\ref{prp:tw-dr-exact}) and both
resolve the Killing sheaf, since the operators $K$ and $D_K$ are
equivalent (Section~\ref{sec:tw-dr}).  Thus, by Lemma~\ref{lem:f-exact},
there exists a formal homotopy equivalence (realized by differential
operators) between the two complexes. Noting that the exact same
argument (with trivial changes) applies to the formal adjoint Calabi
complex and the Killing-Yano sheaf concludes the proof.
\end{proof}

\begin{cor}\label{cor:calabi-dual}
Provided the manifold $M$ is countable at infinity (there is an
exhaustion by a countable sequence of compact sets) or is of finite type
(has a finite ``good cover''), we have the following generalized
Poincar\'e duality isomorphisms
\begin{align}
	HC^i(M,g) &\cong HC_i(M,g)^* , &
	HC^i_c(M,g)^* &\cong HC_i^\lf(M,g) , \\
	HC^i(M,g)^* &\cong HC_i(M,g) , &
	HC^i_c(M,g) &\cong HC_i^\lf(M,g)^* ,
\end{align}
where isomorphisms are taken in the algebraic sense and duality is meant
in the topological sense, as described in
Proposition~\ref{prp:serre-dual}.
\end{cor}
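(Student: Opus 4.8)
The plan is to obtain all four isomorphisms as formal consequences of the ellipticity of the Calabi complex together with the abstract duality results already established. The three key inputs are Proposition~\ref{prp:calabi-exact}, which guarantees that both $(C_\bullet, B_\bullet)$ and its formal adjoint $(C_\bullet, B_\bullet^*)$ are elliptic; Proposition~\ref{prp:serre-dual} (Serre--Tarkhanov), which relates the smooth cohomology of a complex to the \emph{distributional} cohomology of its formal adjoint; and Proposition~\ref{prp:ell-param}, whose parametrix identifies smooth and distributional cohomology for an elliptic complex. Throughout I use the identification $\tilde{C}_i^* \cong C_i$ furnished by the background metric, so that the formal adjoint of the Calabi complex is exactly $(C_\bullet, B_\bullet^*)$.

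First I would apply Proposition~\ref{prp:serre-dual} to the Calabi complex $(C_\bullet, B_\bullet)$, whose formal adjoint is $(C_\bullet, B_\bullet^*)$. This yields the four algebraic isomorphisms
\begin{align*}
	(HC^i)^* &\cong H^i(\Do(C_\bullet), B_\bullet^*) , &
	HC^i &\cong H^i(\Do(C_\bullet), B_\bullet^*)^* , \\
	(HC_c^i)^* &\cong H^i(\D(C_\bullet), B_\bullet^*) , &
	HC_c^i &\cong H^i(\D(C_\bullet), B_\bullet^*)^* ,
\end{align*}
which express the Calabi cohomologies with unrestricted and compact supports in terms of the distributional cohomology of the adjoint complex. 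Since that adjoint complex is itself elliptic, Proposition~\ref{prp:ell-param} applied to $(C_\bullet, B_\bullet^*)$ replaces its distributional cohomologies by the smooth ones: recalling that $\Do$ and $\D$ denote distributional sections with compact and unrestricted supports respectively, one has $H^i(\Do(C_\bullet), B_\bullet^*) \cong H^i_c(C_\bullet, B_\bullet^*) = HC_i$ and $H^i(\D(C_\bullet), B_\bullet^*) \cong H^i(C_\bullet, B_\bullet^*) = HC_i^\lf$. Substituting these two relations into the four displayed isomorphisms reads off precisely the four claims, with duality understood in the strong topological sense of Proposition~\ref{prp:serre-dual}. This settles the case where $M$ is countable at infinity, as required by Proposition~\ref{prp:serre-dual}.

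For the alternative hypothesis that $M$ is of finite type, I would instead route through the twisted de~Rham complex. By Corollary~\ref{cor:calabi-equiv} there is a formal homotopy equivalence between the Calabi complex and $(\Lambda^\bullet M\otimes V_K, D_K)$, and between the formal adjoint complex and $(\Lambda^\bullet M\otimes V_{KY}, D_{KY})$; since such equivalences preserve cohomology for every family of supports, the Calabi cohomologies are carried to twisted de~Rham cohomologies. Proposition~\ref{prp:tw-dr-dual} then supplies the generalized Poincar\'e duality $H^p \cong (H_c^{n-p})^*$, and the existence of a finite good cover forces all cohomology spaces to be finite dimensional, so that reflexivity upgrades this to isomorphisms in both directions and recovers the full set of relations.

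The conceptual content is entirely carried by the cited propositions, so I expect no genuine analytic obstacle here. The delicate point is purely one of bookkeeping: one must track correctly the four combinations of smooth versus distributional sections and compact versus unrestricted supports, keep the adjoint identification $\tilde{C}_i^* \cong C_i$ consistent at each step, and match the notations $\Do$ and $\D$ to the correct support class. The one structural fact that makes the argument close is that ellipticity passes to the formal adjoint complex, which is what permits Proposition~\ref{prp:ell-param} to be invoked on the adjoint side; this is already recorded in Proposition~\ref{prp:calabi-exact}.
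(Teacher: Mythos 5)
Your proposal is correct and takes essentially the same two-pronged approach as the paper: the Serre--Tarkhanov duality (Proposition~\ref{prp:serre-dual}) combined with the parametrix identification of smooth and distributional cohomology (Proposition~\ref{prp:ell-param}) under the countability-at-infinity hypothesis, and the Mayer--Vietoris twisted de~Rham duality (Proposition~\ref{prp:tw-dr-dual}) transported through the formal homotopy equivalence of Corollary~\ref{cor:calabi-equiv}, with finite-dimensionality giving the reverse isomorphisms, under the finite-type hypothesis. Your support/duality bookkeeping in the Serre route is accurate (and more explicit than the paper's one-line invocation); the only detail the paper mentions that you omit is fixing an orientation on $M$, which both duality mechanisms presuppose.
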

Note that in the case when all cohomology vector spaces are finite
dimensional, the distinction between algebraic or topological
isomorphisms and duals is irrelevant.
\begin{proof}
There are two ways to establish the desired duality isomorphisms, each
relying on slightly different conditions on $M$, reflected in the
hypotheses. We should note that both require an orientation on $M$.  The
existence off a non-degenerate metric on $M$ implies that it is
orientable. We then simply fix an orientation arbitrarily.

The Mayer-Vietoris argument (Proposition~\ref{prp:tw-dr-dual})
establishes the duality isomorphisms
\begin{align}
	H^i(\Lambda^\bullet M \otimes V_K,D_K)
		&\cong H^i_c(\Lambda^\bullet M \otimes V_K^*,D_K)^* \\
	\text{and} \quad 
	H^i(\Lambda^\bullet M \otimes V_{KY},D_{KY})
		&\cong H^i_c(\Lambda^\bullet M \otimes V_{KY}^*,D_{KY})^* .
\end{align}
Under the finite
type condition on $M$, an easy modification to the Mayer-Vietoris
argument (Propositions~5.3.1 and~5.3.2 of~\cite{bott-tu}) also shows
that each of these cohomology groups is finite dimensional, so the
reverse duality isomorphisms hold as well. Finally, the formal homotopy
equivalence of Corollary~\ref{cor:calabi-equiv} translates these
isomorphisms into the desired duality relations for Calabi cohomology
and homology.

The Poincar\'e-Serre argument, applies by virtue of the ellipticity of
the Calabi and its formal adjoint complexes
(Proposition~\ref{prp:calabi-exact}) and the hypothesis of countability
at infinity. Combining the results of Propositions~\ref{prp:ell-param}
and~\ref{prp:serre-dual}, easily establishes the desired duality
isomorphisms directly.
\end{proof}

\begin{cor}\label{cor:calabi-sheaf}
Assume the same hypotheses on $M$ as in Corollary~\ref{cor:calabi-dual}.
The Calabi cohomology and homology, assuming they are finite
dimensional, the following identities hold (with respect to algebraic
duals):
\begin{align}
	HC^i(M,g) &\cong H^i(\K) , & HC^i_c(M,g) &\cong H^{n-i}(\KY)^* , \\
	HC_i(M,g) &\cong H^i(\K)^* & HC^\lf_i(M,g) &\cong H^{n-i}(\KY) .
\end{align}
\end{cor}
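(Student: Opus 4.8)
The plan is to obtain the four isomorphisms in two groups. The first and fourth are direct consequences of interpreting the two differential complexes as acyclic resolutions, while the second and third follow by pairing these identifications with the generalized Poincar\'e duality of Corollary~\ref{cor:calabi-dual}. No new analytic input is needed; the work is entirely in assembling the preceding results and keeping track of indices.

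First I would establish $HC^i(M,g) \cong H^i(\K)$. By Proposition~\ref{prp:calabi-exact} the Calabi complex is locally exact except in degree $0$, and by Corollary~\ref{cor:calabi-equiv} it resolves the Killing sheaf, $\H^0(C_\bullet,B_\bullet) = \K$. Since each $\Secs(C_l M)$ is the sheaf of sections of a vector bundle, Proposition~\ref{prp:sh-res}(iii) together with (ii) makes it soft and hence acyclic, so the Calabi complex is an \emph{acyclic resolution} of $\K$. Proposition~\ref{prp:sh-res}(i) then identifies its unrestricted cohomology with sheaf cohomology, $HC^i(M,g) = H^i(\Secs(C_\bullet),B_\bullet) \cong H^i(\K)$.

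Next I would treat the fourth isomorphism, $HC^\lf_i(M,g) \cong H^{n-i}(\KY)$, by the same mechanism applied to the formal adjoint complex $(C_\bullet,B_\bullet^*)$. This complex is locally exact except in degree $n$ and resolves the Killing-Yano sheaf, $\H^n(C_\bullet,B_\bullet^*) = \KY$, again through acyclic section sheaves. The one point demanding care is the bookkeeping of degrees: the adjoint differentials $B_l^*$ \emph{lower} the index, so reading the complex outward from the $C_n$ end as a genuine resolution places $\KY$ at resolution degree $0$ and identifies resolution degree $j$ with the node $C_{n-j}$. Propagating this reindexing through Proposition~\ref{prp:sh-res}(i) yields $HC^\lf_i(M,g) = H^i(\Secs(C_\bullet),B_\bullet^*) \cong H^{n-i}(\KY)$, which is precisely where the shift $i \mapsto n-i$ originates.

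Finally, the remaining two isomorphisms I would deduce purely formally. Corollary~\ref{cor:calabi-dual} supplies $HC^i(M,g)^* \cong HC_i(M,g)$ and $HC^i_c(M,g) \cong HC^\lf_i(M,g)^*$. Under the standing finite-dimensionality hypothesis the algebraic and topological duals agree and double duals are canonical, so substituting the first isomorphism into the former gives $HC_i(M,g) \cong (HC^i(M,g))^* \cong H^i(\K)^*$, and substituting the fourth into the latter gives $HC^i_c(M,g) \cong (HC^\lf_i(M,g))^* \cong H^{n-i}(\KY)^*$. This closes all four identities. I expect the only genuine obstacle to be the degree bookkeeping for the adjoint complex in the fourth isomorphism; everything else is a direct citation of the preceding results, with finite dimensionality invoked solely to make the dualizations in Corollary~\ref{cor:calabi-dual} freely reversible.
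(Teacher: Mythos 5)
Your proof is correct and follows essentially the same route as the paper: local exactness (Proposition~\ref{prp:calabi-exact}) together with the resolution statements (Corollary~\ref{cor:calabi-equiv}) identifies the unrestricted cohomologies with $H^i(\K)$ and $H^{n-i}(\KY)$ (including the same degree bookkeeping for the adjoint complex), and Corollary~\ref{cor:calabi-dual} then supplies the two dualized identities under the finite-dimensionality hypothesis. The only cosmetic difference is that you invoke Proposition~\ref{prp:sh-res} directly, whereas the paper cites Lemma~\ref{lem:f-exact}(ii), whose proof is precisely that acyclic-resolution argument.
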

Note that we do expect the relevant cohomology and homology spaces to be
finite dimensional in most applications. If the cohomology vector spaces
happen to be infinite dimensional, then the correct (topological and
algebraic) isomorphisms can be deduced from Corollary~\ref{cor:calabi-dual}
and Proposition~\ref{prp:serre-dual}.
\begin{proof}
By Proposition~\ref{prp:calabi-exact} and
Corollary~\ref{cor:calabi-equiv} we already know that the Calabi and its
formal adjoint complexes are locally exact differential complexes that
respectively resolve the Killing and Killing-Yano sheaves, $\K$ and
$\KY$. Then, Lemma~\ref{lem:f-exact} establishes the isomorphisms
$HC^i(M,g) \cong H^i(\K)$ and $HC^\lf_i(M,g) \cong H^{n-i}(\KY)$. Finally,
the duality isomorphisms of Corollary~\ref{cor:calabi-dual} establish
the rest of the desired identities. Note that we have added the finite
dimensionality hypothesis only to avoid explicitly specifying a topology
on the relevant cohomology vector spaces, so that the topological and
algebraic duals coincide.
\end{proof}

\section{The Killing sheaf and its cohomology}\label{sec:killing}
In this section we concentrate on possible effective ways of computing
the Killing sheaf cohomology (or rather the cohomology of any locally
constant sheaf) of a pseudo-Riemannian manifold $(M,g)$ of constant
curvature. For us, \emph{effective} is used somewhat loosely and we
take it to mean roughly to either consist of finitely many steps
involving only finite-dimensional linear algebra or to reduce to
calculation that has already been done in the literature. In particular,
any such method would be more effective than the brute force approach of
trying to solve the systems of differential equations appearing in the
Calabi complex. Since the interest in the cohomology of the Killing
sheaf may extend beyond the constant curvature context, we always
discuss the more general situation, specializing to the constant
curvature case when necessary.

There are two main possibilities, either the manifold $M$ is simply
connected or it is not. They are discussed respectively in
Sections~\ref{sec:simp-conn} and~\ref{sec:nonsimp-conn}. In the simply
connected case, the sheaf cohomology can be expressed completely in
terms of the de~Rham cohomology. The non-simply connected case is more
complicated, where several complementary but potentially overlapping
methods may be used.  None of them, unfortunately, gives a complete
solution.

Crucial to the discussion that follows (see
Appendix~\ref{app:bndl-deform} for relevant notation and concepts
related to $G$-bundles) is the notion of the monodromy representation of
the fundamental group $\pi = \pi_1(M)$ of a manifold with respect to a
flat connection $D$ on a vector bundle $V\to M$
(cf.~Section~\ref{sec:tw-dr}). Let us identify $\pi_1(M) = \pi_1(M,x)$
for some $x\in M$. The connection $D$ gives rise to a notion of parallel
transport on $V$.  Since the connection is flat, the parallel transport
along a curve connecting $x,y\in M$ depends only on the homotopy class
of the path with its endpoints fixed. Therefore, since parallel
transport acts linearly, parallel transport along loops based at $x\in
M$ induces a representation $\rho_V\colon \pi \to \GL(\bar{V})$, where
$\bar{V}\cong V_x$ is the typical fiber of $V\to M$, called the
\emph{monodromy representation}. Another common term is the
\emph{holonomy representation}. However, we reserve the term
\emph{holonomy} for the same concept associated specifically to the
$g$-compatible Levi-Civita connection on $M$. If $V\to M$ is a vector
$G$-bundle, then there necessarily is an associated representation of
the structure group on $\bar{V}$, $\sigma_V \colon G \to \GL(\bar{V})$.
When the connection $D$ preserves the $G$-bundle structure, parallel
transport and hence monodromy factors through the associated
representation. Hence $\rho_V = \sigma_V \circ \rho$, where $\rho\colon
\pi \to G$ is the monodromy representation of $\pi$ in the structure
group.

Recall also that for any manifold $M$ there exists a unique (up to
diffeomorphism) connected, simply connected \emph{universal cover}
$\tilde{M} \to M$, where the projection map is a surjective local
diffeomorphism. In fact, $\tilde{M}\to M$ is a $\pi$-principal bundle
over $M$. The principal bundle action of $\pi$ on $\tilde{M}$ by is
called action by \emph{deck transformations}. Note that $M\cong
\tilde{M}/\pi$. Deck transformations, being diffeomorphisms, commute
with the de~Rham differential. Hence the action by deck transformations
descends to de~Rham cohomology. We call it the \emph{deck
representation} $\Delta^i\colon \pi \to \GL(H^i(M))$. The projection to
$M$ pulls the bundle $V\to M$ back to $\tilde{V}\to \tilde{M}$ and the
connection $D$ to $\tilde{D}$. Since the universal cover is simply
connected, the pulled back bundle trivializes, $\tilde{V} \cong \bar{V}
\times \tilde{M}$. Therefore, we have the isomorphism
$H^i(\Lambda^\bullet \tilde{M} \otimes \tilde{V},D) \cong
H^i(\tilde{M})\otimes \bar{V}$. It is not hard to see that the two sides
are isomorphic not only as vector spaces but also as representations of
the fundamental group $\pi$, with the right side transforming as the
tensor product of the deck and monodromy representations
$\Delta^i\otimes \rho_V$.

Let us fix the assumptions that $(M,g)$ is connected and that its
Killing sheaf $\K_g$ is locally constant, then concretize the above
ideas to this case.  Recall from Section~\ref{sec:tw-dr} that $\K_g$ is
then resolved by the twisted de~Rham complex associated to the flat
vector bundle $(V_K,D_K)$. The typical fiber $\bar{V}_K$ of $V_K \to M$
consists of the germs of local Killing vector fields. Each local Killing
vector field extends to a global one, and hence to an infinitesimal
isometry, on the universal cover $(\tilde{M},\tilde{g})$. Thus, we can
identify $\bar{V}$ with the Lie algebra $\g$ of the Lie group $G =
\Isom(\tilde{M},\tilde{g})$ of isometries of $(\tilde{M},\tilde{g})$.

Infinitesimal isometries act on each other by the formula $\Lie_u v =
[u,v]$, which corresponds to the infinitesimal adjoint representation
$\ad\colon \g \to \End(\g)$. This representation integrates to the
adjoint representation $\Ad \colon G \to \GL(\g)$, which is how finite
isometries act on Killing vector fields. Also, it is clear by
construction that deck transformations act on $(\tilde{M},\tilde{g})$ by
isometries. Let us denote this representation of the fundamental group
$\pi = \pi_1(M)$ by isometries as $\rho\colon \pi \to G$. As described in
Section~\ref{sec:monodr-def}, this information is equivalent to
specifying a flat principal $G$-bundle $P\to M$ with monodromy
representation $\rho$ of $\pi$ in $G$. Further, it is clear that $V_K
\cong \g_P$ is the vector $G$-bundle over $M$ associated to $P$ with
respect to the adjoint action $\Ad$ of $G$ on $\g$ and that $D_K$ is the
connection associated to the flat principal connection on $P$. The
monodromy representation of $\pi$ on $\bar{V}_K$ is then the
\emph{composite adjoint monodromy representation} $\rho_V = \Ad_\rho =
\Ad\circ \rho$.

\subsection{Simply connected case}\label{sec:simp-conn}
The simplest case is when the manifold $M$ is simply connected, that is,
its fundamental group $\pi = \pi_1(M)$ is trivial. Let the locally
constant sheaf $\F$ have stalk $\bar{F}$ so that it defines a flat
vector bundle $(F,D)$, with $\bar{F}$ the typical fiber of $F\to M$
(Sections~\ref{sec:sh-lc} and~\ref{sec:tw-dr}). We know that the twisted
de~Rham differential complex $(\Lambda^\bullet M\otimes F,D)$ is an
acyclic resolution of $\F$. Hence their cohomologies agree. On the other
hand, since $M$ is simply connected, we can choose a global $D$-flat
basis frame for $F$ and identify the twisted de~Rham complex with $\rk F
= \dim \bar{F}$ copies of the standard de~Rham complex. This argument
proves
\begin{thm}\label{thm:simp-conn}
Let $(M,g)$ be a connected, simply connected pseudo-Riemannian manifold
with locally constant Killing sheaf $\K_g$, resolved by the twisted
de~Rham complex $(\Lambda^\bullet M \otimes V_K, D_K)$. Let $\g \cong
\bar{V}_K$ be the Lie algebra of isometries of $(M,g)$. Then the
following isomorphisms hold:
\begin{equation}
	H^i(\K_g)
	\cong H^i(\Lambda^\bullet M \otimes V_K, D_K)
	\cong H^i(M)\otimes \g .
\end{equation}
In particular $H^0(\K_g) \cong \g$ and $H^1(\K_g) = 0$.
\end{thm}

\subsection{Non-simply connected case}\label{sec:nonsimp-conn}
The non-simply connected case is of course more complicated and we can
offer only partial results, which we summarize in this paragraph. The
simplest sub-case is when the fundamental group $\pi = \pi_1(M)$ of the
pseudo-Riemannian manifold $(M,g)$ is finite (Section~\ref{sec:fin-fg}).
The Killing sheaf cohomology is then the $\pi$-invariant subspace of the
de~Rham cohomology of the universal covering space. If the fundamental
group is not necessarily finite, we still have the following general
result for the degree-$1$ cohomology of constant curvature spaces. We
can equate $\dim H^1(\K_g)$ to the dimension of the space of possible
infinitesimal deformations of the metric that preserve the constant
curvature condition as well as the value of the scalar curvature itself.
That observation was already made in the original work of
Calabi~\cite{calabi} and in fact prompted his interest in a resolution
of the Killing sheaf $\K_g$. This space of infinitesimal deformations
can also be computed as the degree-$1$ group cohomology of $\pi$ with
coefficients in a certain representation on the Lie algebra of
isometries of the universal cover of $M$ (Section~\ref{sec:inf-fg-1}).
Another result helps compute higher degree cohomology groups. The
Killing sheaf, being locally constant, defines a \emph{local system} or
a system of \emph{local coefficients} on $M$, a concept well known in
algebraic topology. A general result from the theory of local systems is
that the aforementioned group cohomology computes higher Killing sheaf
cohomology groups up to the degree of asphericity of $M$
(Section~\ref{sec:loc-coef}). Finally, there is a general method for
completely computing the Killing sheaf cohomology based on a
presentation of the manifold $M$ as a finite simplicial set
(Section~\ref{sec:simp-set}).

\subsubsection{Finite fundamental group}\label{sec:fin-fg}
The basic idea here is to take advantage of the complete decomposability
of representations of a finite group and then apply Schur's lemma. As
will be clear from the proof, it is the complete decomposability that is
important not the finiteness of $\pi$. So the same result actually holds
under suitably weaker hypotheses.
\begin{thm}
Let $(M,g)$ be a connected pseudo-Riemannian manifold with fundamental
group $\pi = \pi_1(M)$ and Killing sheaf $\K_g$, resolved by the twisted
de~Rham complex $(\Lambda^\bullet M \otimes V_K, D_K)$. Let $\g \cong
\bar{V}_K$ be the Lie algebra of isometries of the universal cover
$(\tilde{M},\tilde{g})$. If $\pi$ is finite, we have the following
isomorphisms:
\begin{equation}
	H^i(\K_g) \cong (H^i(\tilde{M})\otimes \g)^\pi ,
\end{equation}
where the superscript $\pi$ denotes the $\pi$-invariant subspace with
respect to the representation $\Delta^i\otimes\Ad_\rho$, the tensor
product of the deck and composite adjoint monodromy representations. In
particular $H^0(\K_g) \cong \g^\pi$.
\end{thm}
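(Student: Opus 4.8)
The plan is to compute $H^i(\K_g)$ by lifting the resolving complex to the universal cover $p\colon\tilde{M}\to M$ and extracting the $\pi$-invariant part, the finiteness of $\pi$ entering only through the fact that, over $\R$, the functor of taking $\pi$-invariants is exact. First I would invoke the resolution property: since $(\Lambda^\bullet M\otimes V_K,D_K)$ is a locally exact (Proposition~\ref{prp:tw-dr-exact}) acyclic resolution of $\K_g$, Proposition~\ref{prp:sh-res} gives $H^i(\K_g)\cong H^i(\Lambda^\bullet M\otimes V_K,D_K)$. This reduces the theorem to an identity about twisted de~Rham cohomology.

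Next I would pull everything back along $p$. Because $M\cong\tilde{M}/\pi$ and $p$ is a local diffeomorphism, pullback of forms is injective with image exactly the $\pi$-invariant forms, so a $V_K$-valued form on $M$ is the same datum as a deck-invariant $\tilde{V}_K$-valued form on $\tilde{M}$ (where $\pi$ acts by deck transformations on $\Lambda^\bullet\tilde{M}$ and by the induced action on $\tilde{V}_K$). Since the twisted differential $\tilde{D}_K$ commutes with the deck action, this identifies the complex downstairs with the $\pi$-invariant subcomplex of $(\Lambda^\bullet\tilde{M}\otimes\tilde{V}_K,\tilde{D}_K)$, compatibly with all differentials: $(\Lambda^\bullet M\otimes V_K,D_K)\cong\big((\Lambda^\bullet\tilde{M}\otimes\tilde{V}_K)^\pi,\tilde{D}_K\big)$.

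The crux is then to commute invariants past cohomology. Because $\pi$ is finite and the coefficient field has characteristic zero, the averaging operator $\tfrac{1}{|\pi|}\sum_{\gamma\in\pi}\gamma$ is a well-defined projection onto the invariant subspace; this is precisely the complete decomposability (Maschke's theorem) emphasized in the remark preceding the statement, and it makes $(-)^\pi$ an exact functor. As the hint indicates, it is this complete decomposability, not finiteness per se, that is used. An exact functor commutes with taking homology, so $H^i\big((\Lambda^\bullet\tilde{M}\otimes\tilde{V}_K)^\pi\big)\cong H^i\big(\Lambda^\bullet\tilde{M}\otimes\tilde{V}_K\big)^\pi$. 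Finally I would invoke the $\pi$-equivariant isomorphism recorded above, $H^i(\Lambda^\bullet\tilde{M}\otimes\tilde{V}_K,\tilde{D}_K)\cong H^i(\tilde{M})\otimes\g$, in which $\pi$ acts through the tensor product $\Delta^i\otimes\Ad_\rho$ of the deck and composite adjoint monodromy representations; taking invariants yields $H^i(\K_g)\cong(H^i(\tilde{M})\otimes\g)^\pi$. For $i=0$, connectedness of $\tilde{M}$ gives $H^0(\tilde{M})\cong\R$ with trivial deck action, so $H^0(\tilde{M})\otimes\g\cong\g$ with $\pi$ acting through $\Ad_\rho$ alone, whence $H^0(\K_g)\cong\g^\pi$.

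I expect the only genuine subtlety to be the bookkeeping in the pullback step: verifying that $\pi$-invariant sections upstairs correspond exactly to sections on the quotient $M$ and that the induced deck action on the $\g$-factor of $\tilde{V}_K\cong\g\times\tilde{M}$ is precisely the composite adjoint monodromy $\Ad_\rho$. These are standard facts about regular coverings and associated flat bundles rather than analytic difficulties, and the entire homological content is carried by the exactness of $(-)^\pi$; for infinite $\pi$ the averaging projection is unavailable and invariants need not be exact, which is exactly why the argument is confined to the finite (or completely decomposable) case.
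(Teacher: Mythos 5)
Your proposal is correct, and its skeleton is the same as the paper's: identify $H^i(\K_g)$ with the twisted de~Rham cohomology via Propositions~\ref{prp:tw-dr-exact} and~\ref{prp:sh-res}, lift to the universal cover, identify the complex on $M$ with the $\pi$-invariant subcomplex of $(\Lambda^\bullet\tilde{M}\otimes\tilde{V}_K,\tilde{D}_K)$, and exploit complete decomposability of representations of the finite group $\pi$. Where you genuinely diverge is the homological core, namely the justification that taking $\pi$-invariants commutes with passing to cohomology. You argue via the averaging idempotent $\frac{1}{|\pi|}\sum_{\gamma\in\pi}\gamma$, which makes $(-)^\pi$ an exact functor (equivalently, splits each $\Omega_i$ into invariants plus a complement compatibly with the differential), and exact additive functors commute with cohomology. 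The paper instead decomposes $\Omega_i\cong\Omega_i^\pi\oplus\Omega_i^{\hat\pi}$, feeds the resulting short exact sequence of complexes into the long exact sequence in cohomology, and invokes Schur's lemma: the invariant and complementary summands are disjoint representations, all maps in the long exact sequence are $\pi$-equivariant, so the connecting maps vanish, the sequence splits into short exact pieces, and $H^i_\pi\cong(H^i)^\pi$ follows by decomposing the cohomology once more. The two arguments carry the same mathematical content (both are Maschke's theorem in action), but yours is shorter and isolates the only property actually used, the exactness of the invariants functor, which also makes transparent why finiteness can be relaxed to complete decomposability, a point the paper only remarks on; the paper's route, in exchange, exhibits explicitly how the cohomology upstairs splits as a direct sum of disjoint $\pi$-representations. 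The bookkeeping items you flag (invariant sections upstairs correspond to sections downstairs, and the deck action on the $\g$ factor of the trivialized $\tilde{V}_K$ is the composite adjoint monodromy $\Ad_\rho$) are precisely the facts the paper also relies on, established at the start of Section~\ref{sec:killing}, so your proof is complete modulo the same standard covering-space inputs.
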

\begin{proof}
Consider the spaces of sections $\Omega_i = \Secs(\Lambda^i
\tilde{M}\otimes \tilde{V}_K)$, where $\tilde{V}_K \to \tilde{M}$ is the
pullback of $V_K\to M$ along the universal covering projection
$\tilde{M} \to M$. Let $\tilde{D}_K$ denote the pullback of the $D_K$.
As we have already discussed at the top of Section~\ref{sec:killing},
this pulled back bundle is trivial, $\tilde{V}_K \cong \g \times M$.
Moreover, by simple connectedness of $\tilde{M}$ and
Theorem~\ref{thm:simp-conn}, we have the isomorphism
$H^i = H^i(\Omega_i,\tilde{D}_K) \cong H^i(\tilde{M})\otimes \g$. 

As also discussed at the top of Section~\ref{sec:killing}, the spaces
$\Omega_i$ carry representations of the fundamental group $\pi$, which
also descends to the cohomologies $H^i$. Since $\pi$ is finite, it is
well known that any representation thereof is completely
decomposable~\cite{maschke}, that is, any subrepresentation has a direct
sum complement subrepresentation. So, the subspace $\Omega_i^\pi \sso
\Omega_i$ invariant under the action $\pi$ (every element of $\pi$ acts
as the identity operator) has a direct sum complement
$\Omega_i^{\hat\pi}$, so that $\Omega_i \cong \Omega_i^\pi \oplus
\Omega_i^{\hat\pi}$. This direct sum induces the short exact sequence
\begin{equation}\label{eq:pi-short-exact}
\begin{tikzcd}
	0 \ar{r} &
	\Omega_i^\pi \ar{r} &
	\Omega_i \ar{r} &
	\Omega_i^{\hat\pi} \ar{r} &
	0 .
\end{tikzcd}
\end{equation}
It is straightforward to note that, by construction of the universal
cover $\tilde{M}\to M$, the $\pi$-invariant subcomplex
$(\Omega_i^\pi,\tilde{D}_K)$ on $\tilde{M}$ is in fact cochain
isomorphic to the complex $(\Secs(\Lambda^\bullet M \otimes V_K), D_K)$
on $M$. Therefore the desired cohomology groups are $H^i(\Lambda^\bullet
M \otimes V_K, D_K) \cong H^i(\Omega_\bullet^\pi, \tilde{D}_K)$.

The complement $\Omega_i^{\hat\pi}$ naturally does not contain any
non-zero vectors invariant under the action of $\pi$. In representation
theoretic terminology, these two complementary subspaces are
\emph{disjoint}. By Schur's lemma~\cite{schur}, the only equivariant map
(\emph{intertwiner}) between any two disjoint representations is the
zero map. Note that the differentials $\tilde{D}_K$ and the maps in the
short exact sequence~\eqref{eq:pi-short-exact} are in fact
$\pi$-equivariant. By the general machinery of homological algebra
(Appendix~\ref{app:homalg}) the short exact
sequence~\eqref{eq:pi-short-exact} induces the long exact sequence
\begin{equation}\label{eq:pi-long-exact}
\begin{tikzcd}
	0 \arrow{r} &
	H^0_\pi \arrow{r} &
	H^0 \arrow{r}\arrow[draw=none]{d}[name=Z,shape=coordinate]{}&
	H^0_{\hat\pi}
		\arrow[rounded corners,
			to path={ -- ([xshift=2ex]\tikztostart.east)
			|- (Z) [near end]\tikztonodes
			-| ([xshift=-2ex]\tikztotarget.west)
			-- (\tikztotarget)}]{dll} \\
	&
	H^1_\pi \arrow{r} &
	H^1 \arrow{r} &
	H^1_{\hat\pi} \arrow{r} & \cdots
\end{tikzcd}
\end{equation}
where $H^i_\pi = H^i(\Omega_\bullet^\pi, \tilde{D}_K)$,
$H^i_{\hat\pi} = H^i(\Omega_\bullet^{\hat\pi}, \tilde{D}_K)$ and all
the maps are also $\pi$-equivariant. It is clear that the
representations carried by $H^i_\pi$ and $H^i_{\hat\pi}$ are also
disjoint. Therefore, the maps connecting the rows of
diagram~\eqref{eq:pi-long-exact} are all zero. In other words, each of
the rows becomes a short exact sequence on its own. Invoking again
complete decomposability of representations of $\pi$, we can write $H^i
\cong H^i_\pi \oplus H^i_{\hat\pi}$ and hence identify $H^i_\pi \cong
(H^i)^\pi$ with the subspace of $H^i$ on which $\pi$ acts trivially.

Collecting the above arguments together, while recalling the sheaf
cohomology identity $H^i(\K_g) \cong H^i(\Lambda^\bullet M \otimes V_K,
D_K)$, we obtain the isomorphism $H^i(\K_g) \cong (H^i(\tilde{M})\otimes
\g)^\pi$. Noting the special cases $H^0(\tilde{M}) = \R$ and
$H^1(\tilde{M}) = 0$, as in Theorem~\ref{thm:simp-conn}, concludes the
proof.
\end{proof}

\subsubsection{Degree-$1$ cohomology}\label{sec:inf-fg-1}
Consider a $1$-parameter family of $n$-dimensional pseudo-Riemannian
manifolds $(M,g(t))$ where each $g(t)$, for $t$ in some neighborhood of
zero, has constant curvature, with scalar curvature independent of $t$:
Riemann tensor equal to $\frac{k}{n(n-1)} g(t)\odot g(t)$. Let $g(0) =
g$ and $\dot{g}(0) = h$. Then the linearization of the identity $R[g(t)]
- \frac{k}{n(n-1)} g(t)\odot g(t) = 0$ at $t=0$ will give
(cf.~Section~\ref{sec:calabi-ops})
\begin{equation}
	\dot{R}[h] - k\frac{2}{n(n-1)} (g\odot h) = -\frac{1}{2} C_2[h] = 0 .
\end{equation}
In other words, $h$ is a Calabi $1$-cocycle. It is possible that not
every Calabi $1$-cocycle gives rise to an actual $1$-parameter family of
deformations, since there may be algebraic obstructions%
	\footnote{The study of these obstructions follows the general ideas
	outlined by Kodaira and
	Spencer~\cite{spencer-deform1,spencer-deform2}. See also the related
	phenomenon of linearization instabilities~\cite{kh-linstab}.} %
to solving for higher order terms in the expansion parameter $t$.
However, at the infinitesimal level, there are no other conditions and
we can identify infinitesimal deformations with Calabi $1$-cocycles. If
the deformation family $g(t)$ is trivial, induced by a $1$-parameter
family of diffeomorphisms of the manifold $M$, then it is well known
that $h = K[v]$ for some $1$-form $v$ (vector field generating the
diffeomorphism family, with index lowered by the metric $g$), in other
words a Calabi $1$-coboundary. It is easy to see that Calabi
$1$-coboundaries can be identified with infinitesimal trivial
deformations. Therefore, the Calabi cohomology vector space $HC^1(M,g)$,
and hence the Killing cohomology vector space $H^1(\K_g)$ isomorphic to
it, is in bijective correspondence with the space of infinitesimal
deformations of the metric $g$ within the space of constant curvature
metrics of scalar curvature $k$, modulo infinitesimal diffeomorphisms.

There is another way to look at this infinitesimal deformation space. It
is well known that the only geodesically complete, simply connected,
constant curvature spaces are the pseudo-Euclidean ($k=0$),
pseudo-spherical ($k>0$) and pseudo-hyperbolic ($k<0$)
spaces~\cite[Sec.2.4]{wolf-cc}. In Riemannian signature, these are
respectively the ordinary Euclidean, spherical and hyperbolic spaces. In
Lorentzian signature, these are respectively the Minkowski, de~Sitter
and anti-de~Sitter spaces. Thus, the elements of a family $(M,g(t))$ of
geodesically complete, constant curvature, pseudo-Riemannian manifolds
of fixed scalar curvature $k$ all have isometric universal covers
$(\tilde{M},\tilde{g})$. Moreover, since the action of the fundamental
group $\pi = \pi_1(M)$ on its universal cover via deck transformations
is by isometries, there is an injective group homomorphism $\pi \to G =
\Isom(\tilde{M},\tilde{g})$, so that we have a subgroup $\rho(\pi) \sse
G$ that acts on $\tilde{M}$ properly and
discontinuously~\cite[Sec.1.8]{wolf-cc}. Conversely, for any subgroup of
$\pi'\sse G$ that acts on $\tilde{M}$ properly and discontinuously the
quotient $(M',g') = (\tilde{M},\tilde{g})/\pi'$ will be a manifold of
the same constant curvature, but with fundamental group $\pi' =
\pi_1(M')$. So, we have already noticed that all $(M,g)$ with constant
curvature arise in this way. Of course, any two subgroups $\pi',\pi''
\sse G$ that are conjugate, $\pi'' = a \pi' a^{-1}$ for some $a\in G$,
give rise to isometric quotients. In fact, we have just argued that the
infinitesimal deformations of the representation $\rho\colon \pi \to G$,
up to conjugation by $G$, are in bijection with infinitesimal constant
curvature deformations of the metric $(M,g)$. It is well known that the
deformations of the representation $\rho$ are in bijection with certain
degree-$1$ \emph{group cohomology} of the fundamental group $\pi$. On
the other hand, we have already seen that deformations of the constant
curvature spaces are parametrized by the Killing sheaf cohomology
$H^1(\K_g)$. Thus, computing the group cohomology may be an effective
way of computing the Killing sheaf cohomology, at least in degree-$1$.
The details of the definition of the representation $\rho$ are described
at the top of Section~\ref{sec:killing} and are also subsumed by the
more general discussion below.

This connection between the degree-$1$ Killing sheaf cohomology,
deformations of the geometry and group cohomology of the fundamental
group $\pi$ extends far beyond the case of manifolds of constant
curvature. We base what follows on the remark at the top of
Section~\ref{sec:killing} and the contents of
Appendix~\ref{app:bndl-deform}. If $(\tilde{M},\tilde{g})$ is the
universal cover of $(M,g)$ and $G = \Isom(\tilde{M},\tilde{g})$ with Lie
algebra $\g$, then there is a naturally defined flat principal
$G$-bundle $P\to M$. Then, the infinitesimal deformations of this flat
principal $G$-bundle are in bijections with $H^1(\K_g)$, the degree-$1$
Killing sheaf cohomology group. That is because the flat vector bundle
$(V_K,D_K)$, whose twisted de~Rham complex resolves the Killing sheaf,
is isomorphic to the associated bundle $\g_P \to M$ with connection $D$
induced by the flat principal connection on $P$. Recall that the fibers
of $\g_P$ transform under the adjoint representation $\Ad\colon G \to
\GL(\g)$ and that parallel transport with respect to the flat connection
on $P$ defines a representation $\rho\colon \pi \to G$ of the
fundamental group $\pi = \pi_1(M)$. Their composition $\Ad_\rho = \Ad
\circ \rho$, as already mentioned at the top of
Section~\ref{sec:killing}, is known as the \emph{composite adjoint
monodromy representation}. In the case of spaces of constant curvature,
the infinitesimal deformations of the flat principal bundle $P\to M$ are
the same thing as the infinitesimal deformations of the given constant
curvature metric, fixing the value of the curvature.
\begin{thm}
Given the notations and hypotheses of the above paragraph, the following
isomorphisms between the Killing sheaf cohomology and group cohomology
of $\pi$ with coefficients in $\Ad_\rho$ hold:
\begin{equation}
	H^0(\K_g) \cong H^0(\pi,\Ad_\rho) \cong \g^\pi , \quad
	H^1(\K_g) \cong H^1(\pi,\Ad_\rho) .
\end{equation}
\end{thm}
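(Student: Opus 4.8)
The plan is to transport the statement to the twisted de~Rham complex and then compare its cohomology in low degrees with the group cohomology of $\pi$. By Corollary~\ref{cor:calabi-sheaf} together with the discussion at the top of this section, the Killing sheaf cohomology is computed by the twisted de~Rham complex of the flat bundle $(V_K,D_K)$, whose fiber is $\g$ and whose monodromy is the composite adjoint representation $\Ad_\rho$. The degree-$0$ claim is then immediate: $H^0(\K_g)$ consists of the globally $D_K$-flat sections of $V_K$, which are exactly the elements of the fiber $\g$ fixed by the monodromy, that is $\g^\pi = H^0(\pi,\Ad_\rho)$.

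For degree $1$ the main tool I would use is the Cartan--Leray (Hochschild--Serre) spectral sequence of the universal covering $\tilde{M}\to M$, regarded as a $\pi$-principal bundle,
\begin{equation}
	E_2^{p,q} = H^p\!\left(\pi,\, H^q(\Lambda^\bullet \tilde{M}\otimes \tilde{V}_K, \tilde{D}_K)\right) \Longrightarrow H^{p+q}(\Lambda^\bullet M\otimes V_K, D_K) .
\end{equation}
As already observed at the top of Section~\ref{sec:killing}, the pullback $\tilde{V}_K$ trivializes over the simply connected $\tilde{M}$, so by Theorem~\ref{thm:simp-conn} the coefficient groups are $H^q(\tilde{M})\otimes\g$ carrying the $\pi$-action $\Delta^q\otimes\Ad_\rho$. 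Since $\tilde{M}$ is simply connected, $H^0(\tilde{M})=\R$ and $H^1(\tilde{M})=0$, so the bottom row is $E_2^{p,0}=H^p(\pi,\Ad_\rho)$ while the next row vanishes identically, $E_2^{p,1}=0$.

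Reading off the low-degree filtration then yields the result. The corner term reproduces the degree-$0$ identification, $H^0=E_2^{0,0}=H^0(\pi,\Ad_\rho)$. In degree $1$ the filtration gives a short exact sequence $0\to E_\infty^{1,0}\to H^1 \to E_\infty^{0,1}\to 0$ in which $E_\infty^{0,1}\sse E_2^{0,1}=0$, while $E_\infty^{1,0}=E_2^{1,0}$ because the only differential out of $E_2^{1,0}$ lands in $E_2^{3,-1}=0$ and no differential maps into it. Hence $H^1(\K_g)\cong H^1(\pi,\Ad_\rho)$, as claimed.

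The main obstacle is the careful setup of the equivariant spectral sequence and the verification that the edge map realizes the expected isomorphism; a reader preferring a hands-on argument can instead represent a class in $H^1(\Lambda^\bullet M\otimes V_K,D_K)$ by a $D_K$-closed $V_K$-valued one-form, pull it back to $\tilde{M}$ where $H^1$ vanishes, write it as $\tilde{D}_K\tilde{f}$, and read off the assignment $\gamma\mapsto \gamma^*\tilde{f}-\tilde{f}\in\g$. The content then reduces to checking that this is a $1$-cocycle for $\Ad_\rho$, that exact forms and the residual constant in $\tilde{f}$ account exactly for the coboundaries, and that the induced map on cohomology is bijective. This is precisely the classical deformation-theoretic identification of $H^1(\pi,\Ad_\rho)$ with infinitesimal deformations of $\rho$ modulo conjugation that was described in the paragraphs preceding the theorem, and either route completes the proof.
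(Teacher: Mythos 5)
Your argument is correct, but it follows a genuinely different route from the paper's. The paper proves this theorem by deformation theory: it invokes Proposition~\ref{prp:cohom-equiv} of Appendix~\ref{app:bndl-deform}, where each of the three groups --- the sheaf cohomology $H^1(\K_g)$ (via \v{C}ech cocycles), the twisted de~Rham $H^1$, and the group cohomology $H^1(\pi,\Ad_\rho)$ --- is identified with the space of equivalence classes of infinitesimal deformations of the flat principal $G$-bundle $P\to M$ (equivalently, of the monodromy representation modulo conjugation), so that all three coincide because they parametrize the same geometric objects. You instead compute directly: the Cartan--Leray spectral sequence of the $\pi$-covering $\tilde{M}\to M$, with $E_2^{p,1}=H^p(\pi,H^1(\tilde{M})\otimes\g)=0$ by simple connectedness of $\tilde{M}$, collapses in total degree one to give $H^1\cong E_2^{1,0}=H^1(\pi,\Ad_\rho)$; your hands-on alternative (pull back a $D_K$-closed one-form, integrate it on $\tilde{M}$, read off the cocycle $\gamma\mapsto\gamma^*\tilde f-\tilde f$) is the same computation made explicit. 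Each approach buys something: yours is a self-contained cohomological computation that needs no deformation theory and, as a bonus, is exactly the mechanism behind the more general asphericity result quoted in Section~\ref{sec:loc-coef} (it would extend to degree $p$ whenever $H^q(\tilde M)=0$ for $0<q\le p$); the paper's buys the geometric dividend that is the actual point of Section~\ref{sec:inf-fg-1}, namely that these classes \emph{are} infinitesimal deformations of the monodromy representation, i.e.\ moduli of constant curvature structures. Two small points of care in your write-up: the spectral sequence should be set up at the level of sheaf cohomology (or one must justify applying it to smooth twisted forms), which is unproblematic here precisely because the twisted de~Rham complex is an acyclic resolution of $\K_g$ (Propositions~\ref{prp:tw-dr-exact} and~\ref{prp:sh-res}); and that resolution property, rather than Corollary~\ref{cor:calabi-sheaf}, is the correct citation for the identification $H^i(\K_g)\cong H^i(\Lambda^\bullet M\otimes V_K,D_K)$, since the theorem's hypotheses do not require constant curvature, only that $\K_g$ be locally constant.
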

This result is a direct consequence of Proposition~\ref{prp:cohom-equiv}
of Appendix~\ref{app:bndl-deform}. Unfortunately, we cannot use the same
methods to establish isomorphisms between the group and sheaf
cohomologies in higher degrees. See, however,
Section~\ref{sec:loc-coef}. The connection between group cohomology of
$\pi$ and deformations of a flat principal bundle is well known,
cf.~\cite{goldman}. The connection between, specifically, the cohomology
of the Killing sheaf, infinitesimal deformations of the corresponding
principal bundle, and group cohomology seems to be less well known, but
is mentioned explicitly in~\cite[App.A.2]{acm}.

\subsubsection{Cohomology with local coefficients}\label{sec:loc-coef}
We have just noted, in Section~\ref{sec:inf-fg-1}, a geometric relation
between degree-$1$ locally constant sheaf cohomology and cohomology of
the fundamental group. A more general connection between the cohomology
of a locally constant sheaf, or equivalently cohomology with coefficients
in a local system~\cite[Ch.VI]{whitehead}, and group cohomology of the
fundamental group has also been noticed in pure algebraic topology. In
fact, that is how the notion of group cohomology first arose.

The original goal was to calculate the cohomology of a space (with or
without coefficients in a non-trivial local system) in terms of data
specifying its homotopy type.  Following some early work by Hurewicz,
Hopf and Eilenberg, Eilenberg and Maclane~\cite{eilenberg-maclane}
introduced what are now known as $K(1,\pi)$ spaces (topological spaces
with $\pi_1 = \pi$ and $\pi_i = 0$ for all $i>0$) and computed all of
their cohomology groups by introducing an algebraic construction based
on the knowledge of the group $\pi$. We now call this construction
\emph{group cohomology}~\cite{group-cohom}. They further showed that the
same construction works also for any topological space $M$, not just a
$K(1,\pi)$, for the cohomologies in degree-$i$, with $0 < i \le p$, as
long as the space $M$ is $p$-aspherical, $\pi_i = 0$ for $0 < i \le p$.
This result, applied to the Killing sheaf gives the following
\begin{prop}
Let $(M,g)$ be a connected pseudo-Riemannian manifold with locally
constant Killing sheaf $\K_g$ and universal cover
$(\tilde{M},\tilde{g})$. Denote $G = \Isom(\tilde{M},\tilde{g})$ the
group of isometries of the universal cover and let $\g$ be its Lie
algebra. The fundamental group $\pi = \pi_1(M)$ acts on $\g$ via the
composite adjoint monodromy representation $\Ad_\rho \colon \pi \to
\GL(\g)$. If the manifold $M$ is $p$-aspherical, meaning $\pi_i(M) = 0$
for $0 < i \le p$, then we have the following isomorphisms:
\begin{equation}
	H^i(\K_g) = H^i(\pi,\Ad_\rho) \quad
	\text{for $0 \le i \le p$.}
\end{equation}
\end{prop}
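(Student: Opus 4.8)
The plan is to route the computation through cohomology with local coefficients and then invoke the classical comparison theorem of Eilenberg and MacLane~\cite{eilenberg-maclane,group-cohom}. The first move relies only on facts already assembled: by Section~\ref{sec:sh-lc} the sheaf $\K_g$ is locally constant, hence a local system in the sense of~\cite[Ch.VI]{whitehead}, with stalk the Lie algebra $\g \cong \bar{V}_K$; and, as recalled at the top of Section~\ref{sec:killing}, parallel transport with respect to the flat connection $D_K$ acts on this stalk through the composite adjoint monodromy representation $\Ad_\rho\colon \pi \to \GL(\g)$. The general theory of locally constant sheaves therefore identifies the sheaf cohomology $H^i(\K_g) = H^i(M,\K_g)$ with the singular cohomology $H^i(M;\Ad_\rho)$ of $M$ with coefficients in this local system. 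In degree $0$ this already settles the claim, since the global sections of $\K_g$ are exactly the $\pi$-invariant germs, so that $H^0(\K_g) \cong \g^\pi \cong H^0(\pi,\Ad_\rho)$.

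For $0 < i \le p$ I would then quote the Eilenberg--MacLane theorem: when $M$ is $p$-aspherical its local-coefficient cohomology agrees with the group cohomology $H^i(\pi,\Ad_\rho)$ through degree $p$. Morally this holds because the universal cover $\tilde M$ is then $p$-connected, so that through degree $p$ the singular chains $C_\bullet(\tilde M)$, with the free deck action of $\pi$, form a complex of free $\R[\pi]$-modules that is acyclic over the trivial module---that is, a partial free resolution of $\R$ over the group ring $\R[\pi]$. Applying $\operatorname{Hom}_{\R[\pi]}(-,\g)$ to it computes the local-coefficient cohomology of $M$ on one hand and the group cohomology of $\pi$ on the other, forcing the two to coincide in the range $i \le p$. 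Chaining the identifications then gives $H^i(\K_g) \cong H^i(M;\Ad_\rho) \cong H^i(\pi,\Ad_\rho)$ for every $0 \le i \le p$.

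The genuine mathematical content is external and I would cite it rather than reprove it; what remains on our side is matching conventions. The two points demanding care are that the topological coefficient system entering the Eilenberg--MacLane theorem is precisely the geometrically defined $\Ad_\rho$ of Section~\ref{sec:killing}, and that the asphericity hypothesis delivers agreement up to and including degree $p$ rather than only $p-1$; the latter is exactly where the sharpness of the $p$-connectedness of $\tilde M$ is used. I expect no serious obstacle beyond stating these identifications cleanly, which is why the preceding text can, with justification, present the result as essentially a specialization of a known theorem.
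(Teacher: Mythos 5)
Your proof is correct and takes essentially the same route as the paper: the paper presents this proposition as a direct application of the Eilenberg--MacLane comparison theorem for $p$-aspherical spaces, after viewing the locally constant sheaf $\K_g$ as a local system with stalk $\g$ and monodromy $\Ad_\rho$, and offers no further argument beyond that citation. Your additional material---the explicit degree-$0$ check $H^0(\K_g)\cong\g^\pi$ and the sketch of the comparison via the singular chains of the $p$-connected universal cover as a partial free resolution of $\R$ over $\R[\pi]$ (which correctly yields agreement up to and including degree $p$)---simply fleshes out what the paper leaves to the reference.
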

For higher degree cohomology there are other contributions to the
homology groups. There is still a homomorphism $H^i(\pi,\Ad_\rho) \to
H^i(\K_g)$, but it need no longer be an
isomorphism~\cite[Sec.1.4.2]{morita}.

Later, Postnikov~\cite{postnikov-dok,postnikov-ru} proposed a full
solution for algebraically determining all the cohomology groups of a
space based on its homotopy type. Postnikov's method encodes the full
homotopy type of a space in terms of their homotopy groups and certain
additional algebraic data known as a \emph{Postnikov system} or
\emph{tower}. This construction is currently more commonly known in its
topological form~\cite[Ch.IX]{whitehead}. For $p$-aspherical spaces and
cohomology in degree $i$, with $0< i \le p$, Postnikov's construction
coincides with group cohomology. In general, the two constructions do
differ in degrees higher than the degree of asphericity.

Unfortunately, both Postnikov's encoding of the homotopy type and his
algebraic reconstruction of the cohomology are rather complicated, do
not appear to have gained much popularity. They seem to be fully
described only in his original monograph~\cite{postnikov-ru} or its
translation~\cite{postnikov-en}, both being rather obscure references.
At the moment, it is not clear to us what is the modern state of the art
in terms of reconstructing the cohomology of a space with coefficients
in a local system in terms of the space's homotopy type.

\subsubsection{Simplicial set cohomology}\label{sec:simp-set}
The last mathematical tool, which we will discuss, that can aid in the
computation of the cohomologies of a locally constant sheaf is
simplicial cohomology with local coefficients. The idea is to substitute
the underlying manifold $M$ with a combinatorial structure like a
\emph{simplicial complex} or a \emph{simplicial set}. Then, provided the
combinatorial model is finite, the corresponding cohomology theory
reduces to the computation of the cohomology of a finite dimensional
cochain complex, and thus to finite dimensional linear algebra. We defer
to the discussion in \cite[Sec.I.4.7--10]{gelfand-manin} for technical
details.

A disadvantage of this method is that finite combinatorial models only
cover the case of compact manifolds. Non-compact manifolds require
either an infinite combinatorial model or a non-trivial extension of the
formalism. Another inconvenience, besides the need for an explicit
decomposition of $M$ into simplices, is the need to define a discrete
analog of parallel transport on the simplicial model to reproduce the
composite adjoint monodromy representation $\Ad_\rho$. That is usually
done by associating a copy of $\g$ to each vertex of the simplicial
model for $M$ and explicitly assigning a coherent set of linear
isomorphisms between these copies to the edges connecting them, such
that the composition of the isomorphisms of the edges along a closed
loop is equal to the $\Ad_\rho$ action of the corresponding element of
$\pi$. These choices may be simplified if all vertices could be
collapsed into a single one, which is allowed for simplicial sets. Such
a construction is always possible when $M$ is compact and results in a
so-called \emph{reduced simplicial set}~\cite{rsset}.

\section{Application to linearized gravity}\label{sec:appl}
Recently, the symplectic and Poisson structure of linear classical field
theories has been studied by the author within a very general
framework~\cite{kh-peierls,kh-big} (see also~\cite{forger-romero,bfr,hs}
for related work), which admits in particular any linear field theory
whose gauge fixed equations of motion can be formulated as a hyperbolic
PDE system with possible constraints and residual gauge freedom. Certain
sufficient geometric conditions need to be satisfied for a field theory
to fit into that framework. The framework can then precisely
characterize the degeneracies of the presymplectic and Poisson tensors
on the solution space of the theory. These sufficient conditions require
the gauge generator and the constraint operator to fit into differential
complexes and the degeneracies of the presymplectic and Poisson tensors
are then characterized using the cohomology of these complexes. Once
known, these presymplectic and Poisson degeneracies are known to be of
importance in classifying the charges, locality, superselection sectors
and quantization of the corresponding classical theory.

The well known examples of Maxwell electromagnetism and Maxwell
$p$-forms~\cite{sdh,bds,benini} fit into this
framework~\cite[Sec.4.2]{kh-peierls}, invoking the well known de~Rham
complex. Linearized gravity on a constant curvature Lorentzian manifold
also fits into this framework, with the role of the de~Rham complex
replaced by the Calabi complex or, as appropriate, the formal adjoint
Calabi complex. For linearized gravity on an arbitrary background, we
would need to make use of different differential complexes. The Calabi
complex would be replaced by complexes defined by the property that they
(at least formally) resolve the sheaf of Killing vectors on the given
background (cf.~Section~\ref{sec:sh-res}). The corresponding formal
adjoint complexes would play a role as well. This connection to the
Killing sheaf, even without explicitly knowing the needed differential
complexes, shows that the Killing sheaf cohomology plays a similar role
both in the constant curvature context and more generally.  Thus the
ability to compute the Killing sheaf cohomology in as many circumstances
as possible (as discussed in Section~\ref{sec:killing}) should take us a
large part of the way towards understanding the presymplectic and
Poisson degeneracies of linearized gravity on general backgrounds.
Unfortunately, about half the desired information would still be
missing, since it is not clear which sheaf cohomology theory would
control the cohomology of the formal adjoint differential complex. In
the case of constant curvature, we were able to identify it as the
cohomology of the sheaf of rank-$(n-2)$ Killing-Yano tensors, which is
resolved by the formal adjoint Calabi complex
(Section~\ref{sec:calabi-adj}). It is currently not clear how to identify
its analog in the case of a general background, without knowing the full
differential complex that (formally) resolves the sheaf of Killing
vectors.

Now, specialized to the case of linearized gravity on a constant
curvature background $(M,g)$, the analysis of~\cite{kh-peierls,kh-big}
concludes that the presymplectic and Poisson tensors are actually
non-degenerate (with spacelike compact support for solutions and compact
support for smeared observables) if and only if the following two
conditions are satisfied: \emph{(global recognizability)} a certain
bilinear pairing between degree-$1$ Calabi cohomology with spacelike
compact supports and degree-$1$ Calabi homology, \emph{(global
parametrizability)} a certain bilinear pairing between on-shell
degree-$1$ Calabi cohomology with spacelike compact supports and
on-shell degree-$1$ timelike finite Calabi homology.

The descriptions of off-shell or on-shell Calabi cohomologies with
\emph{spacelike compact} supports, $HC^i_\sc$ or $HC^i_{\square,\sc}$,
and of off-shell or on-shell \emph{timelike finite} Calabi homology,
$HC_i^\tf$ or $HC_i^{\square,\tf}$ go beyond the scope of the current
work. However, they are defined and studied in detail in~\cite{kh-cohom}
(similar ideas appear also in~\cite{benini}). In fact, the results
of~\cite{kh-cohom} show how to express these non-standard cohomologies
in terms of the standard ones with unrestricted or compact supports, and
similarly for homology. Recall also (Section~\ref{sec:calabi-cohom})
that the latter are isomorphic to appropriate cohomologies (or their
linear duals) of the Killing or Killing-Yano sheaves, $\K_g$ or $\KY_g$.
Using all of these results we are able to translate the non-degeneracy
requirements as follows: \emph{(global recognizability)} a certain
bilinear pairing between
\begin{align}
	HC^1_\sc(M,g) 
		&\cong H^{n-2}(M,\KY_g)^* \\
	\text{and} \quad 
	HC_1(M,g)
		&\cong H^1(M,\K_g)^*
\end{align}
is non-degenerate, \emph{(global parametrizability)} a certain bilinear
pairing between
\begin{align}
\notag
	HC^1_{\square,\sc}(M,g)
		&\cong H^{n-1}(M,\KY_g)^* \oplus H^{n-2}(M,\KY_g)^* \\
\notag
	\text{and} \quad
	HC^{\square,\tf}_1(M,g)
		&\cong H^1(M,\K_g)^* \oplus H^0(M,\K_g)^*
\end{align}
is non-degenerate. Notice that we have succeeded in expressing the vector
spaces on which these pairings are defined purely in terms of Killing
and Killing-Yano sheaf cohomologies.

Checking non-degeneracy of course requires an explicit expression for
the required bilinear pairings. Such expressions can be obtained from
the general framework of~\cite{kh-peierls,kh-big}. However, there are
two cases were we do not need such detailed information, and these are
the ones we shall content ourselves here. For instance, if all the
relevant cohomology vector spaces are trivial, then the only possible,
trivial bilinear pairing is automatically non-degenerate. On the other
hand, if the paired vector spaces have different dimensions, then every
possible pairing between them must be degenerate.

We conclude this section by listing several well known Lorentzian
backgrounds for which the methods of Section~\ref{sec:killing} allow us
to determine all or a few of the cohomologies of the Killing sheaf. For
the reasons discussed above, we make note of the Killing-Yano sheaf
cohomologies only for constant curvature backgrounds.

\begin{table}
\begin{center}%
\begin{tabular}{lllll}
	spacetime & topology & $b^i = \dim H^i(\K)$ & $c^i = \dim H^i(\KY)$ \\
	\hline\rule[0.5ex]{0pt}{2.5ex}%
	Minkowski & $\R^n$ & $b^0 = \frac{n(n+1)}{2}$ & $c^0 = \frac{n(n+1)}{2}$ \\
	open FLRW & $\R^n$ & $b^0 = \frac{(n-1)n}{2}$ \\
	de~Sitter & $\R\times S^{n-1}$ & $b^0 = b^{n-1} = \frac{n(n+1)}{2}$
		& $c^0 = c^{n-1} = \frac{n(n+1)}{2}$ \\
	closed FLRW & $\R\times S^{n-1}$ & $b^0 = b^{n-1} = \frac{(n-1)n}{2}$ \\
	Schwarzschild & $\R^2\times S^2$ & $b^0 = b^2 = 4$ \\
	Tangherlini & $\R^2\times S^{n-2}$ & $b^0 = b^{n-2} =
		\frac{(n-2)(n-1)}{2}$ \\
	Kerr & $\R^2\times S^2$ & $b^0 = b^2 = 2$ \\
	Meyers-Perry & $\R^2\times S^{n-2}$ & $b^0 = b^{n-2} = 1+N$
\end{tabular}%
\caption{A list of some well known, simply connected solutions of
	(cosmological) vacuum Einstein equations, together with their topology
	and non-vanishing dimensions of Killing or Killing-Yano sheaf
	cohomologies. Note that $b^0$ always counts the number of independent
	global Killing vectors, and similarly for $c^0$. The Tangherlini
	solutions generalize the Schwarzschild one to higher dimensions and
	the Meyers-Perry solutions do the same for Kerr~\cite{emparan-reall}.
	For the latter, $N$ counts the number of rotational symmetries, which
	varies depending on the variant of the solution. We only consider the
	exterior regions for black hole solutions.\label{tab:simp-conn}}
\end{center}
\end{table}

The easiest case is that of simply connected spacetimes. Then, the
Killing sheaf cohomology is just the de~Rham cohomology tensored with
the Lie algebra of global isometries (Section~\ref{sec:simp-conn}), with
an analogous result for any other locally constant sheaf. Many of the
well known exact solutions are in fact defined on simply connected
underlying manifolds, including Minkowski space, black hole solutions
and cosmological solutions. A few explicit examples are listed in
Table~\ref{tab:simp-conn}. Note that only the Minkowski and de~Sitter
spaces are of constant curvature, so that the Calabi complex could be
defined on them. For these backgrounds, it makes sense to also compute
the Killing-Yano sheaf cohomologies $H^i(\KY)$. However, since we know
that the number of linearly independent rank-$(n-2)$ Killing-Yano
tensors on these spaces is the same as the number of linearly
independent Killing vectors (Section~\ref{sec:calabi-adj}), the
cohomology vector spaces are isomorphic, $H^i(\KY) \cong H^i(\K)$.

In the non-simply connected case, we can rely on the results of
Sections~\ref{sec:fin-fg}, \ref{sec:inf-fg-1} and~\ref{sec:loc-coef},
according to which we can equate the Killing sheaf cohomologies with the
group cohomology of the fundamental group with coefficients in the
composite adjoint monodromy representation, at least up to the degree of
asphericity of the underlying spacetime manifold. Unfortunately, there
does not seem to exist a comprehensive list of exact solutions of
Einstein's equations indexed by spacetime topology. So it takes some
effort to find explicit examples of exact solutions on non-simply
connected spacetimes. A rich source of examples comes from quotients of
simply connected spacetimes (such as those mentioned in the preceding
paragraph) by a discrete, freely acting subgroup $\pi$ of the isometry
group. The quotient is a manifold because the action of $\pi$ is free
and the metric descends to the quotient because the action of $\pi$ on
the original spacetime is by isometries. The group $\pi$ then becomes
the fundamental group of the quotient.

\begin{table}
\begin{center}%
\begin{tabular}{llllll}
	$M$ & $\pi_1(M)$ & Bianchi sym. & additional sym. & $b^0$ & $b^1$ \\
	\hline\rule[0.5ex]{0pt}{2.5ex}%
	$\R\times T^3$ & $\mathbb{Z}^3$ & $\R^3$ & $1$
		& $3$ & $6$ \\
	$\R\times T^3$ & $\mathbb{Z}^3$ & $\mathrm{VII}(0)$ & $1$
		& 2 & $4$ \\
	$\R\times T^3$ & $\mathbb{Z}^3$ & $\R^3$ & $SO(2)$
		& $3$ & $6$ \\
	$\R\times T^3$ & $\mathbb{Z}^3$ & $\R^3$ & $SO(3)$
		& $3$ & $5$
\end{tabular}
\caption{Known values of $b^i = \dim H^i(\K_g)$ for a generic spatially
	homogeneous spacetime $(M,g)$ with given topology and symmetry
	properties. See text for more details.\label{tab:fg-z3}}
\end{center}%
\end{table}

An nearly exhaustive study of possible quotients of $4$-dimensional
cosmological solutions (meaning spatially homogeneous ones) has been
carried out in~\cite{kth,kth2,kodama}. A complete presentation of the
results is rather complicated and is relegated to the original
references. A particular cosmological solution $(M,g)$ is identified by
(a) the topology of the spacetime $M$, (b) the topology and isometry
group of the universal cover $(\tilde{M},\tilde{g})$, (c) a number of
continuous \emph{metric parameters} specifying $\tilde{g}$, and (d) a
number of continuous \emph{moduli} (or \emph{Teichm\"uller parameters})
specifying the quotient class. There may also be additional discrete
parameters, but we ignore them here, since they do not affect the number
of continuous parameters. According to the discussion of
Section~\ref{sec:inf-fg-1}, the number of moduli (denoted by
$N_\mathrm{m}$ in~\cite{kodama}) is in fact equal%
	\footnote{The space of moduli may not always be a smooth manifold, but
	may have algebraic singularities. Still, the number of moduli is the
	dimension of the generic subset of the moduli space, which is a smooth
	manifold. This dimension is also equal to $b^1 = \dim H^1(\K)$. At singular
	points of the moduli space, $b^1$ may actually exceed the number of
	moduli, so at those points a more careful analysis is needed.} %
to $b^1 = \dim H^1(\K_g)$. We shall not specify any metric parameters,
since, as long as they take generic values they do not affect the number
of moduli. For simplicity, we only consider the examples with toroidal
spatial topology $M = \R \times T^3$, where $T^3 = S^1 \times S^1 \times
S^1$. Hence, the fundamental group is $\pi_1(M) = \mathbb{Z}^3$ and the
universal cover is $\mathbb{R}^4$. The (identity component) of the
isometry group of $(\tilde{M},\tilde{g})$ is then a semidirect product
of a $3$-dimensional transitive Bianchi group and an additional
connected Lie group. Let us concentrate on the cases of either Bianchi
type $I \cong \R^3$ or $\mathrm{VII}(0)$. Under these conditions, we can
read off all the remaining possibilities and information form Table~IV
of~\cite{kodama}. They are summarized in Table~\ref{tab:fg-z3}. Note
that $b^0 = \dim H^0(\K_g)$ counts the number of independent global
Killing vectors on $(M,g)$. The number of independent Killing vectors on
$(\tilde{M},\tilde{g})$ counts the dimension of the Bianchi group
(always $3$) and the dimension of the additional symmetry group. The
number of independent Killing vectors not broken by compactification to
$T^3$ can be deduced from the explicit presentation of the isometry
groups $\Isom(\tilde{M},\tilde{g})$ and the discrete subgroups effecting
the compactification, which for the examples given in
Table~\ref{tab:fg-z3} in~\cite[Sec.3]{kodama}. Many more examples cam be
read off from Tables~IV, VII and Section~5.3 of~\cite{kodama}.

It appears difficult to locate literature on explicit calculations that
are equivalent to computing higher Killing sheaf cohomologies for other
non-simply connected spacetimes.

\section{Discussion and generalizations}\label{sec:discuss}
We have reviewed in detail the algebraic, geometric and analytical
properties of the Calabi differential complex~\cite{calabi}.

In Section~\ref{sec:calabi} we have defined the nodes of the complex in
terms of Young symmetrized tensor bundles and given explicit formulas
for the differential operators between them, verifying through explicit
calculations that they in fact constitute a complex
(Appendix~\ref{app:young}). Such explicit formulas are otherwise
difficult to extract from the existing literature, especially in terms
of tensor variables, as opposed to moving coframe variables used in
Calabi's original work. Further, our formulas work for pseudo-Riemannian
backgrounds of any signature, generalizing from the standard purely
Riemannian context. We have also identified a differential operator
cochain homotopy (Equations~\eqref{eq:calabi-diag},
\eqref{eq:calabi-homot-start}--\eqref{eq:calabi-homot-end}) that
generates a cochain map from the complex to itself with a Laplacian-like
principal symbol. This cochain homotopy map may be new. However, its
lower order terms coincide with well known geometric operators known
from the theory of linearized gravity (General Relativity). Another
interesting and likely novel observation involved the formal adjoint
complex (Section~\ref{sec:calabi-adj}), whose initial differential
operator turned out to be equivalent to the rank-$(n-2)$ Killing-Yano
operator, in analogy with the Killing operator in the original complex.

In Sections~\ref{sec:sheaves} and~\ref{sec:killing} we showed that the
Calabi complex is elliptic and locally exact. Hence, it resolves the
sheaf of Killing vectors on the given constant curvature
pseudo-Riemannian manifold. The same is true for the formal adjoint
complex and the sheaf of rank-$(n-2)$ Killing-Yano tensors. Thus the
cohomology of the Calabi complex could be expressed in terms the Killing
sheaf cohomology, while that of its formal adjoint in terms of the
Killing-Yano sheaf cohomology. When a sheaf is locally constant
(covering the relevant cases on constant curvature pseudo-Riemannian
manifolds), its cohomology can be effectively computed in many
circumstances using tools from algebraic topology, thus enabling
effective computation of the Calabi cohomology. These methods were
reviewed in Section~\ref{sec:killing}, specialized to the Killing sheaf.

Finally, in Section~\ref{sec:appl}, we discussed a physical application
that motivated this work. Jointly, the results collected in this work,
together with those of~\cite{kh-cohom,kh-peierls,kh-big} imply that
knowledge of Killing and Killing-Yano sheaf cohomologies allows some
degree of control over the degeneracy subspaces of the presymplectic and
Poisson structures within the classical field theory of linearized
gravity on constant curvature backgrounds.

Unfortunately, the above results do not apply directly to linearized
gravity on arbitrary Lorentzian manifolds, only those that have constant
curvature and where the Calabi complex is defined. However, the Calabi
complex serves as a case study for the more general situation and the
same results partially generalize to general backgrounds. In particular,
we can already make the following conclusions. In general, the Calabi
complex will have to be replaced by a different differential complex,
which will likely depend on some of the algebraic characteristics of the
Lorentzian manifold (such as its isometries and the algebraic type of
the curvature tensor and its derivatives). This complex would be
identified, as was the Calabi
complex~\cite{gasqui-goldschmidt-fr,gasqui-goldschmidt}, by the property
of being a formally exact compatibility complex of the Killing operator.
Such a complex is known to exist under general conditions and also have
the property of being elliptic, since the Killing operator is itself
elliptic~\cite{quillen,goldschmidt-lin}. Further, under a generic
condition, it can be shown to be locally exact
(Section~\ref{sec:sh-res}). The local exactness property connects the
cohomology of this complex to that of the Killing sheaf, which can be
effectively computed, at least in many circumstances, when the sheaf is
locally constant. Unfortunately, one piece of the puzzle remains
incomplete. The connection between the cohomology of the formal adjoint
complex and sheaf cohomology depends on the knowledge of the initial
operator in that differential complex, which is the adjoint of the final
operator of the differential complex resolving the Killing sheaf.  In
the Calabi case it is equivalent to the Killing-Yano operator.  However,
since the differential complex is expected to change depending on the
Lorentzian manifold, so is this initial operator. Thus, it is not clear
which sheaf cohomology will replace the Killing-Yano sheaf in the
general case.

Hence, in future work, it would be very interesting to investigate these
differential complex resolutions of the Killing sheaf, especially
computing their differential operators explicitly. Besides the general
existence results~\cite{quillen,goldschmidt-lin}, such a complex has
already been constructed for locally symmetric spaces ($\nabla_a
R_{bcde} = 0$)~\cite{gasqui-goldschmidt-fr,gasqui-goldschmidt}. Also,
heuristic arguments suggest that they could be partially constructed by
linearizing the so-called `ideal' characterizations of certain exact
families of solutions of Einstein's equations. These include
Schwarzschild~\cite{saez-schw}, Kerr~\cite{saez-kerr} and some perfect
fluid~\cite{ferr-fluid} solutions. An `ideal' characterization consists
of a number of tensor fields, locally and covariantly defined using the
metric and its derivatives, which vanish iff the given metric is locally
isometric to a particular geometry from the desired family. For
instance, the vanishing of the Riemann tensor $R$ is an ideal
characterization of the flat geometry, while the vanishing of the
corrected Riemann tensor $R-\bar{R}$ (Section~\ref{sec:calabi-ops}) does
the same for a constant curvature geometry. It should be clear from
these examples, that the linearization of the tensors that constitute
such an ideal characterization gives an operator whose composition with
the Killing operator is formally exact. At the moment it is not
completely clear what geometric interpretation can be given to
subsequent differential operators in the desired formally exact
differential complex.

Finally, one can easily imagine situations where the number of
independent solutions to the Killing equations changes over the
background pseudo-Riemannian manifold. The Killing sheaf is then no
longer locally constant and many of the techniques described in this
work are no longer applicable. In those cases, perhaps some insight can
be gained from the theory of constructible sheaves~\cite[Ch.4]{dimca},
\cite[Ch.VIII]{ks}, which are allowed to deviate from being locally
constant in a controlled way.

\section*{Acknowledgments}
The author would like to thank Mauro Carfora and Wouter von Limbek for
helpful discussions on deformations of isometry and flat principal
bundle structures. Thanks to Micha\l{} Wrochna for feedback on an earlier
version of the manuscript. Thanks also to Benjamin Lang and Alexander
Schenkel for useful references on the geometry of principal bundles. The
kind hospitality of the Erwin Schr\"odinger Institute during the
workshop ``Algebraic Quantum Field Theory: Its Status and Its Future,''
where this work was first publicly presented, is also acknowledged.

\appendix

\section{Young tableaux and irreducible $\GL(n)$ representations}
\label{app:young}

\subsection{Basic background}\label{app:yt-bkg}
A Yong diagram of type $(r_1,r_2,\ldots)$ with $k$ cells consists of a
number of rows of cells of non-increasing lengths $r_i$, $r_{i+1} \le
r_i$, such that $\sum_i r_i = k$. Example:
\begin{equation*}
	\text{type}~(3,3,1)~\text{or}~(3^2,1),
	\quad \text{diagram}~~\yd{3,3,1} ~ .
\end{equation*}
Given a Young diagram with $k$ cells, an instance of the corresponding
$\GL(n)$ irrep can be realized as the image of the space of covariant
$k$-tensors after two projections: assign an independent tensor index to
each cell of the diagram, symmetrize over each row, anti-symmetrize over
each column. The composition of these operations is called a \emph{Young
symmetrizer}, which we will denote by $\Y^d$, where $d=(r_1,r_2,\ldots)$
is the type of the Young diagram. It will be convenient for us to group
the indices of a symmetrized tensor by the columns of the corresponding
diagram, separating them by a colon. For instance, we write $b_{abc:de}$
corresponding to the filling
\begin{equation*}
	\yt{{a}{d},{b}{e},c} ~ .
\end{equation*}
Here's an example of a simple Young symmetrizer:
\begin{equation}
	\Y^{(2,1)}[t]_{ab:c}
	= \frac{1}{4} (t_{abc} + t_{cba} - t_{bac} - t_{abc}) .
\end{equation}

Different permutations of tensor indices filling a Young diagram create
distinct Young symmetrizers, unless the permutation preserves the
columns. The images of the Young symmetrizer for given diagram type with
$k$-cells are all isomorphic as $\GL(n)$ representations, but are not
necessarily all identical as subspaces of the space of covariant
$k$-tensors. The reason for this observation is that the space of
covariant $k$-tensors is a reducible $\GL(n)$ representation that
decomposes into a sum of irreps corresponding to all possible diagram
types with $k$-cells, but with, in general, non-trivial multiplicities.
Both the dimension and the multiplicity of each occurring irrep can be
computed with the so-called \emph{hook formulas}. The \emph{hook length}
for a given cell is the number of cells constituting a hook with vertex
at the given cell, extending to the right and down. \emph{Multiplicity:}
$k!$ divided by the product of the hook lengths for each cell.
\emph{Dimension:} the product of shifted dimensions for each cell,
divided by the product of hook lengths for each cell; the shifted
dimensions of the cells are obtained by placing $n$ in the top left
cell, then always increasing by $1$ to the right and decreasing by $1$
down. Example:
\begin{gather*}
	\text{hook lengths} ~~ \yt{{5}{3}{2},{4}{2}{1},{1}} ~ , \quad
	\text{shifted dimensions} ~ (n=4) ~~ \yt{{4}{5}{6},{3}{4}{5},{2}} ~ , \\
	\text{multiplicity:} ~ \frac{7!}{(5\cdot 3\cdot 2) (4\cdot 2\cdot 1)(1)} = 21,
	~
	\text{dimension:} ~ \frac{(4\cdot 5\cdot 6) (3 \cdot 4 \cdot 5) (2)}
		{(5\cdot 3\cdot 2) (4\cdot 2\cdot 1)(1)} = 60 .
\end{gather*}

Note that when the number of rows exceeds $n$, the corresponding
representation becomes zero-dimensional. This clearly follows from the
dimension formula and from the more elementary observation that there do
not exist non-trivial fully antisymmetric tensors of rank greater than
$n$, the dimension of the fundamental representation of $\GL(n)$.

By construction, it is clear that every Young symmetrized subspace of
covariant $k$-tensors is fully antisymmetric in the indices
corresponding to each column of its Young diagram. However, this
subspace will actually be even smaller and thus satisfy more identities.
A complete set of identities selecting an irreducible $\GL(n)$
sub-representation of the space of covariant $k$-tensors identified by a
diagram of type $(r_1, \ldots, r_l)$ filled with indices $a^i_k$ ($k$
being the row number and $i$ the column number) consists of (i)
\emph{intracolumn} exchange identities and (ii) \emph{intercolumn}
exchange identities. The exchange of any two indices within a column
changes the tensor by a sign. All such exchanges constitute the
\emph{intracolumn} identities. Let us define a two-column exchange as
follows. Fix two columns $i < j$ and select the top $k$ indices of
column $j$. A two column-exchange consists of a swap between a set of
$k$ indices from column $i$ and the top $k$ indices of column $j$,
without altering the internal order the substituted set of indices. For
a fixed choice of such $i, j, k$ an intercolumn identity consists of the
equality of the tensor with unpermuted indices with the sum over all
corresponding two-column exchanges. All such exchange identities with
consistent choices of $i, j, k$ constitute the \emph{intercolumn}
identities.

There already exists a special notation for antisymmetrization of a
group of indices: inclusion in square brackets, $[a^i_1 a^i_2 \cdots ]$.
Let us introduce a special notation for the sum over all two column
exchanges: fixing integers $i < j$ and $k$, we shall enclose the indices
of column $i$ in curly braces, $\{a^i_1 a^i_2 \cdots \}$, as well as the
top $k$ indices of column $j$, $\{a^j_1 \cdots a^j_k\} a^j_{k+1}
\cdots$. We give explicit examples of intracolumn and intercolumn
identities for Young diagrams of type $(2,2)$ and $(2,2,1)$:
\begin{align}
\label{eq:r-asym1}
	r_{ab:cd}
		&= r_{[ab]:cd} = \frac{1}{2} (r_{ab:cd} - r_{ba:cd}) , \\
\label{eq:r-asym2}
	r_{ab:cd}
		&= r_{ab:[cd]} = \frac{1}{2} (r_{ab:cd} - r_{ab:dc}) , \\
\label{eq:r-bianchi}
	r_{ab:cd}
		&= r_{\{ab\}:\{c\}d} = r_{cb:ad} + r_{ac:bd} , \\
\label{eq:r-sym}
	r_{ab:cd}
		&= r_{\{ab\}:\{cd\}} = r_{cd:ab} , \\
	b_{abc:de}
		&= b_{[abc]:de}
		= \frac{1}{3} (b_{a[bc]:de} + b_{b[ca]:de} + b_{c[ab]:de}) , \\
	b_{abc:de}
		&= b_{abc:[de]} , \\
	b_{abc:de}
		&= b_{\{abc\}:\{d\}e}
		= b_{dbc:ae} + b_{adc:be} + b_{abd:ce} , \\
	b_{abc:de}
		&= b_{\{abc\}:\{de\}}
		= b_{dec:ab} + b_{dbe:ac} + b_{ade:bc} .
\end{align}
It is remarkable, upon noticing the identity $r_{ab:cd} -
r_{\{ab\}:\{c\}d} = 3 r_{[ab:c]d}$, that according to
Equations~\eqref{eq:r-asym1}--\eqref{eq:r-sym} a tensor $r_{ab:cd}$ with
Young symmetry type $(2,2)$ has the same algebraic symmetries as a
Riemann curvature tensor (antisymmetry in $ab$ and $cd$, interchange of
$ab$ with $cd$, and the algebraic Bianchi identity). This fact is
well-known~\cite{fkwc}, but not often mentioned in textbooks on
relativity.

\subsection{Special algebraic and differential operators}\label{app:yt-ops}
Now, suppose that we are working on an $n$-dimensional pseudo-Riemannian
manifold $(M,g)$, with Levi-Civita connection $\nabla$. As in
Section~\ref{sec:calabi-tens}, the Young symmetrizes introduced above
define vector bundles of Young symmetrized covariant tensors $\Y^d T^*M
\to M$, where $d$ stands for a Young diagram. We define special linear
algebraic and differential operators, already briefly discussed in
Section~\ref{sec:calabi-ops}, between these Young symmetrized tensor
bundles occurring in the Calabi complex. Each of the corresponding Young
diagrams has at most two columns, where the first column usually has at
most $n$ cells and the second column has at most two cells. The operator
\emph{trace} ($\tr$) removes one row of cells, \emph{metric exterior
product} ($g \odot -$) adds one row of cells, \emph{left} or \emph{right
exterior derivative} ($\d_L$ and $\d_R$) adds one cell to the left or
right column respectively, and \emph{left} or \emph{right divergence}
($\delta_L$ and $\delta_R$) removes one cell from the left or right
column respectively. The name of each of these operators should be
suggestive of their form, with the main complication being to maintain
appropriate Young symmetry.

In principle, the Littlewood-Richardson decomposition rules uniquely fix
the principal symbols of each of these operators up to a scalar
multiple, with the Levi-Civita operator canonically converting a first
order principal symbol into a first order operator. In practice, it
takes a bit of work to find explicit formulas for them, given that a
naive application a Young symmetrizer produces unmanageably large
expressions. Moreover, the existence of the intracolumn and intercolumn
symmetrization identities introduces non-uniqueness into possible
explicit expressions. Below, we give explicit formulas for these
operators. In case of ambiguity, the choice was dictated by practical
convenience. Then, in Secs.~\ref{app:yt-pres} and~\ref{app:yt-comp} we
show by explicit calculation that they satisfy the required
symmetrization identities and thus carry the correct Young type.

\begin{align}
\label{eq:yop-tr}
	\tr[b]_{a_1\cdots a_l : b}
		&= b_{a_1\cdots a_l c : b}{}^c , \\
\label{eq:yop-g}
	(g\odot t)_{a_1 \cdots a_l:bc}
		&= l (g_{b[a_1} t_{a_2\cdots a_l]:c} - g_{c[a_1} t_{a_2\cdots a_l]:b}) , \\
\label{eq:yop-dl}
	\d_L [b]_{a_1\cdots a_l : bc}
		&= l \nabla_{[a_1} b_{a_2 \cdots a_l] : bc} ,  \\
\label{eq:yop-Dl}
	\delta_L[b]_{a_1 \cdots a_l : bc}
		&= \nabla^a b_{a a_1\cdots a_l:bc}
			+ 2l^{-1}\nabla^a b_{[b|a_1\cdots a_l:|c]a} , \\
\label{eq:yop-dr}
	\d_R[t]_{a_1\cdots a_l:bc}
		&= 2\nabla_{[b} t_{|a_1 \cdots a_l:|c]}
				+ 2(l-1)^{-1} \nabla_{[\{b\}|} t_{\{a_1\cdots a_l\}:|c]} , \\
\label{eq:yop-Dr}
	\delta_R[b]_{a_1\cdots a_l:b}
		&= \nabla^c b_{a_1\cdots a_l:bc} .
\end{align}
Let us give an explicit example of~\eqref{eq:yop-g} for $l=2$, which
appears in the formulas for the constant curvature Riemann
tensor~\eqref{eq:Rb-def} and for the linearized Riemann curvature
operator~\eqref{eq:linR-def}:
\begin{align}
	(g\odot h)_{a_1a_2:bc}
		&= g_{a_1b} h_{a_2c} - g_{a_2b} h_{a_1c} - g_{a_1c} h_{a_2b}
			+ g_{a_2c} g_{a_1b} , \\
	(g\odot g)_{a_1a_2:bc}
		&= 2 (g_{a_1b} g_{a_2c} - g_{a_2b} g_{a_1c}) , \\
	(\nabla\nabla\odot h)_{a_1a_2:bc}
		&= \nabla\nabla_{a_1b} h_{a_2c}
			-\nabla\nabla_{a_2b} h_{a_1c}
			-\nabla\nabla_{a_1c} h_{a_2b}
			+\nabla\nabla_{a_2c} g_{a_1b} , \\
\notag
		& \qquad \text{where} ~~
			\nabla\nabla_{ab}
				= \nabla_{(a}\nabla_{b)} 
				= \frac{1}{2}(\nabla_{a}\nabla_{b}+\nabla_{b}\nabla_{a}) .
\end{align}
In the last equation we used the $\odot$ operation to define another
differential operator of definite Young type. This property follows
directly from that of~\eqref{eq:yop-g}.

\subsection{Preservation of Young type}\label{app:yt-pres}
Each of the operators~\eqref{eq:yop-tr}--\eqref{eq:yop-Dr} maps tensors
of one Young type into another one, as is indicated by the index
notation described in Sec.~\ref{app:yt-bkg}. Below we explicitly
demonstrate that, by showing that the result of applying one of these
operators to a tensor of a given Young type always satisfies the
required intracolumn and intercolumn identities.

First, we list some key identities satisfied by the idempotent
antisymmetrization and column exchange operations. They follow from
straightforward, though possibly lengthy, application of their
definitions. Here, a tensor $t_{a_1\cdots a_l}$ is assumed to be fully
antisymmetric. Also, to simplify the notation for nested operations, we
use the notation $\{\cdots\}^k_l$, where the braces necessarily enclose
the indices $a_k$, $a_{k+1}$, \ldots, $a_{l}$, though perhaps also
others, to mean that we apply the appropriate column exchange operation
to these $a_i$ indices as if they appeared in the order $\{a_k\cdots
a_l\}$.

\begin{align}
	(l+1) p_{[a} t_{a_1\cdots a_l]}
\label{eq:id-asym}
		&= p_a t_{a_1\cdots a_l} - p_{\{a\}} t_{\{a_1\cdots a_l\}} , \\
	p_{\{b\}} t_{\{a_1\cdots a_l\}}
\label{eq:id-c1}
		&= p_{a_1} t_{b a_2\cdots a_l} + p_{\{b\}} t_{a_1\{a_2\cdots a_l\}} , \\
	p_{\{a\}} t_{b\{a_1\cdots a_l\}}
\label{eq:id-c1sw}
		&= - p_{\{b\}} t_{a\{a_1\cdots a_l\}} , \\
	p_{\{b} q_{c\}} t_{\{a_1\cdots a_l\}}
\label{eq:id-c2}
		&= p_{a_1} q_{\{c\}} t_{b\{a_2\cdots a_l\}}
			+ p_{\{b} q_{c\}} t_{a_1\{a_2\cdots a_l\}} , \\
\notag
\MoveEqLeft[7]
	\left(p_{b'} t_{\{a_1\cdots a_l\}:c'} - p_{c'} t_{\{a_1\cdots a_l\}:b'}\right)
\delta^{b'}_{\{b} \delta^{c'}_{c\}} \\
\label{eq:id-c2m}
		&= p_{\{b\}} t_{\{a_1\cdots a_l\}:c}
			-p_{\{c\}} t_{\{a_1\cdots a_l\}:b} , \\
\MoveEqLeft[7]
\notag
	\left(t_{b'\{a_1\cdots a_l\}:c'a}-t_{b'\{a_1\cdots a_l\}:c'a}\right)
			\delta^{b'}_{\{b} \delta^{c'}_{c\}} \\
\label{eq:id-c2a}
		&= -2(l-1) t_{[b|a_1\cdots a_l:|c]} , \\
	p_{\{\{b\}\}} t_{\{\{a_1\cdots a_l\}\}}
\label{eq:id-c1i}
		&= l p_b t_{a_1\cdots a_l} - (l-1) p_{\{b\}} t_{\{a_1\cdots a_l\}} , \\
	p_{\{b\}} q_{\{\{c\}} t_{\{a_1\cdots a_l\}\}^1_l}
\label{eq:id-c1j}
		&= p_{\{c\}} q_b t_{\{a_1\cdots a_l\}}
			+ (p_{\{b} q_{c\}} - q_{\{b} p_{c\}}) t_{\{a_1\cdots a_l\}} , \\
\notag
\MoveEqLeft[8]
	\left( p_{\{\{b'\}} t_{\{a_1\cdots a_l\}\}^1_l:c'}
	- p_{\{\{c'\}} t_{\{a_1\cdots a_l\}\}^1_l:b'} \right)
			\delta^{b'}_{\{b} \delta^{c'}_{c\}} \\
\label{eq:id-c1i2}
		&= 2(l-1) p_{[b|} t_{a_1\cdots a_l:|c]}
			- 2(l-2) p_{[\{b\}|} t_{\{a_1\cdots a_l\}:|c]} , \\
	p_{\{\{a\}} t_{\{a_1\cdots a_l\}\}^1_l:\{bc\}}
\notag
		&= p_{\{a\}} t_{\{a_1\cdots a_l\}:bc} \\
\label{eq:id-c1i2m}
		& \quad {}
			+ 2 p_{[\{b\}|} t_{\{a_1\cdots a_l\}:|c]a}
			- 2 p_{[b|} t_{a_1\cdots a_l:|c]a} .
\end{align}

Next, we show how the above key identities can be used to explicitly
demonstrate that the required symmetrization identities are satisfied.
We try to indicate which of the key identities are used and where, while
also silently making use of the symmetrization properties of the Young
type tensors on which the operations are being performed.

For the \emph{trace}~\eqref{eq:yop-tr}, the intracolumn identities are
obvious, so there is only one intercolumn identity to check:
\begin{align*}
	\tr[b]_{\{a_1\cdots a_l\}:\{b\}}
		&= b_{\{a_1\cdots a_l\}c:\{b\}}{}^c
			~~ \eqref{eq:id-c1} \\
		&= b_{\{a_1\cdots a_l c\}:\{b\}}{}^c - b_{\{a_1\cdots a_l b\}:c}{}^c
		= b_{a_1 \cdots a_l c:b}{}^c \\
		&= \tr[b]_{a_1 \cdots a_l:b} .
\end{align*}

For the \emph{metric exterior product}~\eqref{eq:yop-g}, the intracolumn
identities are obvious, so there are two intercolumn identities to
check:
\begin{align*}
	(g\odot t)_{\{a_1 \cdots a_l\}:\{b\}c}
		&= l(g_{\{b\}\{[a_1} t_{a_2\cdots a_l]\}:c}
			- g_{c\{[a_1} t_{a_2\cdots a_l]\}:\{b\}})
			~~ \eqref{eq:id-asym} \\
		&= -l(l+1) (g_{[b[a_1} t_{a_2\cdots a_l]]:c}
			- g_{c[[a_1} t_{a_2\cdots a_l]:b]}) \\
		& \quad {}
			+ l (g_{b[a_1} t_{a_2\cdots a_l]:c}
				- g_{c[a_1} t_{a_2\cdots a_l]:b}) \\
		&= (g\odot t)_{a_1\cdots a_l:bc} , \\
	(g\odot t)_{\{a_1\cdots a_l\}:\{bc\}}
		&= l(g_{b'\{[a_1} t_{a_2\cdots a_l]\}:c'}
			- g_{c'\{[a_1} t_{a_2\cdots a_l]\}:b'})
			\delta^{b'}_{\{b} \delta^{c'}_{c\}}
			~~ \eqref{eq:id-c2m} \\
		&= l (g_{\{b\}\{[a_1} t_{a_2\cdots a_l]\}:c}
			- g_{\{c\}\{[a_1} t_{a_2\cdots a_l]\}:b})
			~~ \eqref{eq:id-asym} \\
		&= -l(l+1) (g_{[b[a_1} t_{a_2\cdots a_l]]:c}
			- g_{[c[a_1} t_{a_2\cdots a_l]]:b}) \\
		& \quad {}
			+ l (g_{b[a_1} t_{a_2\cdots a_l]:c}
				- g_{c[a_1} t_{a_2\cdots a_l]:b}) \\
		&= (g\odot t)_{a_1\cdots a_l:bc} .
\end{align*}
The double anti-symmetrizations vanished because of the identities
$g_{[ab]} = 0$ and $t_{[a_2\cdots a_l:a_1]} = 0$, with the latter
following from a combination of~\eqref{eq:id-asym} and an intercolumn
identity. Also, we have used the fact that $p_{[a_1} t_{a_2\cdots
a_l]:b}$ is a tensor of the corresponding Young type, which follows from
the identities in the paragraph below.

For the \emph{left exterior derivative}~\eqref{eq:yop-dl}, the
intracolumn identities are obvious, so there are two intercolumn
identities to check:
\begin{align*}
	\d_L[b]_{\{a_1\cdots a_l\}:\{b\}c}
		&= l\nabla_{\{[a_1} b_{a_2\cdots a_l]\}:\{b\}c}
			~~ \eqref{eq:id-asym} \\
		&= -l(l+1) \nabla_{[[a_1} b_{a_2\cdots a_l]:b]c}
			+ l \nabla_{[a_1} b_{a_2\cdots a_l]:bc}
		= \d_L[b]_{a_1\cdots a_l:bc} , \\
	\d_L[b]_{\{a_1\cdots a_l\}:\{bc\}}
		&= l\nabla_{\{[a_1} b_{a_2\cdots a_l]\}:\{bc\}}
			~~ \eqref{eq:id-asym} \\
		&= \nabla_{\{a_1} b_{a_2\cdots a_l\}:\{bc\}}
			- \nabla_{\{\{a_1\}} b_{\{a_2\cdots a_l\}\}^1_l:\{bc\}}
			~~ \eqref{eq:id-c2} \\
		&= \nabla_{a_1} b_{\{a_2\cdots a_l\}:\{bc\}}
			+ \nabla_{b} b_{\{a_2\cdots a_l\}:a_1\{c\}} \\
		& \quad {}
			- \nabla_{\{\{a_1\}} b_{\{a_2\cdots a_l\}\}^2_l:\{bc\}}
			- \nabla_{\{\{b\}} b_{\{a_2\cdots a_l\}\}^2_l:a_1\{c\}}
			~~ \eqref{eq:id-c1i2m}, \eqref{eq:id-c1j} \\
		&= \nabla_{a_1} b_{a_2\cdots a_l:bc}
			+ \nabla_b b_{a_2\cdots a_l:a_1 c}
			- \nabla_{c} b_{\{a_2\cdots a_l\}:a_1\{b\}} \\
		& \quad {}
			- \nabla_{\{a_1\}} b_{\{a_2\cdots a_l\}:bc}
			- 2\nabla_{[\{b\}|} b_{\{a_2\cdots a_l\}:|c]a_1}
			+ 2\nabla_{[b|} b_{a_2\cdots a_l:|c] a_1} \\
		& \quad {}
			- (\nabla_{b'} b_{\{a_2\cdots a_l\}:a_1 c'}
				-\nabla_{c'} b_{\{a_2\cdots a_l\}:a_1 b'})
				\delta^{c'}_{\{c} \delta^{b'}_{b\}}
			~~ \eqref{eq:id-c2m} \\
		&= l\nabla_{[a_1} b_{a_2\cdots a_l]:bc}
			- 2\nabla_{[\{b\}|} b_{\{a_2\cdots a_l\}:|c]a_1} \\
		& \quad {}
			+ 2\nabla_{[\{c\}|} b_{\{a_2\cdots a_l\}:a_1|b]} \\
		&= l\nabla_{[a_1} b_{a_2\cdots a_l]:bc}
		= \d_L[b]_{a_1\cdots a_l:bc} .
\end{align*}

For the \emph{left divergence}~\eqref{eq:yop-Dl}, the intracolumn
identities are obvious. It remains to check the two intercolumn identities:
\begin{align*}
	\delta_L[b]_{\{a_1\cdots a_l\}:\{b\}c}
		&= \nabla^a (b_{a\{a_1\cdots a_l\}:\{b\}c}
			+ l^{-1} b_{\{b\}\{a_1\cdots a_l\}:ca}
			- l^{-1} b_{c\{a_1\cdots a_l\}:\{b\}a}) \\
		&= \nabla^a (b_{\{a a_1\cdots a_l\}:\{b\}c}
			+ b_{b a_1\cdots a_l:ca}
				~~ \eqref{eq:id-c1} \\
		& \quad {}
			- l^{-1}(l+1)b_{[b a_1\cdots a_l]:ca} + l^{-1} b_{b a_1\cdots a_l:ca}
				~~ \eqref{eq:id-asym} \\
		& \quad {}
			- l^{-1} b_{\{c a_1\cdots a_l\}:\{b\}a}
			+ l^{-1} b_{b a_1\cdots a_l:ca})
				~~\eqref{eq:id-c1} \\
		&= \nabla^a (b_{a a_1\cdots a_l:bc}
			+ 2l^{-1} b_{[b| a_1\cdots a_l:|c]a})
		= \delta_L[b]_{a_1\cdots a_l:bc} , \\
	\delta_L[b]_{\{a_1\cdots a_l\}:\{bc\}}
		&= \nabla^a (b_{a\{a_1\cdots a_l\}:\{bc\}}
			~~ \eqref{eq:id-c2} \\
		& \quad {}
			+ l^{-1} (b_{b'\{a_1\cdots a_l\}:c'a} - b_{c'\{a_1\cdots a_l\}:b'a})
				\delta_{\{b}^{b'}\delta_{c\}}^{c'})
				~~ \eqref{eq:id-c2a} \\
		&= \nabla^a (b_{\{a a_1\cdots a_l\}:\{bc\}}
			+ b_{b\{a_1\cdots a_l\}:\{c\}a}
				~~ \eqref{eq:id-c1} \\
		& \quad {}
			- l^{-1}(l-1)
					(b_{b a_1\cdots a_l:ca}
					-b_{c a_1\cdots a_l:ba})) \\
		&= \nabla^a (b_{a a_1\cdots a_l:bc}
			+ b_{\{b a_1\cdots a_l\}:\{c\}a} - b_{c a_1\cdots a_l:ba} \\
		& \quad {}
			- (1-l^{-1})
					(b_{b a_1\cdots a_l:ca}
					-b_{c a_1\cdots a_l:ba})) \\
		&= \nabla^a (b_{a a_1\cdots a_l:bc}
				+ l^{-1} b_{b a_1\cdots a_l:ca} - l^{-1} b_{c a_1\cdots a_l:ba}) \\
		&= \delta_L[b]_{a_1\cdots a_l:bc}
\end{align*}

For the \emph{right exterior derivative}~\eqref{eq:yop-dr}, the
following rewriting makes the intracolumn identities obvious:
\begin{align*}
	\d_R[b]_{a_1\cdots a_l:bc}
		&= \nabla_{b} b_{a_1\cdots a_l:c} - \nabla_{c} b_{a_1\cdots a_l:b} \\
		& \quad {}
			+(l-1)^{-1} \left(
				\nabla_{\{b\}} b_{\{a_1\cdots a_l\}:c}
				-\nabla_{\{c\}} b_{\{a_1\cdots a_l\}:b} \right)
			~~ \eqref{eq:id-asym} \\
		&= \nabla_{b} b_{a_1\cdots a_l:c} - \nabla_{c} b_{a_1\cdots a_l:b} \\
		& \quad {}
			- (l-1)^{-1} (l+1)
				(\nabla_{[b} b_{a_1\cdots a_l]:c}
				-\nabla_{[c} b_{a_1\cdots a_l]:b}) \\
		& \quad {}
			+ (l-1)^{-1}
				(\nabla_b b_{a_1\cdots a_l:c} - \nabla_c b_{a_1\cdots a_l:b}) .
\end{align*}
There are also two intercolumn identities to check:
\begin{align*}
	\d_R[b]_{\{a_1\cdots a_l\}:\{b\}c}
		&= \nabla_{\{b\}} b_{\{a_1\cdots a_l\}:c}
				- \nabla_{c} b_{\{a_1\cdots a_l\}:\{b\}} \\
		& \quad {}
			+ (l-1)^{-1}
				(\nabla_{\{\{b\}\}} b_{\{\{a_1\cdots a_l\}\}:c}
				-\nabla_{\{\{c\}} b_{\{a_1\cdots a_l\}\}^1_l:\{b\}}) \\
		&= \nabla_{\{b\}} b_{\{a_1\cdots a_l\}:c}
				- \nabla_{c} b_{a_1\cdots a_l:b} \\
		& \quad {}
			+ (l-1)^{-1} l \nabla_b b_{a_1\cdots a_l:c}
			- \nabla_{\{b\}} b_{\{a_1\cdots a_l\}:c}
			~~ \eqref{eq:id-c1i} \\
		& \quad {}
			- (l-1)^{-1} \nabla_b b_{\{b_1\cdots a_l\}:\{c\}}
			~~ \eqref{eq:id-c1j} \\
		& \quad {}
			- (l-1)^{-1} (\nabla_{c'} b_{a_1\cdots a_l:b'} - \nabla_{b'}
			  b_{a_1\cdots a_l:c'}) \delta^{b'}_{\{b} \delta^{c'}_{c\}}
			~~ \eqref{eq:id-c2m} \\
		&= \nabla_b b_{a_1\cdots a_l:c} - \nabla_c b_{a_1\cdots a_l:b} \\
		& \quad {}
			+ (l-1)^{-1} (\nabla_{\{b\}} b_{\{a_1\cdots a_l\}:c}
				- \nabla_{\{c\}} b_{\{a_1\cdots a_l\}:b}) \\
		&= \d_R[b]_{a_1\cdots a_l:bc} , \\
	\d_R[b]_{\{a_1\cdots a_l\}:\{bc\}}
		&= (\nabla_{b'} b_{\{a_1\cdots a_l\}:c'}
				- \nabla_{c'} b_{\{a_1\cdots a_l\}:b'})
			\delta^{b'}_{\{b} \delta^{c'}_{c\}}
			~~ \eqref{eq:id-c2m}, \eqref{eq:id-c1i2} \\
		& \quad {}
			+ (l-1)^{-1}
				(\nabla_{\{\{b'\}} b_{\{a_1\cdots a_l\}\}^a_l:c'}
				-\nabla_{\{\{c'\}} b_{\{a_1\cdots a_l\}\}^a_l:b'})
				\delta^{b'}_{\{b} \delta^{c'}_{c\}} \\
		&= \nabla_{\{b\}} b_{\{a_1\cdots a_l\}:c}
			-\nabla_{\{c\}} b_{\{a_1\cdots a_l\}:b} \\
		& \quad {}
			+ (\nabla_{b} b_{a_1\cdots a_l:c} - \nabla_{c} b_{a_1\cdots a_l:b}) \\
		& \quad {}
			- (l-1)^{-1}(l-2)
					(\nabla_{\{b\}} b_{\{a_1\cdots a_l\}:c}
					-\nabla_{\{c\}} b_{\{a_1\cdots a_l\}:b}) \\
		&= \nabla_{b} b_{a_1\cdots a_l:c} - \nabla_{c} b_{a_1\cdots a_l:b} \\
		& \quad {}
			+ (l-1)^{-1}
				(\nabla_{\{b\}} b_{\{a_1\cdots a_l\}:c}
				-\nabla_{\{c\}} b_{\{a_1\cdots a_l\}:b}) \\
		&= \d_R[b]_{a_1\cdots a_l:bc} .
\end{align*}

For the \emph{right divergence}~\eqref{eq:yop-Dr}, the intracolumn
identities are obvious, so there is only one intercolumn identity to
check:
\begin{align*}
	\delta_R[b]_{\{a_1\cdots a_l\}:\{b\}}
		&= \nabla^c b_{\{a_1 \cdots a_l\}:\{b\}c}
		= \nabla^c b_{a_1 \cdots a_l:bc} = \delta_R[b]_{a_1\cdots a_l:b} .
\end{align*}

\subsection{Composition identities}\label{app:yt-comp}
Below, we list identities between some possible compositions of the
operators~\eqref{eq:yop-tr}--\eqref{eq:yop-Dr}. These will be
instrumental in the following Sec.~\ref{app:yt-calabi}, where they will
be used to explicitly define the operators involved in the Calabi
complex~\eqref{eq:calabi-diag} and the necessary identities between
them. We do not show the necessary explicit calculations, as they are
lengthy but straightforward. It suffices to make use of the key
identities~\eqref{eq:id-asym}--\eqref{eq:id-c1i2m}, as explicitly
illustrated in Sec.~\ref{app:yt-pres}.

Recall that $\nabla$ denotes the Levi-Civita connection on a
pseudo-Riemannian space of constant curvature with metric $g$ and
dimension $n$. The Riemann tensor on this space is defined by the
convention $2\nabla_{[a}\nabla_{b]} \omega_c = \bar{R}_{ab:c}{}^d
\omega_d$ and is explicitly equal to
\begin{equation}\label{eq:Rb-def}
	\bar{R}_{ab:cd} = \frac{\lambda}{2} (g\odot g)_{ab:cd}
		= \lambda (g_{ac} g_{bd} - g_{ad} g_{bc}) ,
	\quad \text{with} ~~
	\lambda = \frac{k}{n(n-1)} ,
\end{equation}
such that $k = g^{ac} \bar{R}_{ab:c}{}^b$ is the curvature constant.

The simplest composition identity is of two left exterior derivatives
($l\ge 4$):
\begin{equation}\label{eq:dldl}
	\d_L\circ\d_L[b]_{a_1\cdots a_l:bc} = 0 .
\end{equation}
The principal symbols of these operators augment the left index column
of the argument and antisymmetrize over it, thus composing to zero, as
in the case of the de~Rham differential. This means that, at worst, the
result of the composition of the operators is of order zero and
proportional to the background curvature~$\bar{R}$ given
in~\eqref{eq:Rb-def}. Now, note that the background curvature is
$\GL(n)$-equivariantly composed only out of the metric and the
composition $\d_L\circ\d_L$ is also an equivariant operator (taking into
account the transformation properties of the covariant derivative and
the metric). Then the result of the composition (Young type
$(2,2,1^{l-2})$) must be equivariantly composed only out of the metric
$g$ (Young type $(2)$) and the argument $b$ (Young type
$(2,2,1^{l-4})$). However, according the Littlewood-Richardson
rules~\cite{fulton,lrr}, there is no non-trivial combination of that
kind. Therefore, the composition of these operator must vanish.

Next, we show the relation between the compositions $\delta_L\circ\d_L$
and $\d_L\circ\delta_L$, along with some auxiliary identities involving
the curvature. These formulas hold when the length of the left index
column of the output is $l>2$.
\begin{align}
\label{eq:ddRb2}
	2\nabla_{[a}\nabla_{b]} b_{a_1\cdots a_l:cd}
		&= (\bar{R}\cdot b)_{ab~a_1\cdots a_l:cd} , \\
\notag
	(\bar{R}\cdot b)_{ab~a_1\cdots a_l:cd}
		&= \bar{R}_{ab:\{e\}}{}^e b_{\{a_1\cdots a_l\}:cd}
			+ \bar{R}_{ab:c}{}^e b_{a_1\cdots a_l:ed}
			+ \bar{R}_{ab:d}{}^e b_{a_1\cdots a_l:ce} \\
\notag
		&= \lambda (g_{a\{b\}} - g_{b\{a\}}) b_{\{a_1\cdots a_l\}:cd} \\
\notag
		& \quad {}
			+ \lambda (g_{ac} b_{a_1\cdots a_l:bd} - g_{bc} b_{a_1\cdots a_l:ad}) \\
\label{eq:Rb2}
		& \quad {}
			- \lambda (g_{ad} b_{a_1\cdots a_l:bc} - g_{bd} b_{a_1\cdots a_l:ac}) ;
\end{align}

\begin{align*}
\MoveEqLeft
	(l+1)(\bar{R}\cdot b)_{a[b~a_1\cdots a_l]:cd} \\
		&= -l(l+1)\lambda g_{a[b} b_{a_1\cdots a_l]:cd} \\
		& \quad {}
			- (l+1)\lambda (g_{c[b} b_{a_1\cdots a_l]:ad}
				- g_{d[b} b_{a_1\cdots a_l]:ac}) , \\
\MoveEqLeft
	(l+1) (\bar{R}\cdot b)^a{}_{[a~a_1\cdots a_l]:bc} \\
		&= -(l(n-l)+2) \lambda b_{a_1\cdots a_l:bc}
			+ (-)^l \lambda (g\odot \tr[b])_{a_1\cdots a_l:bc} , \\
\MoveEqLeft
	(l+1)(\bar{R}\cdot b)^a{}_{[b~a_1\cdots a_l]:ca}
	-(l+1)(\bar{R}\cdot b)^a{}_{[c~a_1\cdots a_l]:ba} \\
		&= 0 ;
\end{align*}

\begin{align}
	\delta_L \circ \d_L[b]_{a_1\cdots a_l:bc}
\notag
		&= \square b_{a_1\cdots a_l:bc}
			- \d_L\circ \delta_L[b]_{a_1\cdots a_l:bc}
			+ l^{-1} \d_R\circ \delta_R[b]_{a_1\cdots a_l:bc} \\
\label{eq:dlDl}
		& \quad {}
			- (l(n-l)+2) \lambda b_{a_1\cdots a_l:bc}
			+ (-)^l \lambda (g\odot \tr[b])_{a_1\cdots a_l:bc} .
\end{align}

The main identity that will be useful in Section~\ref{app:yt-calabi} is
the following:
\begin{multline} \label{eq:dlDl-Dldl}
	(\delta_L\circ \d_L + \d_L \circ \delta_L)[b]_{a_1\cdots a_l:bc}
		= \square b_{a_1\cdots a_l:bc}
			+ l^{-1} \d_R\circ \delta_R[b]_{a_1\cdots a_l:bc} \\
			- (l(n-l)+2) \lambda b_{a_1\cdots a_l:bc}
			+ (-)^l \lambda (g\odot \tr[b])_{a_1\cdots a_l:bc} .
\end{multline}

The composition $\delta_L\circ \d_L$ has a special form when the length
of the left index column of the output is $l=2$:
\begin{align}
	\nabla_{\{a\}} r_{\{a_1a_2\}:bc}
\label{eq:id-y22}
		&= -(\nabla_{\{c\}} r_{\{a_1a_2\}:ba} + \nabla_{\{b\}} r_{\{a_1a_2\}:ca}) ;
\end{align}

\begin{align*}
	(\bar{R}\cdot r)^a{}_{b~a_1a_2:ca}
		&= -(n-1)\lambda r_{a_1a_2:bc}
				+ \lambda(g_{b a_1} \tr[r]_{a_2:c} - g_{b a_2} \tr[r]_{a_1:c}) , \\
\MoveEqLeft[7]
	(\bar{R}\cdot r)^a{}_{b~a_1a_2:ca}
	-(\bar{R}\cdot r)^a{}_{c~a_1a_2:ba} \\
		&= -2(n-1) \lambda r_{a_1a_2:bc} + \lambda (g\odot \tr[r])_{a_1a_2:bc} , \\
\MoveEqLeft[7]
	(\bar{R}\cdot r)^a{}_{[b~a_1a_2]:ca}
	-(\bar{R}\cdot r)^a{}_{[c~a_1a_2]:ba} \\
		&= 0 ;
\end{align*}

\begin{align}
	\delta_L\circ\d_L[r]_{a_1a_2:bc}
\notag
		&= \square r_{a_1a_2:bc}
			+ \frac{1}{2}\d_R\circ \delta_R[r]_{a_1a_2:bc} \\
\label{eq:D2d2}
		& \quad {}
			-2(n-1)\lambda r_{a_1a_2:bc} + \lambda (g\odot \tr[r])_{a_1a_2:bc} .
\end{align}

Next, we show the relation between the compositions $\tr{}\circ\d_L$ and
$\d_L\circ\tr$:
\begin{align}
	\tr{} \circ \d_L[b]_{a_1\cdots a_l:b}
\label{eq:trdl}
		&= \d_L\circ \tr[b]_{a_1\cdots a_l:b}
			+ (-)^l \delta_R[b]_{a_1\cdots a_l:b} .
\end{align}

Next, we show the relations between the compositions $\d_L\circ\d_R$,
$\d_R\circ\d_L$ and the operator $(\nabla\nabla\odot-)$, along with some
auxiliary identities involving the curvature. Note that below we make
use of the notation $[\cdots]^1_l$ which denotes idempotent
antisymmetrization of the indices $a_1 \cdots a_l$, as if they were
given in that position and ignoring any other indices appearing within
the same brackets.
\begin{align}
\label{eq:ddRb1}
	2\nabla_{[a}\nabla_{b]} b_{a_1\cdots a_l:c}
		&= (\bar{R}\cdot b)_{ab~a_1\cdots a_l:c} , \\
\notag
	(\bar{R}\cdot b)_{ab~a_1\cdots a_l:c}
		&= \bar{R}_{ab:\{d\}}{}^d b_{\{a_1\cdots a_l\}:c}
			+ \bar{R}_{ab:c}{}^d b_{a_1\cdots a_l:d} \\
\notag
		&=  \lambda (g_{a\{b\}} - g_{b\{a\}}) b_{\{a_1\cdots a_l\}:c} \\
\label{eq:Rb1}
		& \quad {}
			+ \lambda (g_{ac} b_{a_1\cdots a_l:b} - g_{bc} b_{a_1\cdots a_l:a}) ;
\end{align}

\begin{align*}
\MoveEqLeft
	l(\bar{R}\cdot b)_{b[a_1~a_2\cdots a_l]:c} \\
		&= -l^2\lambda g_{b[a_1} b_{a_2\cdots a_l]:c}
			+ \lambda (g\odot b)_{a_1\cdots a_l:bc} , \\
\MoveEqLeft
	2l(\bar{R}\cdot b)_{[b|[a_1~a_2\cdots a_l]:|c]} \\
		&= -(l-2) \lambda (g\odot b)_{a_1\cdots a_l:bc} , \\
\MoveEqLeft
	l(\bar{R}\cdot b)_{\{b\}\{[a_1~a_2\cdots a_l]\}:c} \\
		&= - l^2\lambda g_{b[a_1} b_{a_2\cdots a_l]:c}
				+ \lambda (g\odot b)_{a_1\cdots a_l:bc} , \\
\MoveEqLeft
	2l(\bar{R}\cdot b)_{[\{b\}|\{[a_1~a_2\cdots a_l]\}:|c]} \\
		&= -(l-2)\lambda (g\odot b)_{a_1\cdots a_l:bc} , \\
\MoveEqLeft
	Q_{b a_1~a_2\cdots a_l:c}
		= (\bar{R}\cdot b)_{a_1\{b\}~\{a_2\cdots a_l\}:c} \\
		&= -(\bar{R}\cdot b)_{b a_1~a_2\cdots a_l:c}
			+ l^2\lambda g_{a_1[b} b_{a_2\cdots a_l]:c}
			- \lambda (g\odot b)_{b a_2\cdots a_l:a_1 c} , \\
\MoveEqLeft
	l(\bar{R}\cdot b)_{[a_1\{b\}~\{a_2\cdots a_l\}]^1_l:c}
	-l(\bar{R}\cdot b)_{[a_1\{c\}~\{a_2\cdots a_l\}]^1_l:b} \\
		&= lQ_{b[a_1~a_2\cdots a_l]:c} - lQ_{c[a_1~a_2\cdots a_l]:b} \\
		&= 2(l-2)\lambda (g\odot b)_{a_1\cdots a_l:bc} ;
\end{align*}

\begin{align}
	\d_R\circ\d_L[b]_{a_1\cdots a_l:bc}
\notag
		&= (l-1)^{-1} l (\nabla\nabla\odot b)_{a_1\cdots a_l:bc} \\
\label{eq:drdl}
		& \qquad {}
			- \frac{l(l-2)}{2(l-1)} \lambda (g\odot b)_{a_1\cdots a_l:bc} ,
\end{align}

\begin{align}
	\d_L\circ\d_R[b]_{a_1\cdots a_l:bc}
\label{eq:dldr}
		&= (\nabla\nabla \odot b)_{a_1\cdots a_l:bc}
			+ \frac{l}{2} \lambda (g\odot b)_{a_1\cdots a_l:bc} .
\end{align}

The main identity that will be useful in Section~\ref{app:yt-calabi} is
the following:
\begin{align}
	l^{-1} \d_R\circ \d_L - (l-1)^{-1} \d_L\circ \d_R
\label{eq:drdl-dldr}
		&= -\lambda (g\odot b)_{a_1\cdots a_l:bc} .
\end{align}

\subsection{Calabi complex and its homotopy formulas}\label{app:yt-calabi}
Below, we use the special differential operators introduced earlier in
Section~\ref{app:yt-ops} to explicitly define the differential operators
$B_l$, $E_l$ and $P_l$ that make up the Calabi complex and its homotopy
formulas, as discussed in more detail in Section~\ref{sec:calabi-ops}:
\begin{align*}
	B_1[v]_{a:b}
		&= \nabla_a v_b + \nabla_b v_a , \\
	B_2[h]_{a_1a_2:bc}
		&= (\nabla\nabla\odot h)_{a_1a_2:bc}
			+ \lambda (g\odot h)_{a_1a_2:bc} , \\
	B_l[b]_{a_1\cdots a_l:bc}
		&= \d_L[b]_{a_1\cdots a_l:bc} \quad (l\ge 3) , \\
	E_1[h]_a
		&= \nabla^b h_{a:b} - \frac{1}{2} \nabla_a \tr[h] , \\
	E_2[b]_{a:b}
		&= \tr[b]_{a:b} , \\
	E_{l+1}[b]_{a_1\cdots a_l:bc}
		&= \left(\delta_L - (-)^l l^{-1} \d_R\circ\tr\right)[b]_{a_1\cdots a_l:bc}
			\quad (l\ge 2) .
\end{align*}
Explicit formulas for $B_l$ and $E_l$ with low $l$ have been given in
Section~\ref{sec:calabi-ops}.

Further, we make use of the identities given in
Section~\ref{app:yt-comp} to show that these operators satisfy the
required identities, namely $B_{l+1}\circ B_l = 0$. The identities
$B_2\circ B_1 = 0$ and $B_3\circ B_2 = 0$ have already been shown to
follow in Section~\ref{sec:calabi-ops} from the usual transformation
properties of the Riemann curvature tensor under diffeomorphisms and
from its Bianchi identities. The identities $B_{l+1} \circ B_l = 0$ for
$l>2$ then follow directly from the composition identity $\d_L\circ\d_L
= 0$ in Equation~\eqref{eq:dldl}.

Again, appealing to the identities of Section~\ref{app:yt-comp}, we give
the homotopy formulas $P_l = E_{l+1}\circ B_{l+1} + B_l\circ E_l$ for
$l\le 2$:
\begin{align*}
	E_1\circ B_1[v]_a
		&= \nabla^b (\nabla_a v_b + \nabla_b v_a)
			- \frac{1}{2} \nabla_a (2\nabla^b v_b) \\
	P_0 &= \square v_a + \lambda (n-1) v_a ,
\end{align*}

\begin{align*}
\MoveEqLeft
	(E_2 \circ B_2 + B_1\circ E_1)[h]_{a:b} \\
		&= (\nabla\nabla \odot h)_{ac:b}{}^c
			+ \lambda (g\odot h)_{ac:b}{}^c \\
		& \quad {}
			+ \nabla_a\left(\nabla^c h_{b:c} - \frac{1}{2} \nabla_b \tr[h]\right)
			+ \nabla_b\left(\nabla^c h_{a:c} - \frac{1}{2} \nabla_a \tr[h]\right) \\
	P_1 &= \square h_{ab}
			- 2\lambda h_{ab} + 2\lambda g_{ab} \tr[h] ,
\end{align*}

\begin{align*}
\MoveEqLeft
	(E_3\circ B_3 + B_2\circ E_2)[r]_{a_1a_2:bc} \\
		&= \delta_L\circ\d_L[r]_{a_1a_2:bc}
			- \frac{1}{2}\d_R\circ\tr{}\circ \d_L[r]_{a_1a_2:bc}
			\\
		& \quad {}
			+ (\nabla\nabla\odot \tr[r])_{a_1a_2:bc}
			+ \lambda (g\odot \tr[r])_{a_1a_2:bc} \\
	P_2 &= \square r_{a_1a_2:bc}
			-2(n-1)\lambda r_{a_1a_2:bc} + 2\lambda (g\odot \tr[r])_{a_1a_2:bc} .
\end{align*}

Finally, the same set of identities also implies the following formulas
for $P_l$ with $l>2$:
\begin{align*}
\MoveEqLeft
	(E_{l+1}\circ B_{l+1} + B_l\circ E_l)[b]_{a_1\cdots a_l:bc} \\
		&= (\delta_L\circ \d_L + \d_L \circ \delta_L)[b]_{a_1\cdots a_1:bc} \\
		& \quad {}
			- (-)^l
				\left(l^{-1} \d_R\circ \tr{}\circ \d_L
				- (l-1)^{-1}\d_L\circ\d_R\circ\tr{}\right)[b]_{a_1\cdots a_l:bc} \\
	P_l &= \square b_{a_1\cdots a_l:bc}
			- (l(n-l)+2)\lambda b_{a_1\cdots a_l:bc}
			+ (-)^l 2\lambda (g\odot \tr[b])_{a_1\cdots a_l:bc} .
\end{align*}
Explicit formulas for $P_l$ with low $l$ have also been given in
Section~\ref{sec:calabi-ops}. Recall that, as in
Equation~\eqref{eq:Rb-def}, we have used the notation $\lambda =
\frac{k}{n(n-1)}$.

\subsection{An adjoint operator}\label{app:yt-adj}
Here we derive Equation~\eqref{eq:dlDl-adj}, which according to the
general formula~\eqref{eq:gen-adj} implies that $-n^{-1} \delta_L$ is
the formal adjoint of $\d_L$ when acting on tensors of Young type
$c_{a_2\cdots a_n:bc}$.

\begin{align}
\notag
\MoveEqLeft
	\nabla_a (c^{a a_2\cdots a_n : bc} b_{a_2\cdots a_n : bc}) \\
\notag
		&= c^{a a_2\cdots a_n : bc} \nabla_a b_{a_2\cdots a_n : bc}
			+ (\nabla_a c^{a a_2\cdots a_n : bc}) b_{a_2\cdots a_n : bc} \\
\notag
		&= c^{a a_2\cdots a_n : bc} \nabla_{[a} b_{a_2\cdots a_n] : bc}
			+ \delta_L[c]^{a_2\cdots a_n:bc} b_{a_2\cdots a_n:bc} \\
\notag
		& \quad {}
			- \frac{1}{n-1} (\nabla_a c^{b a_2\cdots a_n:ca}
				-\nabla_a c^{c a_2\cdots a_n:ba}) b_{a_2\cdots a_n:bc} \\
\notag
		&= \frac{1}{n} c^{a a_2\cdots a_n : bc} \d_L[b]_{a a_2\cdots a_n : bc}
			+ \delta_L[c]^{a_2\cdots a_n:bc} b_{a_2\cdots a_n:bc} \\
\notag
		& \quad {}
			- \frac{1}{n-1}
				(b_{[a_2\cdots a_n:b]c}\nabla_a c^{b a_2\cdots a_n:ca}
				+b_{[a_2\cdots a_n:c]b}\nabla_a c^{c a_2\cdots a_n:ba})
			~~ \eqref{eq:id-asym} \\
\label{eq:dlDl-adj-der}
		&= \frac{1}{n} c^{a a_2\cdots a_n : bc} \d_L[b]_{a a_2\cdots a_n : bc}
			+ \delta_L[c]^{a_2\cdots a_n:bc} b_{a_2\cdots a_n:bc} .
\end{align}
We have simply used the definitions of the $\d_L$ and $\delta_L$
differential operators as well as the fact that the contraction of two
tensors, one of which being totally anti-symmetric in a subset of
indices, allows the insertion of an anti-symmetrization over the
corresponding indices of the second tensor. Finally, some of the
anti-symmetrizations annihilated the corresponding tensors, due to their
intercolumn identities and the application of the
identity~\eqref{eq:id-asym}.

\section{Homological algebra}\label{app:homalg}
Below we introduce some basic notions from homological algebra. A
standard text on the subject is~\cite{weibel}, where more details can be
found along with complete proofs.

Let $A_i$, also denoted $A_\bullet$, be a sequence of vector spaces
(real vector spaces, for our purposes) with linear maps $A_i \to
A_{i+1}$ between them. If each successive pair of maps $A_{i-1} \to A_i
\to A_{i+1}$ composes to zero, this sequence is called a \emph{complex
(of vector spaces)} or a \emph{cochain complex}, with an element $a\in
A_i$ being referred to as a \emph{cochain (of degree $i$)}, and the maps
$A_i \to A_{i+1}$ referred to as \emph{cochain differentials}. Any
complex gives rise to cohomologies
\begin{equation}
	H^i(A_\bullet) = \ker (A_i \to A_{i+1}) / \im (A_{i-1} \to A_i).
\end{equation}
If all the cohomologies vanish, $H^i(A_\bullet) = 0$ or the image of
each map is equal to the kernel of the subsequent map, the complex is
called \emph{exact} or an \emph{exact sequence}. Given two complexes
$A_\bullet$ and $B_\bullet$, the vertical maps in the diagram
\begin{equation}
\begin{tikzcd}
	\cdots \ar{r} &
	A_i \ar{r} \ar{d} &
	A_{i+1} \ar{r} \ar{d} &
		\cdots \\
	\cdots \ar{r} &
	B_i \ar{r} &
	B_{i+1} \ar{r} &
	\cdots
\end{tikzcd}
\end{equation}
are called \emph{cochain maps} provided they make the diagram commute.
Furthermore, the diagonal maps in a diagram like
\begin{equation}
\begin{tikzcd}
	\cdots \ar{r}{\d} &
	A_i \ar{r}{\d} \ar{d}{h} \ar[dashed]{dl}[swap]{\delta} &
	A_{i+1} \ar{r}{\d} \ar{d}{h} \ar[dashed]{dl}[swap]{\delta} &
		\cdots \ar[dashed]{dl}[swap]{\delta} \\
	\cdots \ar{r}{\d} &
	B_i \ar{r}{\d} &
	B_{i+1} \ar{r}{\d} &
	\cdots
\end{tikzcd}
\end{equation}
are called \emph{cochain homotopies}. The homotopy maps induce vertical
cochain maps by the formula $h = \d\delta + \delta\d$. It is a basic
fact that cochain maps $A_\bullet \to B_\bullet$ naturally induce maps
in cohomology $H^i(A_\bullet) \to H^i(B_\bullet)$. Of course, identity
chain maps induce identity maps in cohomology and zero chain maps
induces zero maps in cohomology. Also, two cochain maps induce the same
map in cohomology when their difference is induced by a cochain
homotopy.

A \emph{short exact sequence}
\begin{equation}
\begin{tikzcd}
	0 \ar{r} &
	A_\bullet \ar{r} &
	B_\bullet \ar{r} &
	C_\bullet \ar{r} &
	0
\end{tikzcd}
\end{equation}
between complexes $A_\bullet$, $B_\bullet$ and $C_\bullet$ consists of
cochain maps between them such that each instance of
\begin{equation}
\begin{tikzcd}
	0 \ar{r} &
	A_i \ar{r} &
	B_i \ar{r} &
	C_i \ar{r} &
	0
\end{tikzcd}
\end{equation}
is an exact sequence of vector spaces. Another basic fact of homological
algebra is that a short exact sequence of complexes induces the
following \emph{long exact sequence} in cohomology
\begin{equation}\label{eq:long-ex}
\begin{tikzcd}
	\cdots \arrow{r} &
	H^i(A_\bullet) \arrow{r} &
	H^i(B_\bullet) \arrow{r} \arrow[draw=none]{d}[name=Z,shape=coordinate]{}&
	H^i(C_\bullet)
		\arrow[rounded corners,
			to path={ -- ([xshift=2ex]\tikztostart.east)
			|- (Z) [near end]\tikztonodes
			-| ([xshift=-2ex]\tikztotarget.west)
			-- (\tikztotarget)}]{dll}{} \\
	&
	H^{i+1}(A_\bullet) \arrow{r} &
	H^{i+1}(B_\bullet) \arrow{r} &
	H^{i+1}(C_\bullet) \arrow{r} & \cdots
\end{tikzcd} ,
\end{equation}
where the maps $H^i(A_\bullet) \to H^i(B_\bullet)$ and $H^i(B_\bullet)
\to H^i(C_\bullet)$ are induced by the cochain maps from the short exact
sequence and the \emph{connecting} maps $H^i(C_\bullet) \to
H^{i+1}(A_\bullet)$ are induced by the cochain differential.

Finally, another standard result is the so-called \emph{$5$-lemma} (or a
simple variant thereof). It states that the central vertical map in the
commutative diagram
\begin{equation}
\begin{tikzcd}
	A_{-2} \ar{r} \ar{d}{\cong} &
	A_{-1} \ar{r} \ar{d}{\cong} &
	A_{0} \ar{r} \ar{d} &
	A_{1} \ar{r} \ar{d}{\cong} &
	A_{2} \ar{d}{\cong}
	\\
	B_{-2} \ar{r} &
	B_{-1} \ar{r} &
	B_{0} \ar{r} &
	B_{1} \ar{r} &
	B_{2}
\end{tikzcd}
\end{equation}
is an isomorphism, provided that the top and bottom rows are exact
sequences and all the other vertical maps are isomorphisms themselves.

\section{Jets and jet bundles}\label{app:jets}
In this appendix, we briefly introduce jet bundles, fix the relevant
notation and discuss differential operators in the context of jets. For
simplicity, we restrict ourselves to fields taking values in vector
bundles. However, the discussion could be straightforwardly generalized
to general smooth bundles. More details, as well as a coordinate
independent definition, can be found in the standard
literature~\cite{olver-lie,kms}.

Given a vector bundle $F\to M$ over a connected $n$-dimensional smooth
manifold $M$, the \emph{$k$-jet bundle} $J^kF\to M$ is a vector bundle
whose defining characteristic is that for any (possibly non-linear)
differential operator $f\colon \Secs(F)\to \Secs(F')$ of order $k$,
there exists a canonical factorization $f[u] = f\circ j^k u$ for any
section $u\colon M\to F$, where the \emph{$k$-jet prolongation}
$j^k\colon \Secs(F) \to \Secs(J^kF)$ is composed with a smooth bundle
map $f\colon J^kF\to F'$, which by a slight abuse of notation we denote
using the same symbol as the original differential operator. Composing
the differential operator $f$ with an $l$-jet prolongation canonically
defines a new differential operator $p_l f\colon J^{l+k}F \to J^lF'$
called its \emph{$l$-prolongation}, $j^l f[u] = p_l f\circ j^k u$. Given
a trivializable restriction $F_U\to U$ of $F$ to a chart $U\sso M$ with
local coordinates $(x^i)$ and fiber-adapted local coordinates
$(x^i,u^a)$, there is a corresponding adapted chart $J^kF_U\sso J^kF$
with adapted local coordinates $(x^i,u^a_I)$, where $I=i_1\cdots i_l$
runs through multi-indices of orders $|I|=l=0,\ldots,k$. In these
coordinates, the $k$-jet prolongation is given by $j^k u(x) =
(x^i,\partial_I u^a(x))$, while the $l$-prolongation is given by $p_l
f[u](x) = (x^i,\del_I f^b[u](x))$, where $f[u](x) = (x^i,f^b[u](x))$ in
fiber-adapted local coordinates $(x^i,v^b)$ on $F'$. For any $l>k$,
discarding the information about all derivatives of order $>k$ defines a
natural projection $J^lF\to J^kF$.  The projective limit $J^\oo F :=
\varprojlim_{k\to\oo} J^k F$ defines the \emph{$\oo$-jet bundle}. The
$\oo$-jet prolongation $j^\oo$ and $\oo$-prolongation $p_\oo$ are
defined in the obvious way. By composing with the natural projection
$J^\oo F\to J^kF$, the differential operator $f$ also canonically defines
the smooth bundle map $f\colon J^\oo F\to J^kF \stackrel{f}{\to} F'$,
which is again denoted by the same symbol $f$. Conversely, due to the
projective limit construction, any smooth bundle map $f\colon J^\oo F
\to F'$ can only depend on finitely many coordinates of its domain,
which means that there exists a $k\ge 0$ such that this bundle map
canonically factors as $f\colon J^\oo F\to J^k F \stackrel{f}{\to} F'$,
with the smallest such $k$ being the \emph{order} of $f$.

Given vector bundles $F\to M$, $E\to M$ and a differential operator
$e\colon \Secs(F)\to \Secs(E)$, we write down the \emph{partial
differential equation} (PDE) $e[\psi] = 0$, with $\psi \in \Secs(F)$.
Sometimes it is convenient to refer to $F\to M$ as the \emph{field
bundle} and to $E\to M$ as the \emph{equation bundle}. We will only
consider linear PDEs below, where the differential operator $e$ is
linear. We denote the
local spaces of solutions by $\S_e(U)$, where $U\sse M$ is open and $\psi
\in \Secs(F|_U)$ belongs to $\S_e(U)$ iff $e[\psi] = 0$ on $U$. The PDE
is said to be of order $k$ if it can be written as $e[\psi] = e(j^k\psi)$,
where on the right-hand side we have a (linear) bundle map $e\colon J^k
F \to E$.

In adapted coordinates $(x^i,u^a)$ on $F$, the PDE $e[\psi] = 0$ has the
form $e^I(x) \del_I \psi(x) = 0$. When the PDE is of order $k$, the
coefficients $e^I(x)$ vanish for multi-indices with $|I| > k$. The
coefficients of the highest order derivatives, $e^I(x)$ with $|I|=k$, in
fact transform as a tensor under coordinate changes and define a linear
bundle map $\sigma e \colon F\otimes S^kT^*M \to E$ called the
\emph{principal symbol} of $e$. If we fix $(x,p)\in T^*M$, then the
corresponding linear map $\sigma_{x,p} e = \sigma e(x)\cdot p^{\otimes
k} \colon F_x M \to E_x M$ can also be referred to as the value of the
principal symbol of $e$ at $(x,p)$.

The PDE $e[\psi] = 0$ is \emph{equivalent} to the PDE $e'[\psi'] = 0$,
with $e'\colon J^{k'}F' \to E'$, if they have isomorphic solution
spaces.  That is $e[\psi] = 0$ implies that $e'[f[\psi]] = 0$ and
$e'[\psi'] = 0$ implies that $e[f'[\psi']] = 0$, for some differential
operators $f$ and $f'$. In fact, it can be shown that the two PDEs are
equivalent precisely when they fit into the following diagram, where
arrows are differential operators and the bundle labels stand in for the
corresponding spaces of sections,
\begin{equation}
\begin{tikzcd}
	F \ar[swap]{r}{e} \ar[shift left]{d}{f}
		& E \ar[shift left]{d}{g} \ar[dashed,bend right,swap]{l}{q} \\
	F' \ar{r}{e'} \ar[shift left]{u}{f'}
		& E' \ar[shift left]{u}{g'} \ar[dashed,bend left]{l}{q'}
\end{tikzcd} ,
\end{equation}
where differential operators satisfy the following identities:
\begin{align}
	e' \circ f &= g \circ e , &
		f' \circ f &= \id + q \circ e , \\
  e \circ f' &= g' \circ e' , &
		f \circ f' &= \id + q' \circ e' .
\end{align}
The reason we can express equivalence in this way, at least when all the
differential operators are linear, follows from linear algebra on jets.
If we replace the operators $e$, $e'$, $f$ and $f'$ by the corresponding
jet bundle maps, prolonged to the appropriate order, it follows from
basic linear algebra that there exist jet bundle maps that ostensibly
correspond to the operators $g$, $g'$, $q$ and $q'$. It then follows
from a deeper analysis of the properties of linear
PDEs~\cite{goldschmidt-lin,bcggg,seiler-inv} that once these
differential operators are defined using prolongations of sufficiently
high order, the appropriate identities hold at all higher orders. As a
simple example, note that the equation $e[\psi] = 0$ and its
prolongation $p_k e[\psi] = 0$ are equivalent, with $f=f'=\id$.

Consider vector bundles $E, F, G \to M$ and linear differential
operators
\begin{equation}
	f\colon \Secs(G) \to \Secs(F)
	\quad \text{and} \quad
	e \colon \Secs(F) \to \Secs(E),
\end{equation}
of respective orders $k$ and $l$, such that $e\circ f = 0$.
We say that the composition of $e$ and $f$ is \emph{formally exact} if
the composition $p^{k+m} e \circ p^m f$ of jet bundle maps is exact in
the usual linear algebra sense (the image of $p^m f$ is equal to the
kernel of $p^{k+m} e$). Formal exactness is a powerful hypothesis. For
instance, it implies that certain differential operators factorise
through either $e$ or $f$~\cite{goldschmidt-lin,pommaret}. Namely, if
$g$ is any differential operator such that $g \circ f = 0$, then there
must exist another differential operator $g'$ such that $g = g' \circ
e$. Similarly, if $g$ is any differential operator such that $e \circ g
= 0$, then there must exist another differential operator $g'$ such that
$g = f\circ g'$.

\section{Deformations of flat principal bundles}\label{app:bndl-deform}
The material below requires some familiarity with the theory of
$G$-principal bundles~\cite{steenrod,kobayashi,morita,baum}. Its main
point is to show how one can reduce the computation of the degree-$1$
cohomology space of a certain locally constant sheaf on a manifold $M$
to the computation of the degree-$1$ group cohomology of the fundamental
group $\pi = \pi_1(M)$ with coefficients in a certain corresponding
representation. This reformulation is a significant simplification
because group cohomology calculations can often be reduced to finite
dimensional linear algebra and many explicit calculations of that sort
have already been performed and are available in the literature. The
connection between these sheaf and group cohomologies is established by
noticing that both of them describe equivalence classes of infinitesimal
deformations of flat principal bundles. Unfortunately, this argument is
not sufficient to establish an isomorphism between these sheaf and group
cohomologies in higher degrees, but degree-$1$ is already interesting
because it is the one relevant in the physical application we have in
mind (Section~\ref{sec:appl}).

We briefly recall some basic facts about principal
$G$-bundles~\cite{steenrod,kobayashi,morita,baum}. The total space of
the \emph{principal bundle} $P\to M$ has fibers that are right principal
homogeneous spaces of the group $G$. A \emph{right principal homogeneous
space} is defined by the possession a free, transitive action of $G$.
Thus, any principal homogeneous space is diffeomorphic to the manifold
underlying the Lie group $G$ and, if any particular point is identified
with the unit element of $G$, the action of $G$ coincides with
right-multiplication. The fiber-wise right action of $G$ on $P$ allows
us to construct so-called \emph{associated bundles}.  If $F$ is a left
$G$-space, with action $\rho \colon G\to \Aut(F)$, then we define the
corresponding associated bundle, denoted sometimes $F_\rho$ or $F_P$, as
$P\times_\rho F \cong (P \times F)/G$, where the quotient identifies the
points $(pg,f) = (p,gf)$, $p\in P$, $f\in F$, $g\in G$. In particular,
we can define the associated bundles $G_P = P \times_\Ad G$ and $\g_P = P
\times_\Ad \g$, where $\Ad$ denotes respectively the adjoint action of
the Lie group on itself and its Lie algebra, $\Ad(b)a = b a b^{-1}$ and
$\Ad(b)\alpha = b \alpha b^{-1}$, with $a,b\in G$ and $\alpha \in \g$.
When convenient and for simplicity of notation, we shall implicitly
treat Lie group and Lie algebra elements as if they were faithfully
represented as matrices.

The principal $G$-bundle $P\to M$ is called \emph{flat} when it is
endowed with a flat connection or a notion of flat parallel transport,
which are compatible with the structure group action. The details of
these notions are discussed in the next subsections. The arguments
presented therein roughly establish the following
\begin{prop}\label{prp:cohom-equiv}
Let $P\to M$ be a flat principal $G$-bundle and $\pi = \pi(M)$ be the
fundamental group of $M$. We can define the following structures
associated to it: (a) the sheaf $\F_\g$ of locally flat sections of the
associated bundle $\g_P \to M$, (b) the twisted de~Rham complex
$(\Lambda^\bullet M \otimes \g_P, D)$, and (c) the monodromy
representation $\rho \colon \pi \to G$. Then the following cohomology
groups (respectively the sheaf, twisted de~Rham and group cohomologies)
are all isomorphic, by reason of each being isomorphic to the space of
equivalence classes of infinitesimal deformations of the flat principal
$G$-bundle structure of $P\to M$:
\begin{equation}
	H^1(M,\F_\g)
	\cong H^1(\Lambda^\bullet M\otimes \g_P,D)
	\cong H^1(\pi, \Ad_\rho) .
\end{equation}
\end{prop}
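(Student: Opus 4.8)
The plan is to identify all three cohomology groups with the single geometric object named in the statement, namely the vector space of equivalence classes of infinitesimal deformations of the flat principal $G$-bundle structure on $P\to M$. The first isomorphism, $H^1(M,\F_\g)\cong H^1(\Lambda^\bullet M\otimes\g_P,D)$, is essentially already supplied by the machinery of Section~\ref{sec:sh-res}: by construction $\F_\g$ is the sheaf of $D$-parallel sections of $\g_P$, which is exactly the degree-$0$ cohomology sheaf $\H^0$ of the twisted de~Rham complex $(\Lambda^\bullet M\otimes\g_P,D)$. Since that complex is locally exact (Proposition~\ref{prp:tw-dr-exact}) it is a resolution of $\F_\g$, and since its terms are sheaves of sections of vector bundles they are soft, hence acyclic (Proposition~\ref{prp:sh-res}); therefore the twisted de~Rham cohomology computes the sheaf cohomology of $\F_\g$ in every degree, the case $i=1$ being the one needed here.

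For the deformation reading of the middle group I would describe the flat structure by a flat principal connection, locally a $\g$-valued connection form $\omega$ satisfying the Maurer--Cartan flatness equation. An infinitesimal deformation through flat connections is a first-order variation $\omega\mapsto\omega+t\alpha$ with $\alpha\in\Secs(\Lambda^1 M\otimes\g_P)$; linearizing the flatness condition at $t=0$ gives precisely the twisted cocycle equation $D\alpha=0$. Two deformations differing by an infinitesimal vertical (gauge) automorphism of $P$, generated by a section $\xi\in\Secs(\g_P)$, differ by $\alpha\mapsto\alpha+D\xi$. Hence the quotient of infinitesimal flat deformations by infinitesimal gauge equivalences is exactly $\ker D/\im D=H^1(\Lambda^\bullet M\otimes\g_P,D)$.

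For the group-cohomology reading I would pass to the monodromy picture. Parallel transport of the flat connection along loops defines the monodromy homomorphism $\rho\colon\pi\to G$, and two flat structures are isomorphic iff their monodromies are conjugate in $G$. An infinitesimal deformation of $\rho$ is a map $u\colon\pi\to\g$ such that $\gamma\mapsto\exp(t\,u(\gamma))\,\rho(\gamma)$ remains a homomorphism to first order; differentiating the homomorphism relation yields the group-cohomology $1$-cocycle condition $u(\gamma_1\gamma_2)=u(\gamma_1)+\Ad_{\rho(\gamma_1)}u(\gamma_2)$, while deformation by an infinitesimal conjugation produces a coboundary of the form $u(\gamma)=\Ad_{\rho(\gamma)}\xi-\xi$. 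Thus the infinitesimal deformations of $\rho$ modulo conjugation are exactly $H^1(\pi,\Ad_\rho)$.

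It remains to match the two deformation spaces, and this is where the main work lies. I would build the comparison map by integrating an infinitesimal connection deformation $\alpha$ to a first-order change of the holonomy around each generating loop, checking that $D\alpha=0$ makes this assignment depend only on the homotopy class of the loop (so it yields a well-defined $u\colon\pi\to\g$) and that a gauge-trivial $\alpha=D\xi$ is sent to the conjugation coboundary determined by $\xi$. The hard part will be verifying that this holonomy derivative is an \emph{isomorphism} on the quotients, not merely a well-defined homomorphism: one must show every infinitesimal deformation of $\rho$ is realized by a connection deformation and that the two equivalence relations correspond exactly. This is the infinitesimal form of the Riemann--Hilbert correspondence, and in degree $1$ it can be made explicit on a good cover by comparing the twisted de~Rham description of $H^1(\Lambda^\bullet M\otimes\g_P,D)$ with the bar-complex description of $H^1(\pi,\Ad_\rho)$; one does not expect such a clean identification in higher degrees, consistent with the caveat in the statement.
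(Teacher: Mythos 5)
Your proposal is correct in substance and shares the paper's central idea---identifying each of the three cohomology groups with the space of equivalence classes of infinitesimal deformations of the flat structure---but it is assembled differently at the two joints. For the middle and right-hand groups you do exactly what the paper does in Sections~\ref{sec:conn-def} and~\ref{sec:monodr-def}: linearize the flatness (Maurer--Cartan) equation to get $D\alpha = 0$ modulo $\alpha \mapsto \alpha + D\xi$, and differentiate the homomorphism property of $\rho$ to get group $1$-cocycles modulo conjugation coboundaries. The first difference is that you obtain $H^1(M,\F_\g)\cong H^1(\Lambda^\bullet M\otimes\g_P,D)$ from the resolution machinery (Propositions~\ref{prp:tw-dr-exact} and~\ref{prp:sh-res}), which is clean and self-contained; the paper instead realizes $H^1(M,\F_\g)$ as \v{C}ech cohomology on a good cover and identifies \v{C}ech $1$-cocycles with infinitesimal deformations of the locally constant transition functions $t_{U,V}$ (Section~\ref{sec:cocyc-def}). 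What the paper's choice buys is precisely the second joint, the one you flag as ``the hard part'': since the monodromy is assembled from products of transition functions along loops, deformations of the transition cocycle compare almost tautologically with deformations of $\rho$, so the paper never needs your direct holonomy-derivative map from connection deformations to monodromy deformations (the infinitesimal Riemann--Hilbert correspondence). Your map is the right one and your checks (homotopy invariance from $D\alpha=0$, gauge-trivial deformations going to coboundaries) are the right ones, but bijectivity on the quotients is left as a plan in your write-up; to be fair, the paper leaves the corresponding verification---the equivalence of the cocycle, connection, and monodromy deformation problems---to the standard references on flat bundles, so the two treatments are at a comparable level of completeness. If you want to close your gap yourself, routing the de~Rham-to-group-cohomology comparison through the \v{C}ech/cocycle picture on a good cover, as the paper implicitly does, is considerably easier than integrating holonomies directly.
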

We defer to the standard references~\cite{kobayashi,morita,baum} for
detailed proofs.

\subsection{Flat principle bundle cocycle}\label{sec:cocyc-def}
There are multiple ways to construct a principal $G$-bundle over a
manifold $M$. The one that will be important for us here defines also a
bit more structure than principal bundle itself, it also defines a flat
connection thereon. We shall refer to these structures as \emph{flat
principal $G$-bundles}. It is well known that this data can be specified
as follows. Let $\U = (U_i)$ be an open cover of $M$ and $(U,V)\mapsto
t_{U,V} \in G$ an assignment of a structure group element to every
ordered pair of opens $U,V\in \U$. Each $t_{U,V}$ is called a
\emph{transition map}. The transition maps define a principle $G$-bundle
with a flat connection if they satisfy the following \emph{cocycle
identities},
\begin{gather}
\label{eq:t-inv}
	t_{U,V} t_{V,U} = \id , \\
\label{eq:t-cocyc}
	t_{U,V} t_{V,W} t_{W,U} = \id .
\end{gather}
A change of trivialization is an assignment $U\mapsto a_U \in G$ for
every open $U \in \U$. The modified transition functions $t'_{U,V} = a_U
t_{U,V} a_V^{-1}$ define an equivalent flat principal $G$-bundle.

Next, we describe infinitesimal deformations of a flat bundle cocycle
$t_{U,V}$. Namely, suppose that $t_{U,V}(s)$ is a smooth $1$-parameter
family of flat bundle cocycles, with $t_{U,V}(0) = t_{U,V}$. Let us
denote the derivative at $s=0$ as $\dot{t}_{U,V} = \tau_{U,V} t_{U,V}$,
with $\tau_{U,V} \in \g$. Then, the defining relations~\eqref{eq:t-inv}
and~\eqref{eq:t-cocyc} impose the following constraints on the
infinitesimal deformation $\tau_{U,V}$:
\begin{gather}
	\tau_{U,V} = - t_{V,U}^{-1} \tau_{V,U} t_{V,U} , \\
\label{eq:tau-cocyc}
	\tau_{U,V}
	+ t_{U,V} \tau_{V,W} t_{U,V}^{-1}
	- \tau_{W,U}
	= 0 .
\end{gather}
On the other hand, suppose that $a_U(s)$ is a smooth $1$-parameter
family of trivialization changes, with $a_U(0) = \id$. Let us write the
derivative at $s=0$ as $\dot{a}_U = -\sigma_U$. The induced
infinitesimal deformation in the transition functions $t_{U,V}(s) =
a_U(s) t_{U,V} a_V^{-1}$ is
\begin{equation}
\label{eq:sigma-bdr}
	\tau_{U,V} = - \sigma_U + t_{U,V} \sigma_V t_{U,V}^{-1} .
\end{equation}

The point of the above calculations is to show that infinitesimal
deformations of the flat principal bundle cocycle, up to infinitesimal
trivialization changes, correspond precisely to the cohomology classes
of a certain sheaf. To complete the argument, we need only introduce the
basic definitions of \v{C}ech cohomology, which is known to compute the
cohomology vector spaces of a corresponding sheaf~\cite{bredon,ks}. We will
take the sheaf to be $\F_\g$, where $\F_\g(U)$ consists of the locally
flat sections of the bundle $\g_P\to M$, associated to the flat
principal $G$-bundle $P\to M$. Let us now fix an open cover $\U = (U_i)$
of $M$ such that each $U_i$ is contractible and any multiple intersection
of the $U_i$ is also contractible. On a manifold, any open cover can be
refined to such a \emph{good cover}~\cite[Thm.5.1]{bott-tu}. In
particular, the flat principal bundle cocycle can always be refined to a
good cover. The good cover hypothesis ensures that the \v{C}ech
cohomology spaces are in fact isomorphic to the actual sheaf
cohomologies.

We define a \v{C}ech $q$-cochain $\sigma$ as an assignment
$(U_{i_1},\ldots, U_{i_{q+1}}) \mapsto \sigma_{i_1\cdots i_{q+1}} \in
\F_\g(U_i\cap \cdots \cap U_{i_{q+1}})$ to every ordered $(q+1)$-tuple
of opens from $\U$. By local flatness, for any $U \in \U$, $\F_\g(U)
\cong \bar{F}_\g \cong \g$ (cf.~Section~\ref{sec:sh-lc}). It is
convenient to think of a \v{C}ech cocycle $\sigma_{i_1\cdots i_{q+1}}$
as taking values in $\g \cong \F_\g(U_i)$. This means that
$\sigma_{i\cdots}$ and $\sigma_{j\cdots}$ restrict to the same element
of $\F_\g(U_i\cap U_j\cap\cdots)$ only if $\sigma_{i\cdots} =
\Ad(t_{U_i,V_i}) \sigma_{j\cdots} = (t_{U_i,V_j}) \sigma_{j\cdots}
(t_{U_i,V_j}^{-1})$. We shall only need the \v{C}ech differential to be
defined on $0$- and $1$-cochains:
\begin{align}
\notag
	(\delta\sigma)_{ij}
		&= \sigma_j|_{U_i\cap U_j} - \sigma_i|_{U_i\cap U_j} \\
\notag
		&= \Ad(t_{U_i,U_j}) \sigma_j - \sigma_i \\
		&= t_{U_i,U_j} \sigma_j t_{U_i,U_j}^{-1} - \sigma_i , \\
\notag
	(\delta \tau)_{ijk}
		&= \tau_{jk}|_{U_i\cap U_j\cap U_k} - \tau_{ik}|_{U_i\cap U_j\cap U_k}
			+ \tau_{ij}|_{U_i\cap U_j\cap U_k} \\
\notag
		&= \Ad(t_{U_i,U_j}) \tau_{jk} - \tau_{ik} + \tau_{ij} \\
		&= t_{U_i,U_j} \tau_{jk} t_{U_i,U_j}^{-1} - \tau_{ik} + \tau_{ij} .
\end{align}
The space of closed \v{C}ech $q$-cocycles modulo the \v{C}ech
coboundaries then is isomorphic to the sheaf cohomology group in degree
$q$, which in our case is $H^q(\F_\g)$.

It should now be clear, from Equations~\eqref{eq:tau-cocyc}
and~\eqref{eq:sigma-bdr}, that the infinitesimal deformation of the flat
bundle cocycle defines a \v{C}ech $1$-cocycle $\tau_{ij} =
\tau_{U_i,U_j}$ and an infinitesimal change in trivialization defines a
\v{C}ech coboundary $\tau_{ij} = (\delta \sigma)_{ij}$, with $\sigma_i =
\sigma_{U_i}$.

\subsection{Flat connection on a principal bundle}\label{sec:conn-def}
Another, ultimately equivalent, way to specify a principal bundle with a
flat connection is as follows.

A \emph{principal $G$-connection} on a principal $G$-bundle $P$ is a
$\g$-valued $1$-form $\omega$ on the total space $P$ (an element of
$\Omega^1(P)\otimes \g$) such that (i) $\omega$ is $\Ad$-equivariant
($R_a^*\omega = \Ad(a^{-1})\omega$, where $R_a\colon P \to P$ is the
action of $a\in G$ on $P$ by right multiplication) and (ii)
$\omega(\beta) = \beta$ for any vertical vector $\beta \in TP$. Recall
that vertical vectors are those annihilated by the tangent map of the
projection $P\to M$ and that the vertical subspace of $TP$ at any point
of $P$ may be naturally identified with $\g$, which we have used in the
preceding definition. The defining condition on the form $\omega$ is
clearly linear inhomogeneous. Thus, the space of all principal
$G$-connections forms an affine subspace of $\Omega^1(P)\otimes \g$.
So, the difference $A = \omega' - \omega$ between any two principal
connections belongs to the subspace of $\Omega^1(P)\otimes \g$ that is
$\Ad$-equivariant and horizontal (annihilates vertical vectors). This
subspace is in fact isomorphic, by pullback along the projection $P \to
P/G \cong M$, to the space of sections $\Secs(\Lambda^1 M\otimes \g_P)$
of the associated bundle $\Lambda^1 M\otimes \g_P \to M$. In fact, we
can identify the spaces of sections $\Secs(\Lambda^p M \otimes \g_P)$
with the $\Ad$-equivariant, horizontal subspaces of $\Omega^p(P)\otimes
\g$. The first order differential operator $DA = \d A + [\omega\wedge
A]$ (see below for notation) preserves these subspaces and hence can be
projected down to a first order differential operator $D\colon
\Secs(\Lambda^p M \otimes \g_P) \to \Secs(\Lambda^{p+1} M \otimes
\g_P)$, which we shall refer to as the \emph{twisted differential}
(cf.~Section~\ref{sec:tw-dr}).

The curvature $\Omega$ of a principal $G$-connection $\omega$ is defined
to be the following $\g$-valued $2$-form on $P$:
\begin{equation}
	\Omega = \d \omega + \frac{1}{2}[\omega \wedge \omega] ,
\end{equation}
where the bracketed wedge product is definite to satisfy $[(\lambda
\otimes \alpha) \wedge (\mu \otimes \beta)] = \lambda \wedge \mu \otimes
[\alpha, \beta]$ for any $\lambda, \mu \in \Omega^1(P)$ and
$\alpha,\beta \in \g$. Since $\Omega$ is $\Ad$-equivariant and
horizontal, we can equally write $\Omega\in \Secs(\Lambda^2 M \otimes
\g_P)$. The twisted differential $D$ is not nilpotent, $D^2\ne 0$.
However, its square is $C^\oo(M)$-linear and so is a differential
operator of order $0$. In fact, we can compute it to be
\begin{equation}
	D^2 A = [\Omega \wedge A] ,
\end{equation}
for any $A\in \Secs(\Lambda^p M \otimes \g_P)$. If $\Omega = 0$, then
the connection is said to be \emph{flat}. This is a sufficient condition
for the twisted differential to become nilpotent, $D^2=0$. A necessary
and sufficient condition would simply be that the curvature $\Omega$
takes values in the center of $\g$, upon local trivialization of $\g_P$.

Given any two flat connections $\omega$ and $\omega'$, their difference
can be represented by a section $\omega'-\omega = A\in \Secs(\Lambda^1 M
\otimes \g_P)$ (or rather its pullback to $P$) that necessarily
satisfies the following equation:
\begin{align}
	0
	&= \d\omega' + \frac{1}{2}[\omega'\wedge\omega'] \\
	&= \d(\omega + A) + \frac{1}{2}[(\omega + A)\wedge (\omega + A)] \\
	&= \d A + [\omega \wedge A] + \frac{1}{2} [A\wedge A] \\
	&= D A + \frac{1}{2} [A\wedge A] .
\end{align}
Where the last expression can be interpreted as computed on $M$ rather
than on $P$. Equating this last expression to zero gives a differential
equation on sections $A \in \Secs(\Lambda^1 M \otimes \g_P)$ identifying
those that parametrize the space of flat principal $G$-connections
(relative to $\omega$, which defines the twisted differential $D$).

An automorphism of a principal $G$-bundle $P\to M$ is a bundle map
$f\colon P\to P$ that covers the identity on $M$ and is equivariant with
respect to the right action of $G$~\cite{abcmm,baum}. It is a standard
fact that such maps can be expressed as functions $a_f\colon P\to G$
that are $\Ad$-equivariant (where $\Ad$ is the left action of $G$ on $G$
by conjugation) with respect to the right action of $G$ on $P$. In turn,
the set of such maps is in bijection with the space of sections of the
associated bundle $G_P = P\times_\Ad G$. Since the map $f\colon P\to P$
is an automorphism, the pullback connection $f^* \omega$ is considered
equivalent to the original one. Given its equivariance, the map $f$
corresponds to a section $a_f\in \Secs(G_P)$. Equivalently, given a
section $a\in \Secs(G_P)$, we can define the corresponding automorphism
map $f_a\colon P\to P$. It is not hard to compute that
\begin{equation}
	f_a^* \omega = \Ad(a^{-1}) \omega + a^{-1} \d a .
\end{equation}
Naturally, the pullback $f^*\omega$ of a flat connection $\omega$ is
also a flat connection. 

Next, we describe infinitesimal deformations of a flat principle
$G$-connection. Given a flat connection described by a $\g$-valued
$1$-form $\omega$ on $P$, any smooth $1$-parameter family of principal
connection can be written as $\omega + A(s)$, with $A(0) = 0$, where
$A(s)$ can, for fixed $s$, be considered as a section of the associated
bundle $\Lambda^1 M \otimes \g_P \to M$. This family consists of flat
connections if and only if the equation $D A(s) + \frac{1}{2}[A(s)\wedge
A(s)] = 0$ is satisfied, where $D$ is the twisted differential defined
by $\omega$.  If $\dot{A}(0) = A$, then the preceding identity imposes
the condition $DA = 0$ on this infinitesimal deformation. Also, if
$a(s)\in \Secs(G_P)$, with $a(0) = \id$ and $\dot{a}(0) = \alpha\in
\Secs(\g_P)$, defines a smooth $1$-parameter family of automorphisms
$f_{a(s)} \colon P\to P$, then the corresponding infinitesimal
deformation of the original flat connection is is equal to $A = \d\alpha
- [\alpha,\omega] = D\alpha$.

It should now be clear that infinitesimal deformations of a given flat
principal $G$-connection, up to infinitesimal automorphisms of the
underlying principal $G$-bundle, are in bijections with the cohomology
vector space $H^1(\Lambda^\bullet\otimes \g_P,D)$ of the twisted de~Rham
complex defined by the original flat connection.

\subsection{Monodromy representation}\label{sec:monodr-def}
A connection, in the sense of Ehresemann, can be defined as a splitting
of the tangent space of $P$ into $T_{x,a}P \cong T_xM \oplus \g$, for
$(x,a) \in P$, with the $\g$ summand canonically identified with the
subspace of vertical vectors, such that the splitting is smooth in $x$
and equivariant in $a$. The $T_xM$ summand is called the
\emph{horizontal subspace} of $T_{x,a}P$. This formulation leads
naturally to the idea of parallel transport. Given a point $(x,a) \in P$
and a smooth curve $\gamma\colon [0,1] \to M$ such that $\gamma(0) = x$
and $\gamma(1) = 1$, there exists a unique lift $\tilde{\gamma}$ of
$\gamma$ to $P$ such that $\tilde{\gamma}(0) = (x,a)$ and the tangent
vector $\dot{\tilde{\gamma}}$ is always horizontal. With $x$ and $y$
fixed, the endpoint $(y,b) = \tilde{\gamma}(1)$ defines $b$ as the image
of $a$ parallel transported along $\gamma$. Since the splitting of $TP$
is equivariant with respect to the right action of $G$ on $P$, so is
parallel transport. It is easy to see that parallel transport does not
depend on the parametrization of $\gamma$, is well defined also when
$\gamma$ is piecewise smooth, and respects concatenation,
$a_{\gamma\eta} = a_\gamma a_\eta$ for $\gamma(1) = \eta(0)$ and
$\gamma\eta$ being the concatenated curve. In particular, if $\gamma$ is
a closed curve based at $x\in M$ ($\gamma(0) = \gamma(1) = x$), then the
effect of parallel transport is equivalent to the right action on the
fiber $P_x$ by some element $a_\gamma \in G$.

Let $\pi = \pi_1(M,x)$ be the fundamental group of $M$ based at some
point $x$. The connection splitting of $TP$ is called \emph{flat} when
the parallel transport along any closed contractible curve $\gamma$ is
trivial, $a_\gamma = \id$, and thus the group element $a_\gamma$
effecting parallel transport along a closed curve $\gamma$ based at $x$
depends only on its homotopy type $[\gamma]\in \pi$. In other words,
parallel transport defines a homomorphism $\rho \colon \pi \to G$,
$\rho([\gamma]) = a_\gamma$, which we call the \emph{monodromy
representation} of the fundamental group of $M$ in the structure group
of $P\to M$ (cf.~the introduction to Section~\ref{sec:killing}).

Thus, any flat principal $G$-bundle gives rise to a representation
$\rho\colon \pi \to G$. Two isomorphic flat principal bundles give rise
to equivalent monodromy representations, where two representations
$\rho'$ and $\rho$ are equivalent if there exists an element $a\in G$
such that $\rho'([\gamma]) = a\rho([\gamma])a^{-1}$. Conversely, any
homomorphism $\rho\colon \pi \to G$ allows us to construct a flat
principal $G$-bundle with a monodromy representation equivalent to
$\rho$. Namely, consider the universal cover $\tilde{M} \to M$ as a
principal $\pi$-bundle and define the total space of the corresponding
principal $G$-bundle as $P = \tilde{M} \times_\rho G$. A flat connection
can be defined on the trivial principal $G$-bundle $\tilde{M}\times G$
using the construction of Section~\ref{sec:cocyc-def} applied to an
cover by contractible open sets and transition maps defined by $\rho$.
This flat connection then projects down to $P$.

Next, we describe infinitesimal deformations of a fixed monodromy
representation $\rho$. Let $\rho_s\colon \pi \to G$ be a smooth
$1$-parameter family of monodromy representations, with $\rho(s) = \rho$
and $\dot{\rho}_s(a) = \tau(s) \rho(a)$ for some $\tau\colon \pi\to \g$.
The representation property $\rho_s([\gamma][\eta]) = \rho_s([\gamma])
\rho_s([\eta])$ imposes the following constraint on the infinitesimal
deformation:
\begin{equation} \label{eq:grp-cocyc}
	\tau([\gamma])
	+ \rho([\gamma])\tau([\eta]) \rho([\gamma])^{-1}
	- \tau([\gamma][\eta])
	= 0.
\end{equation}
A family of trivial deformations is given by $\rho_s([\gamma]) = a_s
\rho([\gamma]) a_s^{-1}$ for a smooth $1$-parameter family $a_s \in G$,
with $a_0 = \id$ and $\dot{a}_0 = -\sigma \in \g$. The corresponding
infinitesimal deformation of the representation is given by
\begin{equation} \label{eq:grp-cobdr}
	\tau([\gamma]) = -\sigma + \rho([\gamma]) \sigma \rho([\gamma])^{-1} .
\end{equation}

The point of the above calculations is to show that these infinitesimal
deformations can be identified with certain \emph{group cohomology}
classes. To see that, we need to introduce some basic
definitions~\cite[Ch.6]{weibel}, \cite{group-cohom}. Group cohomology is
defined given a group and a representation thereof. We will give the
definitions by directly taking the group to be $\pi$ and the
representation to be the composite adjoint representation of $\pi$ on
$\g$, $\Ad_\rho = \Ad \circ \rho \colon \pi \to \GL(\g)$. The vector
space $C^p(\pi,\Ad_\rho)$ of $p$-cochains consists of functions
$\sigma\colon \pi^p \to \g$, where $\pi^p = \pi \times \cdots \times
\pi$ is the $p$-fold product. The cochain differentials $\delta\colon
C^p(\pi,\Ad_\rho) \to C^{p+1}(\pi,\Ad_\rho)$ are defined by the formula
\begin{multline}
	\delta\sigma([\gamma_1],\ldots,[\gamma_{p+1}])
	= (-1)^{p+1} \sigma([\gamma_1],\ldots,[\gamma_p]) \\
		{} + \Ad_\rho([\gamma_1]) \sigma([\gamma_2],\ldots,[\gamma_{p+1}]) \\
		+ \sum_{q=1}^p (-1)^q \sigma(\ldots,[\gamma_q][\gamma_{q+1}],\ldots) .
\end{multline}
For $0$- and $1$-cochains, we have the following explicit formulas:
\begin{align}
\notag
	\delta\sigma([\gamma])
		&= -\sigma + \Ad_\rho([\gamma])\sigma \\
		&= - \sigma + \rho([\gamma]) \sigma \rho([\gamma])^{-1} , \\
\notag
	\delta\tau([\gamma],[\eta])
		&= \tau([\gamma]) + \Ad_\rho([\gamma]) \tau([\eta]) - \tau([\gamma][\eta]) \\
		&= \tau([\gamma]) + \rho([\gamma]) \tau([\eta]) \rho([\gamma])^{-1}
			- \tau([\gamma][\eta]) .
\end{align}
It is worth noting that the degree-$0$ group cohomology is isomorphic to
the subspace of the representation on which the group acts trivially,
$H^0(\pi,\Ad_\rho) \cong \g^\pi$.

It should now be clear from Equations~\eqref{eq:grp-cocyc}
and~\eqref{eq:grp-cobdr} that infinitesimal deformations of a monodromy
representations $\rho\colon \pi \to G$, up to deformations by
conjugation, are in bijection with the group cohomology
$H^1(\pi,\Ad_\rho)$ of the group $\pi$ with coefficients in the
composite adjoint representation of $\pi$ on $\g$.

\bibliographystyle{utphys-alpha}
\bibliography{paper-causcohom}

\end{document}